\title{Space and Move-optimal Arbitrary Pattern Formation on Infinite Rectangular Grid by Oblivious Robot Swarm} 
\titlerunning{Arbitrary Pattern Formation on Rectangular Grid by Robot Swarm} 
 \author{Avisek Sharma}{Department of Mathematics, Jadavpur University, India }{aviseks.math.rs@jadavpuruniversity.in}{https://orcid.org/0000-0001-8940-392X}{}
 \author{Satakshi Ghosh}{Department of Mathematics, Jadavpur University, India }{satakshighosh.math.rs@jadavpuruniversity.in}{https://orcid.org/0000-0003-1747-4037}{}
 \author{Pritam Goswami}{Department of Mathematics, Jadavpur University, India }{pritamgoswami.math.rs@jadavpuruniversity.in}{https://orcid.org/0000-0001-7008-6135}{}
 \author{Buddhadeb Sau}{Department of Mathematics, Jadavpur University, India }{buddhadeb.sau@jadavpuruniversity.in}{https://orcid.org/0000-0001-7008-6135}{}
 \authorrunning{A Sharma et al.} 
\keywords{Distributed algorithms, Oblivious robots, Optimal algorithms, Swarm robotics, Space optimization, and Rectangular grid} 
\begin{document}

\maketitle

\begin{abstract}
Arbitrary Pattern Formation (APF) is a fundamental coordination problem in swarm robotics. It requires a set of autonomous robots (mobile computing units) to form an arbitrary pattern (given as input) starting from any initial pattern. This problem has been extensively investigated in continuous and discrete scenarios, with this study focusing on the discrete variant. A set of robots is placed on the nodes of an infinite rectangular grid graph embedded in the euclidean plane. The movements of each robot is restricted to one of the four neighboring grid nodes from its current position. The robots are autonomous, anonymous, identical, and homogeneous, and operate Look-Compute-Move cycles. In this work, we adopt the classical $\mathcal{OBLOT}$ robot model, meaning the robots have no persistent memory or explicit communication methods, yet they possess full and unobstructed visibility. This work proposes an algorithm that solves the APF problem in a fully asynchronous scheduler assuming the initial configuration is asymmetric. The considered performance measures of the algorithm are space and number of moves required for the robots. The algorithm is asymptotically move-optimal. Here, we provide a definition of space complexity that takes the visibility issue into consideration. We observe an obvious lower bound $\mathcal{D}$ of the space complexity and show that the proposed algorithm has the space complexity $\mathcal{D}+4$. On comparing with previous related works, we show that this is the first proposed algorithm considering $\mathcal{OBLOT}$ robot model that is asymptotically move-optimal and has the least space complexity which is almost optimal.  
\end{abstract}

\section{Introduction}

Swarm robotics involves a group of simple computing units referred to as \textit{robots} that operate autonomously without having any centralized control. Moreover, the robots are generally anonymous (no unique identifier), homogeneous (all robots execute the same algorithm), and identical (physically indistinguishable). Generally on activation, a robot first takes a snapshot of its surroundings. This phase is called the \textsc{Look} phase. Then based on the snapshot an inbuilt algorithm determines a destination point. This phase is called the \textsc{Compute} phase. Finally, in the \textsc{Move} phase it moves towards the computed destination. These three phases together are called a \textsc{Look-Compute-Move} (LCM) cycle of a robot. 

Through collaborative efforts, these robot swarms can accomplish different tasks such as gathering at a specific point, configuring into predetermined patterns, navigating networks, etc. Presently, the field of robotics research is witnessing significant enthusiasm for swarm robots. The inherent decentralized characteristics of these algorithms provide swarm robots with a notable advantage, as distributed algorithms are both easily scalable and more resilient in the face of errors. Furthermore, swarm robots boast a multitude of real-world applications, including but not limited to tasks like area coverage, patrolling, network maintenance, etc.

In order to accomplish specific tasks, robots require some computational capabilities, which can be determined by various factors such as memory, communication, etc. With respect to memory and communication, the literature identifies two primary robot models. The first one is called the classical $\mathcal{OBLOT}$ model. In this model, the robots are devoid of persistent memory and communication abilities. Another robot model is the $\mathcal{LUMI}$ model where the robots are equipped with a finite number of lights that can take a finite number of different colors. These colors serve as persistent memory (as a robot can see its own color) and communication architecture (as the colors of lights are visible to all other robots). The responsibility for activating robots rests with an entity referred to as the \textit{Scheduler}. Within the existing literature, three primary types of schedulers emerge: \textit{Fully-Synchronous} ($\mathcal{FSYNC}$), \textit{Semi-Synchronous} ($\mathcal{SSYNC}$), and \textit{Asynchronous} ($\mathcal{ASYNC}$).
In the case of fully synchronous and semi-synchronous schedulers, time is partitioned into rounds of uniform length. The duration of the \textsc{Look}, \textsc{Compute}, and \textsc{Move} phases for all activated robots are identical. 
Under a fully-synchronous scheduler, all robots become active at the onset of each round, but in a semi-synchronous setup, not all robots may activate simultaneously in a given round. In an asynchronous scheduler, round divisions are absent. At any given moment, a robot can be either idle or engaged in any of the \textsc{Look}, \textsc{Compute}, or \textsc{Move} phases. The duration of these phases and the spans of robot idleness are finite but unbounded.

The primary focus of this study is to solve the Arbitrary Pattern Formation (APF) problem on an infinite rectangular grid while minimizing spatial utilization. The APF problem involves a group of robots situated within an environment, aiming to create a designated pattern. This pattern is conveyed to each robot as a set of points within a coordinate system as an input. This problem has been extensively studied in the euclidean plane (\cite{BoseDS21,BramasT16,BramasT18,Cicerone19,DieudonnePV10,FlocchiniPSW08,Suzuki96,YAMASHITA10}) and also on a continuous circle \cite{BPAB23}. Bose et al. \cite{BOSE2020} first proposed this problem on a rectangular grid. The rectangular grid is a natural discretization of the plane. To the best of our knowledge, on the discrete domain, this problem has been studied in \cite{BOSE2020,cicerone20,GGSS22,HSVT22,KGGSX22,KGGS2022,SGGS2022}. 
In this paper, the focus is placed on an environment characterized by an infinite rectangular grid. In the upcoming subsection, we delve into the reasons behind the introduction of spatial constraint in the context of this problem.
\subsection{Motivation}

In the majority of previous studies, the implementation of this problem on a grid necessitates a substantial allocation of space (space of a configuration formed by a set of robots is the dimension of the smallest enclosing square of the configuration), even when both the initial and target configurations have minimal spatial requirements. This promptly gives rise to a lot of problems. To begin with, in the scenario where the grid is of bounded dimensions, it is possible that certain patterns cannot be formed, even if robots are initially located within the bounded grid and the target pattern could potentially fit within the grid. This limitation arises due to the existence of intermediate configurations that demand a spatial extent that cannot be accommodated within the confined grid. Moreover, when the spatial demand for an APF algorithm on a grid increases, the count of patterns that can be formed within a bounded grid becomes noticeably fewer compared to the count of patterns formable on the same grid with a lower space requirement. To be more specific, patterns that are ``big enough'' can not be formed if the space requirement is ``big'' on a bounded grid. So, the requirement of large space compromises better utilization of the space.

Moreover, even if complete visibility is entertained for theoretical considerations, this assumption does not hold practical validity within an unbounded environment. In the context of a bounded region, it can be applied with the premise that the environment is finite, and the entire environment falls within the visibility range of each robot. However, introducing the concept of an infinite grid disrupts this assumption. In situations where the grid lacks bounds, it is possible that due to substantial spatial requirements, certain robots might stray beyond the visibility range of others. To the best of our knowledge, there remains an absence of work that addresses the APF challenge within the constraints of limited visibility, an asynchronous scheduler, and the absence of any global coordinate agreement. Thus in this paper, the problem of APF on a grid with minimal spatial requirement has been considered.

\subsection{Related Work} 

In the discrete setting, the problem is first studied in \cite{BOSE2020}. Here, the authors solved the problem deterministically on an infinite rectangular grid with $\mathcal{OBLOT}$ robots in an asynchronous scheduler. Later in \cite{cicerone20}, the authors studied the problem on a regular tessellation graph. In \cite{BOSE2020}, authors count the total required moves asymptotically and also give an asymptotic lower bound for the move complexity, i.e., total number of moves required to solve the problem. In \cite{cicerone20}, authors did not count the total number of moves required for their proposed algorithm. In \cite{GGSS22}, the authors provided two deterministic algorithms for solving the problem in an asynchronous scheduler. The first algorithm of \cite{GGSS22} solves the APF problem for the $\mathcal{OBLOT}$ model. The move complexity of this algorithm matches the asymptotic lower bound given in \cite{BOSE2020}. Thus, this algorithm is asymptotically move-optimal. The second algorithm of \cite{GGSS22} solves the problem for the $\mathcal{LUMI}$ model, and this algorithm is asymptotically move-optimal. Further authors showed that the algorithm is time-optimal, i.e., the number of epochs (a time interval in which each robot activates at least once) to complete the algorithm is asymptotically optimal. In \cite{KGGSX22}, the authors provided a deterministic algorithm for solving the problem with opaque (non-transparent) point robots in the $\mathcal{LUMI}$ model with an asynchronous scheduler assuming one-axis agreement. In \cite{HSVT22}, the authors proposed two randomized algorithms for solving the APF problem in an asynchronous scheduler. The second algorithm works for the $\mathcal{OBLOT}$ model. This algorithm is asymptotically move-optimal and time-optimal. The randomization in this algorithm is only used to break any present symmetry in the initial configuration. If the initial configuration is asymmetric then the algorithm is deterministic. The first algorithm works for opaque point robots with the $\mathcal{LUMI}$ model. This algorithm is also asymptotically move-optimal and time-optimal. In \cite{KGGS2022}, the authors solve the problem with opaque fat robots (robots having nontrivial dimension) with the $\mathcal{LUMI}$ model in an asynchronous scheduler assuming one-axis agreement. In \cite{SGGS2022}, the authors provide an asymptotically move-optimal algorithm solving this problem with robots in the $\mathcal{LUMI}$ model. The work also considered a special requirement and showed that the algorithm is space-optimal. In the next section, we formally state the space complexity of an algorithm and discuss the space complexity of the mentioned works.

\subsection{Space Complexity of APF Algorithms in Rectangular Grid} 
In \cite{SGGS2022}, the authors considered the total space required to execute an algorithm. In Definition~\ref{def1}, we define the space complexity of an algorithm executed by a set of robots on a rectangular grid. Before that let's define the dimension of a rectangle, vertices of which are on some grid nodes, as $m\times n$ if the rectangle has $m$ horizontal grid lines and $n$ vertical grid lines.

\begin{definition}\label{def1}
    The space complexity of an algorithm executed by a set of robots on a rectangular grid is the minimum dimension of the squares (whose sides are parallel with the grid lines) such that no robot steps out of the square throughout the execution of the algorithm.
\end{definition}

Let the smallest enclosing rectangle (SER), the sides of which are parallel to grid lines, of the initial configuration and pattern configuration formed by the robots, respectively, have dimensions $m\times n \ (m\ge n)$ and $m'\times n' \ (m'\ge n')$. Let $\mathcal{D}=\max\{m,n,m',n'\}$. Then the minimum space complexity for an algorithm to solve the APF problem is $\mathcal{D}$. Definition~\ref{def1} assigns a real number to the space complexity that makes it easy to compare different APF algorithms. But consider an APF algorithm that takes a space enclosed by an axis aligned rectangle of dimension $p\times q$. if $M=\max\{m,m'\}$ and $N=\max\{n,n'\}$, then the APF algorithm is better (as far as space is concerned) if $p$ is closer to $M$ and $q$ is closer to $N$.

\paragraph*{Space Complexity of the Previous APF Algorithms}

(\textit{$\mathcal{OBLOT}$ model APF algorithms}) The algorithm proposed in \cite{BOSE2020} has space-complexity at least $2\mathcal{D}$ in the worst case as one of the leaders, named tail moves far away from the rest of the configuration. The first algorithm proposed in \cite{GGSS22} is for the $\mathcal{OBLOT}$ model. It requires the robots to form a compact line. The space complexity of these algorithms is $\mathcal{D}^2$ in the worst case. The second randomized algorithm in \cite{HSVT22} is for the $\mathcal{OBLOT}$ model. In this algorithm, the leader robot moves upwards far away from the rest of the configuration. Thus, it has a space complexity of at least $30\mathcal{D}$ in the worst case.

(\textit{$\mathcal{LUMI}$ model APF algorithms}) The second algorithm proposed in \cite{GGSS22} is for the $\mathcal{LUMI}$ model. This algorithm requires a step-looking configuration where each robot occupies a unique vertical line. Therefore, the space complexity of the algorithm can be $\mathcal{D}^2$ in the worst case. This algorithm needs each robot to have a light with three distinct colors. The first randomized algorithm in \cite{HSVT22} for $\mathcal{LUMI}$ model has space-complexity at least $\mathcal{D}+2$. The authors also did not count the number of lights and colors required for the robots. With a closer look, we observe that this algorithm uses at least 31 distinct colors. Further, deterministic APF algorithms proposed in \cite{KGGSX22,KGGS2022} solved it for obstructed visibility. These works also need the robots to form a compact line, hence the space complexity of these algorithms is $\mathcal{D}^2$ in the worst case. The proposed algorithm in \cite{SGGS2022} has space-complexity $\mathcal{D}+1$ and it requires three distinct colors. 

We say that the first algorithm proposed in \cite{HSVT22} and algorithm proposed in \cite{SGGS2022} are \textit{almost} space-optimal, as the space-complexity is of the form $\mathcal{D}+c$, $\mathcal{D}$ is a lower bound of the space-complexity and $c$ is a constant independent of $\mathcal{D}$. If we consider the rectangle to measure the space, then a rectangle of dimension $M\times N$ is minimally required to solve the APF problem. The first algorithm in \cite{HSVT22} and the algorithm in \cite{SGGS2022} takes space enclosed by rectangle of dimension $(M+2)\times (N+2)$ and $(M+1)\times N$ respectively. We can consider these algorithms as so far the best APF algorithms as far as space complexity is concerned. For the rest of the algorithms one dimension of the rectangle that encloses the required space shoots up twice (algorithm in \cite{BOSE2020}) or 30 times ($2^{nd}$ algorithm in \cite{HSVT22}) or squares (algorithm in \cite{GGSS22,KGGSX22,KGGS2022}). For the rectangle version, if an APF algorithm takes a space of enclosing rectangle of dimension $(M+c_1)\times(N+c_2)$, where $c_1$ and $c_2$ are constants independent of $M$ and $N$, then the algorithm is said to be almost optimal. The challenge of this work is to reconfigure the (oblivious and silent) robots in an optimal space avoiding the occurrence of symmetric configurations and collision among robots while keeping the number of movements asymptotically optimal.

\paragraph*{Our Contribution} First a deterministic algorithm for solving APF in an infinite discrete line is presented. Then exploiting that algorithm this manuscript presents a deterministic algorithm for solving APF in an infinite rectangular grid which is almost space-optimal as well as asymptotically move-optimal. Precisely, the space complexity for the algorithm is $\mathcal{D}+4$ and this algorithm takes a space enclosing the rectangle of dimension $(M+4)\times(N+1)$. The move-complexity of the algorithm is $O(k\mathcal{D})$\footnote{In \cite{HSVT22}, the authors provides this tight lower bound}, where $k$ is the number of robots. The robot model is the classical $\mathcal{OBLOT}$ model and the scheduler is fully asynchronous. To the best of our knowledge so far, this is the first deterministic algorithm solving APF problem in the $\mathcal{OBLOT}$ robot model that has the least space-complexity and optimal move-complexity (See Table~\ref{tab:comp} for comparison with the previous works). The architecture of the description of the algorithm and correctness proof are motivated from \cite{BOSE2020}.

\begin{table}[ht!]
    \centering
    \scriptsize
    \caption{\footnotesize Comparison table}
    \label{tab:comp}
    \begin{tabular}{|p{2cm}|p{1.3cm}|p{2.5cm}|p{2cm}|p{1.5cm}|}
    \hline
        Work & Model & Visibility & Deterministic/ Randomised & Space complexity \\
    \hline
        \cite{BOSE2020} & $\mathcal{OBLOT}$ & Unobstructed & Deterministic& $\ge2\mathcal{D}$\\
    \hline
        $1^{st}$ algorithm in \cite{GGSS22}& $\mathcal{OBLOT}$ & Unobstructed & Deterministic& $\mathcal{D}^2$\\
    \hline
        $2^{nd}$ algorithm in \cite{HSVT22}&  $\mathcal{OBLOT}$ & Unobstructed & Randomised\tablefootnote{The randomisation is only used to break any symmetry present in the initial configuration} & $\ge 30\mathcal{D}$\\
    \hline
        $2^{nd}$ algorithm in \cite{GGSS22}& $\mathcal{LUMI}$ & Unobstructed & Deterministic& $\mathcal{D}^2$\\
    \hline
        $1^{st}$ algorithm in \cite{HSVT22}&  $\mathcal{LUMI}$ & Obstructed & Randomised & $\ge \mathcal{D}$+2\\
    \hline
        \cite{KGGSX22} & $\mathcal{LUMI}$ & Obstructed & Deterministic & $ \mathcal{D}^2$\\
    \hline
        \cite{KGGS2022} & $\mathcal{LUMI}$ & Obstructed (fat robot) & Deterministic & $ \mathcal{D}^2$\\
    \hline
        \cite{SGGS2022} & $\mathcal{LUMI}$ & Unobstructed & Deterministic & $ \mathcal{D}+1$\\
    \hline
        Algorithm in this work & $\mathcal{OBLOT}$ & Unobstructed & Deterministic & $ \mathcal{D}+4$\\ 
    \hline
    \end{tabular}    
\end{table}

\section{Model and Problem Statement}\label{model}

\paragraph*{Robot} 
The robots are assumed to be identical, anonymous, autonomous, and homogeneous. Robots are oblivious, i.e., they do not have any persistent memory to remember previous configurations or past actions. Robots do not have any explicit means of communication with other robots. The robots are modeled as points on an infinite rectangular grid graph embedded on a plane. Initially, robots are positioned on distinct grid nodes. A robot chooses the local coordinate system such that the axes are parallel to the grid lines and the origin is its current position. Robots do not agree on a global coordinate system. The robots do not have a global sense of clockwise direction. A robot can only rest on a grid node. Movements of the robots are restricted to the grid lines, and through a movement, a robot can choose to move to one of its four adjacent grid nodes.

\paragraph*{Look-Compute-Move Cycle.} 
A robot has two states: sleep/idle state and active state. On activation, a robot operates in Look-Compute-Move (LCM) cycles, which consist of three phases. In the Look phase, a robot takes a snapshot of its surroundings and gets the position of all the robots. We assume that the robots have full, unobstructed visibility. In the Compute phase, the robots run an inbuilt algorithm that takes the information obtained in the Look phase and obtains a position. The position can be its own or any of its adjacent grid nodes. In the Move phase, the robot either stays still or moves to the adjacent grid node as determined in the Compute phase.

\paragraph*{Scheduler} The robots work asynchronously. There is no common notion of time for robots. Each robot independently gets activated and executes its LCM cycle. The time length of LCM cycles, Compute phases, and Move phases of robots may be different. Even the length of two LCM cycles for one robot may be different. The gap between two consecutive LCM cycles, or the time length of an LCM cycle for a robot, is finite but can be unpredictably long. We consider the activation time and the time taken to complete an LCM cycle to be determined by an adversary. In a fair adversarial scheduler, a robot gets activated infinitely often.

\paragraph*{Grid Terrain and Configurations} Let $\mathcal{G}$ be an infinite rectangular grid graph embedded on $\mathbb{R}^2$. The $\mathcal{G}$ can be formally defined as a geometric graph embedded on a plane as $\mathcal{P}\times \mathcal{P}$, which is the cartesian product of two infinite (from both ends) path graphs $\mathcal{P}$. Suppose a set of $k>2$ robots is placed on $\mathcal{G}$. Let $f$ be a function from the set of vertices of $\mathcal{G}$ to $\mathbb{N}\cup\{0\}$, where $f(v)$ is the number of robots on the vertex $v$ of $\mathcal{G}$. Then the pair $(\mathcal{G},f)$ is said to be a \textit{configuration} of robots on $\mathcal{G}$. For the initial configuration $(\mathcal{G},f)$, we assume $f(v)\le1$ for all $v$.

\paragraph*{Symmetries} Let $(\mathcal{G},f)$ be a configuration. A \textit{symmetry} of $(\mathcal{G},f)$ is an automorphism $\phi$ of the graph $\mathcal{G}$ such that $f(v)=f(\phi(v))$ for each node $v$ of $\mathcal{G}$. A symmetry $\phi$ of $(\mathcal{G},f)$ is called \textit{trivial} if $\phi$ is an identity map. If there is no non-trivial symmetry of $(\mathcal{G},f)$, then the configuration $(\mathcal{G},f)$ is called an \textit{asymmetric} configuration and otherwise a \textit{symmetric} configuration. Note that any automorphism of $\mathcal{G}=\mathcal{P}\times \mathcal{P}$ can be generated by three types of automorphisms, which are translations, rotations, and reflections. Since there are only a finite number of robots, it can be shown that $(\mathcal{G},f)$ cannot have any translation symmetry. Reflections can be defined by an axis of reflection that can be horizontal, vertical, or diagonal. The angle of rotation can be of $90^{\circ}$ or $180^{\circ}$, and the center of rotation can be a grid node, the midpoint of an edge, or the center of a unit square. We assume the initial configuration to be asymmetric. The necessity of this assumption is discussed after the problem statement.

\paragraph*{Problem Statement}
Suppose a swarm of robots is placed in an infinite rectangle grid such that no two robots are on the same grid node and the configuration formed by the robots is asymmetric. The Arbitrary Pattern Formation (APF) problem asks to design a distributed deterministic algorithm following which the robots autonomously can form any arbitrary but specific (target) pattern, which is provided to the robots as an input, without scaling it. The target pattern is given to the robots as a set of vertices in the grid with respect to a cartesian coordinate system. We assume that the number of vertices in the target pattern is the same as the number of robots present in the configuration. The pattern is considered to be formed if a configuration is formed and that is the same with target pattern up to translations, rotations, and reflections. The algorithm should be \textit{collision-free}, i.e., no two robots should occupy the same node at any time, and two robots must not cross each other through the same edge.  

\paragraph*{Admissible Initial Configurations} We assume that in the initial configuration there is no multiplicity point, i.e., no grid node that is occupied by multiple robots. This assumption is necessary because all robots run the same deterministic algorithm, and two robots located at the same point have the same view. Thus, it is deterministically impossible to separate them afterward. Next, suppose the initial configuration has a reflectional symmetry with no robot on the axis of symmetry or a rotational symmetry with no robot on the point of rotation. Then it can be shown that no deterministic algorithm can form an asymmetric target configuration from this initial configuration. However, if the initial configuration has reflectional symmetry with some robots on the axis of symmetry or rotational symmetry with a robot at the point of rotation, then symmetry may be broken by a specific move of such robots. But making such a move may not be very easy as the robots' moves are restricted to their adjacent grid nodes only. In this work, we assume the initial configuration to be asymmetric.

\section{Space-optimal Arbitrary Pattern Formation on a Grid Line}\label{lineAlgo}
In this section, we solve this problem on a discrete straight line. Suppose we have an infinite path graph $\mathcal{P}=\{(i,i+1)\mid i\in\mathbb{Z}\}$ embedded on a straight line. Suppose $k$ robots are placed on $\mathcal{P}$ at distinct nodes. A configuration is defined similarly as done in the previous section by considering $\mathcal{G}=\mathcal{P}$. The target pattern is given as a set of $k$ distinct positive integers.

\paragraph*{Leader Election and Global Coordinate Setup} We assume the initial configuration of robots does not have reflectional symmetry. First, we set up a global coordinate system that can be agreed upon by all the robots. Suppose $\mathcal{C}$ is a configuration having no reflectional symmetry. For a configuration, we define the smallest enclosing line segment (SEL) to be the smallest line segment in length that contains all the robots in the configuration. Let $\mathcal{L}=AB$ be the SEL of the configuration $\mathcal{C}$. Consider two binary strings of length $|AB|$ (the length of a line segment is the number of grid points on the line segment) called $\lambda_{A}$ and $\lambda_{B}$ with respect to the endpoints of $\mathcal{L}$. Let $\lambda_{A}=\{a_i\}_{i=1}^{|AB|}$ such that $a_i=1$ if and only if the node on the $AB$ line segment having distance $i-1$ from $A$ is occupied by a robot. Similarly, we define $\lambda_{B}$. Since $\mathcal{C}$ has no reflectional symmetry, $\lambda_{A}$ and $\lambda_{B}$ are different. Therefore one of them is lexicographically smaller than the other. Suppose $\lambda_{A}$ is lexicographically smaller than $\lambda_{B}$. Then $A$ is considered as the origin and $\overrightarrow{AB}$ is considered as the positive (right) direction. Also, the robot located at $A$ is said to be \textit{head} and the robot located at $B$ is said to be \textit{tail}. We denote $\mathcal{C}\setminus\{\text{tail}\}$ as $\mathcal{C}'$.

 \paragraph*{Target Embedding} Next we embed the pattern in the following way. Considering the integers given in the target pattern on the number line proceed similarly as done above for $\mathcal{C}$. Let $\mathcal{C}_{target}$ be the target configuration and $A'B'$ be the SEL of $\mathcal{C}_{target}$. Consider two binary strings $\lambda_{A'}$ and $\lambda_{B'}$. If both the strings are equal then the target pattern has a reflectional symmetry. In this case, embed the pattern such that all the target positions are on the right side of the origin except the left most one which is on the origin. If the strings are different then we suppose $\lambda_{A'}$ is the lexicographically smaller one. In this case, embed the pattern such that $A=A'$ and all the target positions are on the right side of the origin. After embedding, the farthest target position from the origin is said to be the \textit{tail-target} and denoted as $t_{target}$. We define, $\mathcal{C}'_{target}=\mathcal{C}_{target}\setminus\{t_{target}\}$.

\paragraph*{Proposed APF algorithm a Line} Next, we describe our proposed algorithm \textsc{ApfLine}. If in a snapshot of a robot, another robot is seen on an edge then the robot discards the snapshot and goes to sleep. Therefore, for simplicity, we assume that any snapshot taken by a robot contains a still configuration $\mathcal{C}$. The head never moves in the algorithm. Firstly, if $\mathcal{C}'=\mathcal{C}'_{target}$ then the tail moves to $t_{target}$. Otherwise, if $t_{target}$ is at the right of the tail, then the tail moves right and the other robots remain static. If $\mathcal{C}'\ne\mathcal{C}'_{target}$, and the tail is at the $t_{target}$ or to the right of the $t_{target}$, then inner robots move to make $\mathcal{C}'=\mathcal{C}'_{target}$. Let $r_i$ be the $i^{th}$ robot from the left and $t_i$ be the $i^{th}$ target position from the left. We try to design the algorithm such that $r_i$ moves to $t_i$. The $r_1$ robot is the head and it is already on $t_1$. If $t_i$ is towards the left of $r_i$ and the left adjacent grid node is empty, then an inner robot $r_i$ moves towards the left. If for each inner robot $r_j$ which is not currently on $t_j$, $t_j$ is at the right of the $r_j$, then an inner robot $r_i$ moves right if $t_i$ is at the right of the $r_i$ and the right adjacent grid node is empty (The pseudo-code of the algorithm is given in Algorithm~\ref{algo:line}).

\begin{algorithm}[ht!]
 \footnotesize
\caption{\textsc{\footnotesize ApfLine} (for a generic robot $r$)}\label{algo:line} 
    \eIf{$\mathcal{C}'=\mathcal{C}'_{target}$}
    {
        tail moves towards $t_{target}$\;
    }
    {
        \eIf{$t_{target}$ is at the right of the tail}
        {
            tail moves towards right\;
        }
        {
            \If{$r=r_i$ is an inner robot}
            {
                \uIf{$t_i$ is at the left of $r_i$}
                {
                    \If{left adjacent grid node is empty}
                    {
                        $r$ moves towards left\;
                    }
                }
                \ElseIf{for each inner robot $r_j$ which is not currently on $t_j$, $t_j$ is at the right of the $r_j$}
                {
                    \If{$t_i$ is at the right of $r_i$}
                    {
                        \If{right adjacent grid node is empty}
                        {
                            $r$ moves towards right\;
                        }
                    }
                }
            }
        }
    }
    \end{algorithm}

\begin{theorem}\label{th0}
    From any asymmetric initial configuration, the algorithm \textsc{ApfLine} can form any target pattern on an infinite grid line within finite time under an asynchronous scheduler.
\end{theorem}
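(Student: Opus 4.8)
The plan is to prove correctness by tracking the three structural invariants that the algorithm is designed to preserve, and then argue progress via a potential function. First I would establish that the global coordinate system (origin $A$, positive direction $\overrightarrow{AB}$, and the designations \emph{head}/\emph{tail}) is \emph{well-defined and stable} throughout the execution. The key observations here are: (i) the head never moves, so it stays at the left end; (ii) whenever the tail moves, it moves right (either to $t_{target}$ when $\mathcal{C}'=\mathcal{C}'_{target}$, or one step right when $t_{target}$ is strictly to its right), so the SEL endpoints $A,B$ are never swapped and the string $\lambda_A$ stays lexicographically smaller than $\lambda_B$; (iii) inner-robot moves keep all robots strictly between head and tail (this needs checking: an inner robot only moves left if its left neighbour is empty, and only moves right if its right neighbour is empty and every misplaced inner robot needs to go right — I must verify that in the "move right" branch the tail is at or right of $t_{target}$, hence strictly right of every $t_j$ for inner $j$, so an inner robot moving right cannot collide with or overtake the tail). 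Because $\mathcal{C}$ is asymmetric at the start, and each of these move types preserves asymmetry (a single robot moving in a fixed global frame cannot create a reflection axis that wasn't there, since the head anchors the frame), the coordinate setup persists, and in particular no robot ever needs to recompute a different origin.

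Next I would prove \textbf{collision-freeness}. Since at most the tail moves in the "tail-phase" branches and it always moves into an empty node to its right (no robot is ever right of the tail), those moves are safe. In the "inner-phase" branch, several inner robots may be moving simultaneously under $\mathcal{ASYNC}$, so I would argue: a robot $r_i$ with $t_i$ strictly left of $r_i$ moves left only into an empty cell; a robot moves right only into an empty cell and only when \emph{no} inner robot needs to move left. The subtle asynchrony point is that a robot $r$ computes "left neighbour empty" based on a snapshot, but by the time it moves the neighbour might be occupied — so I would show that a node seen empty to the immediate left of a left-mover stays empty until $r$ arrives, because the only robot that could enter it is the one currently on it (none) or a robot to its left moving right, which is forbidden in this branch, or $r$ itself. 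A symmetric argument handles the all-right case. Also two robots cannot swap across an edge because in each branch all active inner robots move in the \emph{same} direction, and the relative order of robots on the line is preserved.

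Then comes \textbf{progress / termination}. I would define a potential $\Phi(\mathcal{C})$ measuring distance from the target, split into two stages. Stage 1: while the tail is strictly left of $t_{target}$ or $\mathcal{C}'\ne\mathcal{C}'_{target}$ with tail not yet at/right of $t_{target}$, the tail keeps moving right; within finitely many activations (fair scheduler) it reaches a position at or to the right of $t_{target}$. Stage 2: with the tail parked at or right of $t_{target}$, the inner robots execute a sorting-like routine to achieve $r_i$ on $t_i$ for all inner $i$; here the natural potential is $\sum_{i} |pos(r_i) - t_i|$, and I would show every inner move strictly decreases it, and that the algorithm never deadlocks — if some $r_i$ is not on $t_i$, then either some misplaced robot has its target to its left (and the \emph{leftmost} such robot has an empty left neighbour, since everything to its left is already correctly placed or there is a gap, so it can move), or every misplaced robot has its target to the right, in which case the \emph{rightmost} misplaced robot has an empty right neighbour and moves. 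This "leftmost/rightmost unblocked" argument is the crux of showing no stuck configuration. Finally, once $\mathcal{C}'=\mathcal{C}'_{target}$, the tail makes one last move onto $t_{target}$ and the pattern is formed; I must check this final configuration equals $\mathcal{C}_{target}$ up to the allowed isometries (it is exactly equal in the chosen embedding, or, in the symmetric-target case, equal after the prescribed reflection).

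The main obstacle I expect is the asynchronous interleaving in Stage 2: ruling out collisions and deadlocks when many inner robots move concurrently on partial information, and in particular proving the monotonicity "a node seen empty in the direction of motion remains empty until the mover arrives." Handling the boundary interaction between Stage 1 and Stage 2 — e.g. the tail still in transit (seen on an edge, snapshot discarded) while inner robots act, or an inner robot whose target coincides with where the tail will eventually sit — also needs care, and I would lean on the "discard snapshots that see a robot on an edge" convention plus the fact that the tail only ever sits at or right of $t_{target} > t_j$ for all inner $j$ to decouple the two stages cleanly.
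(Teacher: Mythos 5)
Your overall skeleton (stability of the head/tail frame, collision-freedom, then staged termination) matches the paper's decomposition, but the step you dispose of in one parenthesis is exactly where the real work lies, and your justification for it is false. You claim that ``a single robot moving in a fixed global frame cannot create a reflection axis that wasn't there, since the head anchors the frame.'' It can: with robots at positions $0,1,4$ on the line, the middle robot stepping right to $2$ yields $0,2,4$, a palindrome, and since the robots are anonymous and oblivious the configuration would then be genuinely symmetric and the frame lost; with robots at $0,1,3$, the same step to $2$ swaps which endpoint carries the lexicographically distinguished string, silently flipping the coordinate system and the head/tail roles. The paper spends most of its proof precisely here: Proposition~\ref{prop1} shows a \emph{leftward} inner move strictly improves the endpoint string on the head side and strictly worsens it on the tail side, so left moves trivially preserve the order of the two strings (Lemma~\ref{l1}); but for \emph{rightward} inner moves the inequalities go the wrong way, and asymmetry is saved only by the algorithm's guard (right moves occur only when the tail sits at or to the right of $t_{target}$ and every misplaced inner robot must go right) combined with the target-embedding convention $\lambda_{A'}\preceq\lambda_{B'}$, via a comparison with the hypothetical configuration in which all inner targets are already occupied (Lemma~\ref{l2}). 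Your proposal contains no substitute for this argument, so the assertion that ``the coordinate setup persists'' is unsupported at the one place it could actually fail.

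A second, smaller error: you state that whenever the tail moves it moves right, including the final move ``to $t_{target}$ when $\mathcal{C}'=\mathcal{C}'_{target}$.'' If the initial SEL is longer than the target SEL, the tail finishes the inner-robot stage strictly to the right of $t_{target}$ and its final moves are \emph{leftward}, shrinking the SEL; one must then separately verify that such a move keeps the tail the tail and the configuration asymmetric, which the paper does in Lemma~\ref{tail2} by comparing against the embedded target through Proposition~\ref{prop2}. Your collision, deadlock, and potential-function arguments for the inner robots are sound and essentially reproduce Lemma~\ref{l3}, but without repairing the symmetry-preservation step for rightward inner moves and leftward tail moves, the proof does not go through.
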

\begin{proof}
    See the Section~\ref{lineProof} of the Appendix.
\end{proof}

\section{The Proposed \textsc{Apf} Algorithm on a Rectangular Grid}\label{gridAlgo}
\subsection{Agreement of a Global Coordinate System and Target Embedding}
Let $\mathcal{C}$ be an asymmetric configuration. Consider the smallest enclosing rectangle (SER) containing all the robots where the sides of the rectangle are parallel to the grid lines. Let $\mathcal{R}=ABCD$ be the SER of the configuration, a $m\times n$ rectangle with $|AD|=m\ge n=|AB|$.
The length of the sides of $\mathcal{R}$ is considered to be the number of grid points on that side. If all the robots are on a grid line, then $R$ is just a line segment. In this case, $\mathcal{R}$ is considered a $m\times 1$ `rectangle' with $A=B$, $D=C$, and $AB=CD=1$.

For a side, say $AB$, of $\mathcal{R}$ we define a binary string, denoted as $\lambda_{AB}$, as follows.
Let $(A=A_1,A_2,\dots,A_m=D)$ be the sequence of grid points on the $AD$ line segment and $(B=B_1,B_2,\dots,B_m=C)$ be the sequence of grid points on the $BC$ line segment. Scan the line segment $AB$ from $A$ to $B$. Then scan the line segments $A_iB_i$ one by one in the increasing order of $i$. The direction of scanning the line segment $A_iB_i$ is set as follows: Scan it from $B_i$ to $A_i$ if $i$ is even and scan it from $A_i$ to $B_i$ if $i$ is odd. While scanning, for each grid point put $0$ or $1$ according to whether it is empty or occupied, respectively (See $\lambda_{AB}$ in Fig.~\ref{fig:lexi}).

\begin{figure}[h]
    \centering
    \includegraphics[width=.22\linewidth]{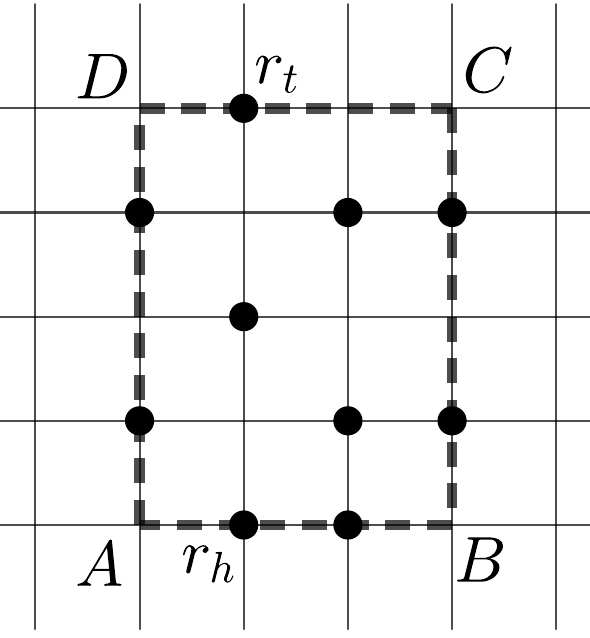}
    \caption{\footnotesize $ABCD$ is the SER of the configuration. $\lambda_{AB}=01101101010011010100$ is the largest lexicographic string, and $r_h$ and $r_t$ are respectively the head and tail robots of the configuration. }
    \label{fig:lexi}
\end{figure}

If $m>n>1$, then for each corner point $A$, $B$, $C$, and $D$, consider the binary strings $\lambda_{AB}$, $\lambda_{BA}$, $\lambda_{CD}$ and $\lambda_{DC}$, respectively. If $m=n>1$, then for each corner point, we have to associate two binary strings with respect to the two sides adjacent to the corner point. Then we have eight binary strings $\lambda_{AB}$, $\lambda_{BA}$, $\lambda_{AD}$, $\lambda_{DA}$, $\lambda_{BC}$, $\lambda_{CB}$, $\lambda_{DC}$ and $\lambda_{CD}$. If any two strings of them are equal then it can be shown that $\mathcal{C}$ has a (reflectional or rotational) symmetry. Since $\mathcal{C}$ is asymmetric, we can find a unique lexicographically largest string (See Fig.~\ref{fig:lexi}). Let $\lambda_{AB}$ be the lexicographically largest string, and then $A$ is considered the \textit{leading corner} of the configuration. The leading corner is taken as the origin, and $\overrightarrow{AB}$ is as the $x$-axis, and $\overrightarrow{AD}$ is as the $y$-axis.

If $\mathcal{R}$ is an $m\times 1$ rectangle, then $\lambda_{AB}$ and $\lambda_{BA}$ are the same string. Then we have two strings to compare. Since the configuration is asymmetric, these two strings must be distinct. Then we shall have a leading corner, say $A=B$. For this case, $A$ is considered as the origin, and $\overrightarrow{AD}$ as the $y$-axis. There will be no agreement of the $x$-axis in this case but since all the robots are on the $y$-axis, so $x$-coordinate of the positions of the robots are 0 at this time.

If $\mathcal{C}$ is asymmetric then a unique string can be elected and hence, all robots can agree on a global coordinate system. By `up' (`down') and `right' (`left'), we shall refer to the positive (`negative') directions of the $x$-axis and $y$-axis of the coordinate system, respectively. The robot responsible for the first 1 in this string is considered the $head$ robot of $\mathcal{C}$ and the robot responsible for the last 1 is considered the $tail$ of $\mathcal{C}$. The robot other than the head and tail is termed the \textit{inner robot}. We define, $\mathcal{C}'=\mathcal{C}\setminus\{\text{tail}\}$ and $\mathcal{C}''=\mathcal{C}\setminus\{\text{head, tail}\}$.

\paragraph*{Target Pattern Embedding} Here we discuss how robots are supposed to embed the target pattern when they agree on a global coordinate system. The target configuration $C_{target}$ is given with respect to some arbitrary coordinate system. Let the $\mathcal{R}'=A'B'C'D'$ be the SER of the target pattern, an $m'\times n'$ rectangle with $|A'D'|\ge |A'B'|>1$. We associate binary strings similarly for $\mathcal{R}'$ as done for $\mathcal{R}$. Let $\lambda_{A'B'}$ be the lexicographically largest (but may not be unique because the $C_{target}$ can be symmetric) among all other strings for $\mathcal{R}'$. The first target position on this string $\lambda_{A'B'}$ is said to be \textit{head-target} and denoted as $h_{target}$ and the last target position is said to be \textit{tail-target} and denoted as $t_{target}$. The rest of the target positions are called \textit{inner target} positions. Then the target pattern is to be embedded such that $A'$ is the origin, $\overrightarrow{A'B'}$ direction is along the positive $x$-axis, and $\overrightarrow{A'D'}$ direction is along the positive $y$-axis. Next, let us consider the case when $|A'B'|=1$, that is when the SER of the target pattern is a line $A'D'$. Let $\lambda_{A'D'}$ be the lexicographically largest string between $\lambda_{A'D'}$ and $\lambda_{D'A'}$. Then the target is embedded in such a way that $A'$ is at the origin and $\overrightarrow{A'D'}$ direction is along the positive $y$-axis. The positive $x$-axis direction can be decided randomly by the robot which first moves out of that line making the SER a rectangle. We define, $\mathcal{C}'_{target}=\mathcal{C}_{target}\setminus\{t_{target}\}$ and $\mathcal{C}''_{target}=\mathcal{C}_{target}\setminus\{h_{target},t_{target}\}$.

\subsection{Outline of the Proposed Algorithm}
    The algorithm is logically divided into seven phases\footnote{The phases are assigned numerical names, yet the sequence of these numerals doesn't precisely correspond to the sequence of their execution during algorithm execution.}. A robot infers which phase it is in from the configuration visible at that time. It does so by checking which conditions in Table~\ref{tab:cond} are fulfilled. We assume that in a visible configuration, no robot is seen on an edge. We maintain such assumption by an additional condition that, if a robot sees a configuration where a robot is on an edge then discard the snapshot and go to sleep. 

    \paragraph*{A Preview of the Algorithm} 
    \begin{itemize}
    
        \item Firstly the tail robot moves upwards to reach a horizontal line such that neither the horizontal line nor other horizontal lines above it contain any robot or target position (Phase~I).
        \item Next the head robot moves left to reach the origin (Phase~II).
        \item Then the tail robot moves a few steps upwards to remove the chance of occurrence of symmetry during the later inner robot movements (phase~I).
        \item Then the tail robot moves rightwards to reach a vertical line such that neither the vertical line nor any vertical line to the right of it contains any robot or target positions (Phase~III).
        \item After that a spanning line is considered (Figure~\ref{fig:line}) and inner robots carefully move along this line (Function \texttt{Rearrange}) to take their respective target position avoiding collision or forming any symmetric configuration (Phase~IV).
        \item After that the tail moves horizontally to reach the vertical line that contains $t_{target}$ (Phase~V).
        \item Then the head robot moves horizontally to reach $h_{target}$ (Phase~VI).
        \item  After that the tail moves vertically to reach $t_{target}$ (Phase~VII).
    \end{itemize}

\begin{table}[ht!]
\caption{\footnotesize Set of conditions on an asymmetric configuration $\mathcal{C}$ having SER $ABCD$ such that the origin is at $A$}
    \label{tab:cond}
    \centering
    \scriptsize
    \begin{tabular}{|c|p{9.5cm}|}
    \hline
        $C_0$ & $\mathcal{C}=\mathcal{C}_{target}$ \\
    \hline
        $C_1$ & $\mathcal{C}'=\mathcal{C}'_{target}$\\
    \hline
        $C_2$ & $\mathcal{C}''=\mathcal{C}''_{target}$\\
    \hline
        $C_3$ & $x$-coordinate of the tail = $x$-coordinate of $t_{target}$\\
    \hline
        $C_4$ & There is neither any robot except the tail nor any target positions on or above $H_t$, where $H_t$ is the horizontal line containing the tail \\
    \hline
        $C_5$ & $y$-coordinate of the tail is odd\\
    \hline
        $C_6$ & SER of $\mathcal{C}$ is not a square\\
    \hline
        $C_7$ & There is neither any robot except the tail nor any target positions on or at the right of $V_t$, where $V_t$ is the vertical line containing the tail\\
    \hline
        $C_8$ & The head is at origin\\
    \hline
        $C_9$ & If the tail and the head are relocated respectively at $C$ and $A$, then the new configuration remains asymmetric\\
    \hline
        $C_{10}$ & $\mathcal{C}'$ has a symmetry with respect to a vertical line\\
    \hline
    \end{tabular}
    
\end{table}

\subsection{Detail Discussion of the Phases}
\paragraph*{Phase I} A robot infers itself in Phase~I if $\neg(C_4\land C_5\land C_6)\land\neg(C_1\land C_3)$ is true. In this phase, the tail moves upward and all other robots remain static. The aim of this phase is to make $C_4\land C_5\land C_6$ true.

\paragraph*{Phase II} A robot infers itself in Phase~II if $(C_4\land C_5\land C_6\land \neg C_8)\land((C_2\land\neg C_3)\lor\neg C_2)$ is true. In this phase, the head moves towards the left, and other robots remain static. This phase aims to make $C_8$ true.

\paragraph*{Phase III} A robot infers itself in Phase~III if $C_4\land C_5\land C_6 \land C_8 \land \neg C_2 \land \neg C_7$ is true. The aim of this phase is to make $C_7$ true. In this phase, there are two cases to consider. The robots will check whether $C_{10}$ is true or not. If $C_{10}$ is false, then robots check whether $C_9$ is true or not. If $C_9$ is not true then the tail moves upward. Otherwise, the tail moves right or upwards in accordance with $m>n+1$ or $m=n+1$ (dimension of the current SER is $m\times n$ with $m\ge n$). If $C_{10}$ is true, then the tail moves left or upwards in accordance with $m>n+1$ or $m=n+1$. Other robots remain static in both cases.

\paragraph*{Phase IV} A robot infers itself in Phase~IV if $C_4\land C_5\land C_6 \land C_7 \land C_8 \land \neg C_2 $ is true. In this phase, the inner robots execute function~\texttt{Rearrange} to make $C_2$ true.

\paragraph*{Function \texttt{Rearrange}} In this function inner robots move to take their respective target positions. Let $\mathcal{C}$ be the current configuration. Let $ABCD$ be the SER of $\mathcal{C}$. According to the assumption exactly two nonadjacent vertices are occupied by robots in rectangle $ABCD$. Specifically, these two robots are the head and the tail of the configuration. Let the head and tail be located at $A$ and $C$ respectively. Consider the path $\mathcal{P}$ starting from $A$ to $C$ as illustrated in bold edges in Fig.~\ref{fig:line}. Inner robots adopt algorithm \textsc{ApfLine} considering this path as the line. Here, we define a robot $r'$ at the \texttt{left} (\texttt{right}) side of $r$ if $r'$ is closer to the head (tail) than $r$ in $\mathcal{P}$. Let us order the target positions. Denote $h_{target}$ as $t_1$, then the next closest target position from the head in $\mathcal{P}$ as $t_2$. Similarly, denote the $i^{th}$ closest target positions in $\mathcal{P}$ from the head as $t_i$. Note that, $t_k$ is the $t_{target}$. Similarly order all the robots, $\{r_i\}_{i=1}^k$, where $r_1$ is the head and $r_i\ (i>1)$ is the $i^{th}$ closest robot from the head on $\mathcal{P}$.

\begin{figure}[h!]
    \centering
    \includegraphics[width=.3\linewidth]{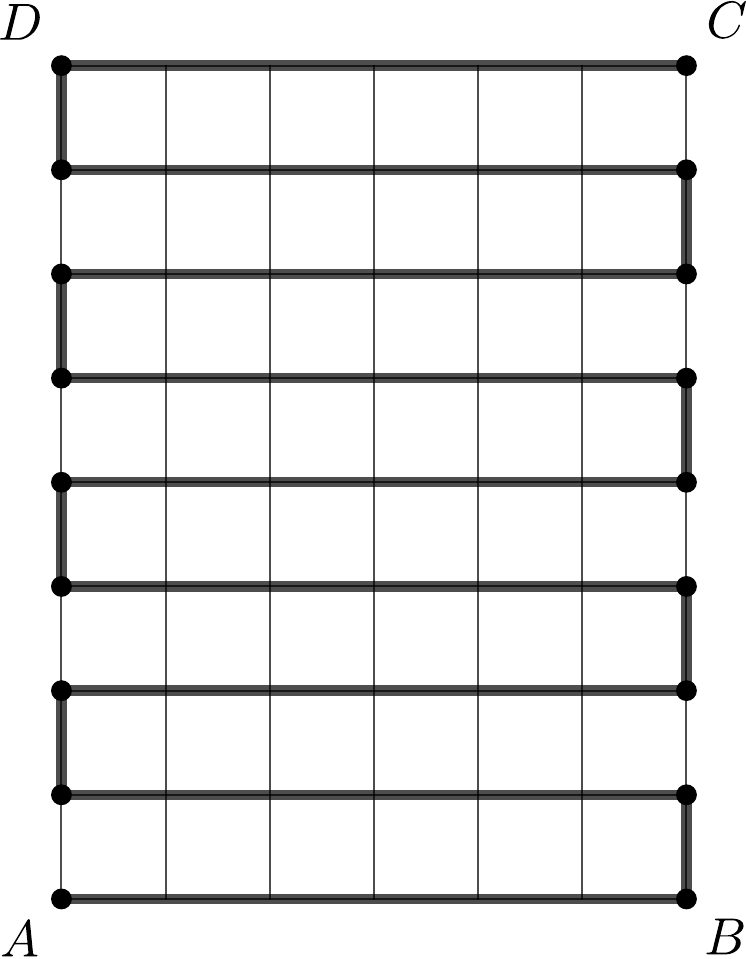}
    \caption{\footnotesize Path joining the nodes $A$ and $C$ mentioned in bold edges}
        \label{fig:line}
\end{figure}

If $t_i$ is at the \texttt{left} of $r_i$ and there are no other robots in the sub-path of $\mathcal{P}$ starting from the position of $r_i$ to $t_i$, then $r_i$ moves to $t_i$. The movement strategy is described as follows. If $r_i$ and $t_i$ are at the same vertical (or, horizontal) line then $r_i$ moves through the vertical (or, horizontal) line joining $r_i$ and $t_i$. Suppose, $r_i$ and $t_i$ are not at the same vertical line or horizontal line. If the downward adjacent vertex of $r_i$ is at the \texttt{right} of $t_i$ then $r_i$ moves downwards. If the downward adjacent vertex is at the \texttt{left} of $t_i$, then $r_i$ moves to its \texttt{left} adjacent node on $\mathcal{P}$.

If there is no robot $r_j$ such that $t_j$ is at the \texttt{left} of $r_j$, then movements of an inner robot towards \texttt{right} start. If $t_i$ is at the \texttt{right} of $r_i$, and there are no other robots in the sub-path of $\mathcal{P}$ starting from the position of $r_i$ to $t_i$, then $r_i$ moves to $t_i$. The movement strategy is described as follows. If $r_i$ and $t_i$ are at the same vertical (or, horizontal) line then $r_i$ moves through the vertical (or, horizontal) line joining $r_i$ and $t_i$. Suppose, $r_i$ and $t_i$ are not at the same vertical line or horizontal line. If the upward adjacent vertex of $r_i$ is at the \texttt{left} of $t_i$ then $r_i$ moves upwards. If the upward adjacent vertex is at the \texttt{right} of $t_i$, then $r_i$ moves to its \texttt{right} adjacent on node $\mathcal{P}$ (pseudo code of the function~\texttt{Rearrange} is given Algorithm~\ref{rearrange}).

\begin{algorithm}[ht!]
\footnotesize
    \caption{\footnotesize Function \texttt{Rearrange} for a robot $r=r_i$}
    \label{rearrange}
    \uIf{$t_i$ is at the \texttt{left} of $r_i$}
    {
        \If{there are no other robot in the sub-path of $\mathcal{P}$ starting from position of $r_i$ to $t_i$}
        {
              \eIf{$r_i$ and $t_i$ are at the same vertical (or, horizontal) line}
              {
                    $r_i$ moves towards $t_i$ through the vertical (or, horizontal) line joining $r_i$ and $t_i$\;
              }
              {
                    \eIf{the downward adjacent vertex of $r_i$ is at the \texttt{right} of $t_i$}
                    {
                        $r_i$ moves downwards\;
                    }
                    {
                        $r_i$ moves to its \texttt{left} adjacent node on $\mathcal{P}$\;
                    }
              }
        }
    }
    \ElseIf{$t_i$ is at the \texttt{right} of $r_i$}
    {
        \If{there is no inner robot $r_j$ such that $t_j$ is at the \texttt{left} of $r_j$}
        {
            \If{there are no other robot in the sub-path of $\mathcal{P}$ starting from position of $r_i$ to $t_i$}
            {
                \eIf{$r_i$ and $t_i$ are at the same vertical (or, horizontal) line}
                {
                    $r_i$ moves towards $t_i$ through the vertical (or, horizontal) line joining $r_i$ and $t_i$\; 
                }
                {
                    \eIf{the upwards adjacent vertex of $r_i$ is at the \texttt{left} of $t_i$}
                    {
                        $r_i$ moves upwards\;
                    }
                    {
                        $r_i$ moves to its \texttt{right} adjacent node on $\mathcal{P}$\;
                    }
                }
            }
        }
    }
\end{algorithm}

\paragraph*{Phase V} A robot infers itself in Phase~V if $C_2 \land C_4\land C_5 \land C_6 \land C_8 \land \neg C_3$ is true. In this phase, the tail moves horizontally to make $C_3$ true. Let $H_t$ be the horizontal line containing the tail and $T'$ be the point on the $H_t$ that has the same $x$-coordinate with $t_{target}$. If $C_{10}$ is not true then the tail moves horizontally towards $T'$. Next let $C_{10}$ be true. Let $ABCD$ be the SER of the current configuration $\mathcal{C}$ and $AB'C'D'$ be the SER of $\mathcal{C}'$. Let $C''$ be the point where line $B'C'$ intersects with $H_t$. Let $E$ be the point on the $H_t$ (See Figure~\ref{fig:phV}). Let the tail robot be at $T$. If both $T$ and $T'$ are at the right side of $C''$ or in on the line segment $DE$, then the tail moves towards $T'$. Otherwise, the tail moves leftward.

\begin{figure}[h!]
    \centering
    \includegraphics[width=4.5cm]{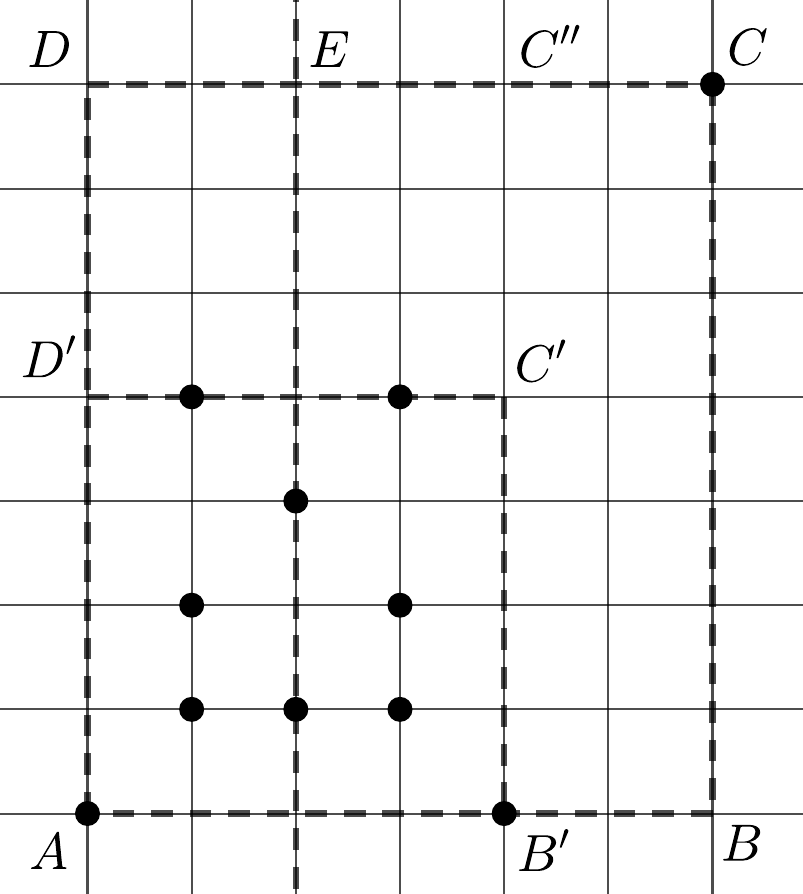}
    \caption{An image related to Phase~V}
    \label{fig:phV}
\end{figure}

\paragraph*{Phase VI} A robot infers itself in Phase~VI if $\neg C_1 \land C_2 \land C_3\land C_4\land C_5\land C_6$ is true. In this phase, the head moves horizontally to reach $h_{target}$. After the completion of this phase, $\neg C_0\land C_1 \land C_3$ becomes true.

\paragraph*{Phase VII} A robot infers itself in Phase~VII if $\neg C_0\land C_1 \land C_3$ is true. In this phase, the tail moves vertically to reach $t_{target}$.

\subsection{Correctness and Performance of the Proposed Algorithm}
In this section, we prove the correctness of the proposed algorithm. First, we show (in Lemma~\ref{phase0} in Appendix) that any initial asymmetric configuration for which $C_0$ is not true falls in one of the seven phases. Then we show that from any asymmetric initial configuration, the algorithm allows the configuration to satisfy $C_0=$ true after passing through several phases.

\begin{theorem}\label{main}
    The proposed algorithm can form any pattern consisting of $k$ points by a set of $k$ oblivious asynchronous robots if the initial configuration formed by the robots is an asymmetric configuration and has no multiplicity point. 
\end{theorem}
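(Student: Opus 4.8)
The plan is to show that the algorithm drives any admissible asymmetric initial configuration $\mathcal C$ through a finite sequence of phases, terminating in a configuration where $C_0$ holds, and that throughout no collision occurs and no symmetric configuration is ever created (so the global coordinate agreement from the $\lambda$-strings is never lost). First I would establish, as a preliminary lemma (the stated Lemma~\ref{phase0}), that the phase predicates in Table~\ref{tab:cond} are exhaustive: any asymmetric $\mathcal C$ with $\neg C_0$ satisfies the guard of exactly one of Phases~I--VII, so that a robot always has a well-defined move; a boolean case analysis on the conjunctions $C_4\land C_5\land C_6$, $C_8$, $C_7$, $C_2$, $C_3$, $C_1$ suffices, checking each leaf of the decision tree matches a phase guard. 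Then I would argue \emph{phase progression}: each phase, once entered, is exited after finitely many moves into a strictly later (in the intended execution order) phase, and never re-entered. For Phases~I, II, III, V, VI, VII this is a single designated robot (tail or head) performing a monotone walk along a line toward a target coordinate; finiteness follows because the relevant distance strictly decreases at each activation of that robot, and because the other robots are static the guard of the current phase stays true until the walk completes. The asynchronous scheduler causes no trouble here since at most one robot is ever eligible to move and it moves to an \emph{adjacent} node, so a pending move seen by another robot is harmless (the ``robot on an edge'' snapshots are discarded by hypothesis).

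The substantive work is Phase~IV and the correctness of \texttt{Rearrange}. Here I would invoke Theorem~\ref{th0}: in Phase~IV the SER is a (non-square) rectangle $ABCD$ with only the head at $A$ and the tail at $C$, so the path $\mathcal P$ of Fig.~\ref{fig:line} is a genuine finite line on which the inner robots, with $h_{target}$ as $t_1$ and $t_{target}=t_k$ as the extreme target, are running exactly \textsc{ApfLine} with a fixed head (the robot at $A$) and a fixed ``tail-slot'' far to the right. The content of Theorem~\ref{th0} is that this terminates with $\mathcal C''=\mathcal C''_{target}$, i.e.\ $C_2$ true. I must additionally check two things the line theorem does not directly give: (i) the zig-zag lifting of the line moves to actual grid moves (the ``downward/upward adjacent vertex is to the left/right of $t_i$'' rules) keeps every inner robot strictly inside the rectangle $ABCD$, so that $A$ and $C$ remain the only occupied corners and $\mathcal P$ is well-defined after every step — this is where the space bound $(M+4)\times(N+1)$ gets pinned down, since Phases~I, III may have enlarged the SER by the constant amounts dictated by $C_4,C_5,C_6,C_7$; and (ii) no inner-robot move creates a symmetric configuration. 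Point (ii) is the heart of the matter and the main obstacle: because the robots are oblivious and silent, a transient symmetry would destroy the coordinate agreement irrecoverably. I would handle it by exploiting the tail: after Phase~I the tail sits on a horizontal line strictly above all robots and all target positions, with \emph{odd} $y$-coordinate ($C_5$), and after Phase~III strictly to the right of everything relevant ($C_7$); one shows that any reflectional axis or rotation center of a putative symmetric intermediate configuration would have to fix or swap the (unique, extreme) tail, and the parity/extremality conditions $C_4\land C_5$ (and $C_6$, non-squareness, ruling out diagonal axes and $90^\circ$ rotations) make every such candidate symmetry impossible — this is precisely why those seemingly odd conditions were engineered into the phases, and conditions $C_9, C_{10}$ are the bookkeeping that guarantees the Phase~III/V placement of the tail achieves this while also leaving $\mathcal C'$ asymmetric after the final tail/head descent.

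Finally I would assemble the pieces: starting from $\mathcal C$, Phase~I makes $C_4\land C_5\land C_6$ true, then Phase~II makes $C_8$ true, a short return to Phase~I re-establishes $C_5$ if the head's arrival disturbed nothing (it does not, but the write-up must check the tail is still highest), Phase~III makes $C_7$ true, Phase~IV makes $C_2$ true via \textsc{ApfLine}, Phase~V makes $C_3$ true, Phase~VI makes $C_1$ true (and $\neg C_0$, $C_3$ persist), and Phase~VII then makes $C_0$ true by walking the tail down to $t_{target}$; at that moment the configuration equals $\mathcal C_{target}$ up to the isometry fixed by the embedding, so the pattern is formed. Collision-freeness is immediate in every single-robot phase and is inherited from \textsc{ApfLine}'s collision-freeness (Theorem~\ref{th0}) in Phase~IV, together with the observation that the lifted zig-zag never sends two inner robots through the same edge. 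The move count $O(k\mathcal D)$ follows by summing: each of the $O(1)$ single-robot phases contributes $O(\mathcal D)$ moves, and \texttt{Rearrange} contributes $O(k\mathcal D)$ since each inner robot travels a path of length $O(\mathcal D)$ along $\mathcal P$ (its length being $O(\mathcal D)$), matching the lower bound of \cite{HSVT22}. I expect the phase-exhaustiveness lemma and the single-robot finiteness arguments to be routine; the real care goes into the invariant ``every reachable configuration is asymmetric with the tail as the unique extreme robot,'' maintained across the Phase~III$\to$IV$\to$V boundary.
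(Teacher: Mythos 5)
Your overall strategy is the same as the paper's: Lemma~\ref{phase0} for phase exhaustiveness, per-phase termination arguments for the single-robot phases, and a reduction of Phase~IV to \textsc{ApfLine} (Theorem~\ref{th0}) on the zig-zag path, with the asymmetry invariant maintained via the lexicographic strings. The assembly step, however, contains a genuine gap. You assert that each phase, once exited, is ``never re-entered'' and that ``the guard of the current phase stays true until the walk completes.'' Both claims are false for this algorithm, and the paper's proof is structured precisely around their failure: it builds a phase-transition \emph{digraph} (Fig.~\ref{fig:flow}) rather than a linear order. Concretely, in Phase~III a leftward move of the tail when $C_{10}$ holds can change the SER and flip the coordinate system (Theorem~\ref{ph30}), which falsifies $C_8$ and sends the execution back to Phase~II; this is the II$\leftrightarrow$III cycle the paper must explicitly show is traversed only finitely often (no livelock). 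Likewise an upward tail move in Phase~III can falsify $C_5$, forcing a return to Phase~I (Theorem~\ref{ph3}), and Phase~II is deliberately followed by a return to Phase~I before Phase~III. You acknowledge one such return parenthetically, but a proof built on ``strictly later phase, never re-entered'' cannot absorb the II$\leftrightarrow$III cycle: the monotone-distance argument you give for single-robot phases does not survive a mid-phase coordinate flip, since the tail's target point $T'$ and the notion of ``left/right'' change with the flip. You need either the digraph argument with an explicit termination measure for each cycle (as in Theorems~\ref{ph3} and~\ref{ph30}, where the number of possible flips is bounded), or a global potential function; neither is supplied.

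A secondary, smaller divergence: for the ``no transient symmetry in Phase~IV'' point you propose a geometric argument from the tail's extremality and the parity condition $C_5$. The paper instead argues directly that no inner robot ever enters the $BC$ or $CD$ segments, so only the two strings $\lambda_{AB}$ and $\lambda_{CD}$ can compete, and these are exactly the two end-strings of the path $\mathcal{P}$, so Lemmas~\ref{l1} and~\ref{l2} from the line case apply verbatim (Theorem~\ref{ph4}). Your route is plausible but would still require ruling out a symmetry that maps $A$ to $C$ (swapping head and tail), which the string comparison handles automatically; if you pursue the geometric version you must treat that case explicitly.
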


    Recall the Definition~\ref{def1} of the space complexity of an algorithm executed by a set of robots on an infinite rectangular grid. 
    In Theorem~\ref{space}, we calculate the space complexity of the proposed algorithm. The move complexity is recorded in the Theorem~\ref{th:move}.

    \begin{theorem}\label{space}
     Let $\mathcal{D}=\max\{m,n,m',n'\}$ where $m\times n$ $(m\ge n)$ is the dimension of the SER of the initial configuration and $m'\times n'$ $(m'\ge n')$ is the dimension of the SER of the target configuration. Then the space complexity of the proposed algorithm is at most $\mathcal{D}+4$. More precisely, if $M=\max\{m,m'\}$ and $N=\max\{n,n'\}$, then the proposed algorithm takes the space enclosed by a rectangle of dimension $(M+4)\times (N+1)$.
    \end{theorem}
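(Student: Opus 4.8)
The plan is to bound, phase by phase, the extent of the region swept by the robots in both the $x$- and $y$-directions, and then take the union. The key observation is that throughout the algorithm exactly one robot (the tail) is ever sent outside the ``working region'' that must hold the head and all inner/target positions, and the head only travels leftward toward the origin. So I would set up coordinates as in the algorithm (leading corner at the origin), and track two quantities separately: the maximum $y$-coordinate ever attained by any robot, and the maximum $x$-coordinate ever attained by any robot (the minimum in each direction will clearly be $0$, since the head sits at the origin from Phase~II onward and no robot is ever pushed below or left of the current SER's bottom/left side).

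First I would handle the vertical ($y$) direction. In Phase~I the tail is lifted to the lowest horizontal line $H_t$ such that $H_t$ and everything above it is free of robots and target positions, and then (to satisfy $C_5$) possibly one more step, and (in Phase~III, when $C_{10}$ or $\neg C_9$ forces it) a bounded number of additional upward steps; I must argue this overshoot is at most a constant beyond $\max\{m,m'\}$. The crucial sub-claim is that the tail never needs to go higher than row $M+O(1)$: since all robots other than the tail and all target positions lie in rows $0$ through $\max\{m,m'\}-1 = M-1$ (they fit in an SER of height at most $M$ anchored at the origin), the line $H_t$ chosen in Phase~I is at most row $M$ or so, and the parity fix plus the Phase~III adjustments add at most a further constant (one for parity, and the Phase~III "move up instead of right when $m=n+1$" is a single extra step used to break the square, so it is $O(1)$). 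I would check the exact constant works out to $M+4$ — this is the bookkeeping that determines the "$+4$''. During \texttt{Rearrange} (Phase~IV) the inner robots move along the spanning path $\mathcal{P}$ of Fig.~\ref{fig:line}, which stays strictly inside the current SER, so no new rows or columns are used there; and Phases~V--VII move the tail and head only within already-used rows/columns (the tail comes back down to $t_{target}$'s row, which is $\le M-1$). Hence the total vertical extent is $M+4$ at worst, and in fact one shows the width used is only $N+1$ because the horizontal direction is touched far less.

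For the horizontal ($x$) direction: from Phase~II on, the head is at $x=0$; inner robots during \texttt{Rearrange} stay within $x\in[0, n-1]\subseteq[0,N-1]$ since the path $\mathcal{P}$ lives inside the SER whose width is $n\le N$; the tail in Phase~III may be pushed right to the first free vertical line $V_t$, which is at most column $\max\{n,n'\}=N$ (all robots and targets have $x$-coordinate in $[0,N-1]$), giving width at most $N+1$; and the leftward option in Phase~III/Phase~V (when $C_{10}$ holds) moves the tail to negative $x$ — here I would need to verify it never goes below $x=0$, or if it does, that it is absorbed into the $+4$ of the other dimension after a possible relabeling of which side is "long". Actually the cleanest route: argue that the width-direction is always the shorter side, so its consumed extent is $N + 1$, while the length-direction absorbs all the $+O(1)$ slack, giving $(M+4)\times(N+1)$; then note $\max\{M+4, N+1\} = M+4 = \mathcal{D}+4$ for the Definition~\ref{def1} version. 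I would also double-check the collision-avoidance moves in \texttt{Rearrange} (the "move down/left" or "move up/right" detours) never leave the SER — they are designed to route around occupied nodes but always toward a target that is itself inside the SER, so they do not.

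The main obstacle I anticipate is pinning down the exact additive constant — showing it is $4$ and not $3$ or $5$. This requires carefully adding up: the one step that may be needed to make $H_t$ strictly above all target rows, the one step for the parity condition $C_5$, the one step for the square-breaking in Phase~III (the $m=n+1$ case where the tail goes up rather than right), and possibly one more from the interplay of $C_9$/$C_{10}$ when the relocated-endpoints configuration would be symmetric. Each of these is individually a short argument, but one must be sure they can occur in the same execution (they compound) and that no further upward motion is ever triggered; equivalently, I must prove an invariant of the form ``whenever the tail is above row $M+4$, the configuration already satisfies $C_4\wedge C_5\wedge C_6$ and $C_9$, so no phase ever moves it up again.'' Establishing that invariant, and the analogous ``the tail's $x$-coordinate stays in $[-\text{something}, N]$ with the negative part bounded'' statement, is where the real work lies; the rest is a routine union bound over the seven phases using Theorem~\ref{main} for termination.
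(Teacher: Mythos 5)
Your proposal follows essentially the same route as the paper's proof: only the tail ever leaves the working region, the inner robots stay inside the current SER during \texttt{Rearrange}, the horizontal excess is a single column from making $C_7$ true, and the vertical excess is a small constant accumulated from clearing the robots and targets ($C_4$), the parity fix ($C_5$), squareness ($C_6$), and the $C_9$ adjustment in Phase~III. The constant-counting you defer is exactly what the paper carries out — in the worst case Phase~I costs $+2$ (clearance/squareness plus one parity fix) and Phase~III costs $+2$ more (the $C_9$ move plus a second parity fix via a brief return to Phase~I), and these do compound to give $\mathcal{D}+4$ and the $(M+4)\times(N+1)$ rectangle.
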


    \begin{theorem}\label{th:move}
    The proposed algorithm requires each robot to make $O(\mathcal{D})$ movements, hence the move-complexity of the proposed algorithm is $O(k\mathcal{D})$.
    \end{theorem}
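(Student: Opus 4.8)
The plan is to bound, separately for the head, the tail, and each inner robot, the total number of moves that robot performs, and to show that every one of these bounds is $O(\mathcal{D})$; summing over the $k$ robots then yields the claimed move complexity $O(k\mathcal{D})$. Two facts will be used throughout. First, the correctness analysis (Theorem~\ref{main} and the phase-progression lemmas supporting it) guarantees that an execution enters each of the seven phases only a constant number of times and that, once the goal of a phase has been achieved, the execution never returns to an earlier phase; hence it suffices to bound, for each robot and each phase, the number of moves that robot makes while that phase is in effect. Second, Theorem~\ref{space} confines the whole execution to a fixed rectangle of dimension $(M+4)\times(N+1)$ with $M,N\le\mathcal{D}$, so at every instant every robot and every embedded target position has both coordinates of absolute value $O(\mathcal{D})$, and the current SER has both side lengths $O(\mathcal{D})$. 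In particular any two occupied or target grid points are within $L_1$-distance $O(\mathcal{D})$ of one another.

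For the head, observe that the head is stationary except in Phase~II, where it moves monotonically leftward toward the origin, and in Phase~VI, where it moves monotonically along a horizontal line toward $h_{target}$. Each such run is monotone and stays inside the confining rectangle, whose horizontal span is $O(\mathcal{D})$; hence the head makes $O(\mathcal{D})$ moves in total. For the tail, which moves only in Phases~I, III, V, and~VII: in Phase~I (entered at most twice) it moves strictly upward, and its final $y$-coordinate is $O(\mathcal{D})$ by Theorem~\ref{space}, so the number of up-moves is $O(\mathcal{D})$ (the parity condition $C_5$ toggles on every up-move, but this does not affect the displacement bound). In Phase~III the tail moves upward and then, according to $C_{10}$ and to whether the current SER satisfies $m>n+1$ or $m=n+1$, rightward or leftward; each such run is monotone and stays inside the confining rectangle, contributing $O(\mathcal{D})$ moves. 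In Phase~V the tail moves horizontally, possibly first leftward and then rightward toward $T'$, that is with only $O(1)$ direction reversals, each run of length $O(\mathcal{D})$. In Phase~VII it moves monotonically downward to $t_{target}$, again $O(\mathcal{D})$ moves. Thus the tail makes $O(\mathcal{D})$ moves in total.

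The inner robots move only in Phase~IV, through function~\texttt{Rearrange}, which runs \textsc{ApfLine} on the spanning path $\mathcal{P}$ of Figure~\ref{fig:line}. Here some care is required, since $\mathcal{P}$ has $\Theta(mn)=\Theta(\mathcal{D}^2)$ nodes, so the crude bound ``at most the length of $\mathcal{P}$'' is too weak. Instead I would show that \texttt{Rearrange} routes each inner robot $r_i$ directly to its target $t_i$ along a route of length $O\!\left(|x_{r_i}-x_{t_i}|+|y_{r_i}-y_{t_i}|\right)$: a case analysis of the four movement rules establishes that whenever $r_i$ is not yet on the vertical or horizontal line through $t_i$ it takes a corner-cutting step (a downward step when heading \texttt{left}, an upward step when heading \texttt{right}) or a unit step along $\mathcal{P}$, and that every such step strictly reduces one of the two coordinate gaps to $t_i$; hence the route is, up to a constant factor, $L_1$-monotone and never longer than the Manhattan distance between $r_i$ and $t_i$. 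Since every robot and every target position stays within the fixed $(M+4)\times(N+1)$ rectangle of Theorem~\ref{space}, that Manhattan distance is $O(\mathcal{D})$, giving $O(\mathcal{D})$ moves per inner robot.

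I expect this last step --- converting the informal statement ``\texttt{Rearrange} routes each $r_i$ monotonically to $t_i$'' into a rigorous $O(\mathcal{D})$ bound --- to be the main obstacle, because the \texttt{left}/\texttt{right} comparisons in \texttt{Rearrange} are made with respect to position along the snake path $\mathcal{P}$ rather than with respect to raw grid coordinates, so one must verify that a path-order comparison against $t_i$ actually forces the desired coordinatewise progress at each step, and that inner robots never obstruct one another into detours (the latter following from the ordering $r_i\mapsto t_i$ already used in the proof of Theorem~\ref{th0}). Once the per-robot $O(\mathcal{D})$ bound is established for the head, the tail, and every inner robot, summing over all $k$ robots gives move-complexity $O(k\mathcal{D})$, which completes the proof.
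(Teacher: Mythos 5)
Your proposal is correct and follows essentially the same route as the paper's proof: a per-robot decomposition into head, tail, and inner robots, with monotone $O(\mathcal{D})$-length runs per phase for the head and tail, and for the inner robots the same key observation that \texttt{Rearrange} sends $r_i$ to $t_i$ along an essentially $L_1$-monotone route of length at most $2\mathcal{D}$ rather than along the full $\Theta(\mathcal{D}^2)$-node snake path. The case analysis you flag as the main obstacle is exactly the one the paper carries out (vertical descent on $r_i$'s column until the row adjacent to $t_i$, then one corner step plus a horizontal run).
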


   \section{Conclusion}\label{concl}
    This work first provides an algorithm that solves the APF problem in an infinite line by a robot swarm. Then adopting the method, it provides another algorithm that solves the APF in an infinite rectangular grid by a robot swarm. The robots are autonomous, anonymous, identical, and homogeneous. The robot model used here is the classical $\mathcal{OBLOT}$ model. The robots work under a fully asynchronous scheduler. The proposed algorithm is almost space-optimal (Theorem~\ref{space}) and asymptotically move-optimal (Theorem~\ref{th:move}).

    A few limitations of this work are the following. Here we assume that the initial configuration is asymmetric. Finding complete characterization of the initial configurations from which APF can be solved deterministically is an interesting future direction.    Next, the version of the APF problem under consideration does not permit multiple points in the target configuration. More precisely, the number of target positions in the target pattern is equal to the number of robots within the system. Solving a more generalized version of the problem that allows target patterns with target positions less than the total number of robots, is a possible future direction. Next, the proposed algorithm is almost space optimal, so finding out the exact lower bound when starting from an asymmetric initial configuration is an interesting direction. Also, this does not consider time-optimality, so considering all the three parameters space, move and time at the same time can be an interesting future work.

\bibliography{lipics-v2021-sample-article}

\begin{thebibliography}{10}

\bibitem{BOSE2020}
Kaustav Bose, Ranendu Adhikary, Manash~Kumar Kundu, and Buddhadeb Sau.
\newblock Arbitrary pattern formation on infinite grid by asynchronous oblivious robots.
\newblock {\em Theoretical Computer Science}, 815:213--227, 2020.
\newblock URL: \url{https://doi.org/10.1016/j.tcs.2020.02.016}.

\bibitem{BoseDS21}
Kaustav Bose, Archak Das, and Buddhadeb Sau.
\newblock Pattern formation by robots with inaccurate movements.
\newblock In Quentin Bramas, Vincent Gramoli, and Alessia Milani, editors, {\em 25th International Conference on Principles of Distributed Systems, {OPODIS} 2021, December 13-15, 2021, Strasbourg, France}, volume 217 of {\em LIPIcs}, pages 10:1--10:20. Schloss Dagstuhl - Leibniz-Zentrum f{\"{u}}r Informatik, 2021.
\newblock URL: \url{https://doi.org/10.4230/LIPIcs.OPODIS.2021.10}.

\bibitem{BramasT16}
Quentin Bramas and S{\'{e}}bastien Tixeuil.
\newblock Probabilistic asynchronous arbitrary pattern formation (short paper).
\newblock In Borzoo Bonakdarpour and Franck Petit, editors, {\em Stabilization, Safety, and Security of Distributed Systems - 18th International Symposium, {SSS} 2016, Lyon, France, November 7-10, 2016, Proceedings}, volume 10083 of {\em Lecture Notes in Computer Science}, pages 88--93, 2016.
\newblock URL: \url{https://doi.org/10.1007/978-3-319-49259-9\_7}.

\bibitem{BramasT18}
Quentin Bramas and S{\'{e}}bastien Tixeuil.
\newblock Arbitrary pattern formation with four robots.
\newblock In Taisuke Izumi and Petr Kuznetsov, editors, {\em Stabilization, Safety, and Security of Distributed Systems - 20th International Symposium, {SSS} 2018, Tokyo, Japan, November 4-7, 2018, Proceedings}, volume 11201 of {\em Lecture Notes in Computer Science}, pages 333--348. Springer, 2018.
\newblock URL: \url{https://doi.org/10.1007/978-3-030-03232-6\_22}.

\bibitem{cicerone20}
Serafino Cicerone, Alessia~Di Fonso, Gabriele~Di Stefano, and Alfredo Navarra.
\newblock Arbitrary pattern formation on infinite regular tessellation graphs.
\newblock {\em Theor. Comput. Sci.}, 942:1--20, 2023.
\newblock \href {https://doi.org/10.1016/j.tcs.2022.11.021} {\path{doi:10.1016/j.tcs.2022.11.021}}.

\bibitem{Cicerone19}
Serafino Cicerone, Gabriele~Di Stefano, and Alfredo Navarra.
\newblock Embedded pattern formation by asynchronous robots without chirality.
\newblock {\em Distributed Comput.}, 32(4):291--315, 2019.
\newblock URL: \url{https://doi.org/10.1007/s00446-018-0333-7}.

\bibitem{DieudonnePV10}
Yoann Dieudonn{\'{e}}, Franck Petit, and Vincent Villain.
\newblock Leader election problem versus pattern formation problem.
\newblock In Nancy~A. Lynch and Alexander~A. Shvartsman, editors, {\em Distributed Computing, 24th International Symposium, {DISC} 2010, Cambridge, MA, USA, September 13-15, 2010. Proceedings}, volume 6343 of {\em Lecture Notes in Computer Science}, pages 267--281. Springer, 2010.
\newblock URL: \url{https://doi.org/10.1007/978-3-642-15763-9\_26}.

\bibitem{FlocchiniPSW08}
Paola Flocchini, Giuseppe Prencipe, Nicola Santoro, and Peter Widmayer.
\newblock Arbitrary pattern formation by asynchronous, anonymous, oblivious robots.
\newblock {\em Theor. Comput. Sci.}, 407(1-3):412--447, 2008.
\newblock URL: \url{https://doi.org/10.1016/j.tcs.2008.07.026}.

\bibitem{GGSS22}
Satakshi Ghosh, Pritam Goswami, Avisek Sharma, and Buddhadeb Sau.
\newblock Move optimal and time optimal arbitrary pattern formations by asynchronous robots on infinite grid.
\newblock {\em International Journal of Parallel, Emergent and Distributed Systems}, 0(0):1–23, 2022.
\newblock URL: \url{https://doi.org/10.1080/17445760.2022.2124411}, \href {https://arxiv.org/abs/https://doi.org/10.1080/17445760.2022.2124411} {\path{arXiv:https://doi.org/10.1080/17445760.2022.2124411}}.

\bibitem{HSVT22}
Rory Hector, Gokarna Sharma, Ramachandran Vaidyanathan, and Jerry~L. Trahan.
\newblock Optimal arbitrary pattern formation on a grid by asynchronous autonomous robots.
\newblock In {\em 2022 IEEE International Parallel and Distributed Processing Symposium (IPDPS)}, pages 1151--1161, 2022.
\newblock \href {https://doi.org/10.1109/IPDPS53621.2022.00115} {\path{doi:10.1109/IPDPS53621.2022.00115}}.

\bibitem{KGGSX22}
Manash~Kumar Kundu, Pritam Goswami, Satakshi Ghosh, and Buddhadeb Sau.
\newblock Arbitrary pattern formation by asynchronous opaque robots on infinite grid, 2022.
\newblock URL: \url{https://arxiv.org/abs/2205.03053}.

\bibitem{KGGS2022}
Manash~Kumar Kundu, Pritam Goswami, Satakshi Ghosh, and Buddhadeb Sau.
\newblock Arbitrary pattern formation by opaque fat robots on infinite grid.
\newblock {\em International Journal of Parallel, Emergent and Distributed Systems}, 37(5):542--570, 2022.
\newblock URL: \url{https://doi.org/10.1080/17445760.2022.2088750}, \href {https://arxiv.org/abs/https://doi.org/10.1080/17445760.2022.2088750} {\path{arXiv:https://doi.org/10.1080/17445760.2022.2088750}}.

\bibitem{BPAB23}
Brati Mondal, Pritam Goswami, Avisek Sharma, and Buddhadeb Sau.
\newblock Arbitrary pattern formation on a continuous circle by oblivious robot swarm.
\newblock In {\em Proceedings of the 25th International Conference on Distributed Computing and Networking, {ICDCN} 2024, Chennai, India, January 4-7, 2024}, pages 94--103. {ACM}, 2024.
\newblock \href {https://doi.org/10.1145/3631461.3631545} {\path{doi:10.1145/3631461.3631545}}.

\bibitem{SGGS2022}
Avisek Sharma, Satakshi Ghosh, Pritam Goswami, and Buddhadeb Sau.
\newblock Space and move-optimal arbitrary pattern formation on a rectangular grid by robot swarms.
\newblock In {\em Proceedings of the 25th International Conference on Distributed Computing and Networking, {ICDCN} 2024, Chennai, India, January 4-7, 2024}, pages 65--73. {ACM}, 2024.
\newblock \href {https://doi.org/10.1145/3631461.3631542} {\path{doi:10.1145/3631461.3631542}}.

\bibitem{Suzuki96}
Ichiro Suzuki and Masafumi Yamashita.
\newblock Distributed anonymous mobile robots: Formation of geometric patterns.
\newblock {\em SIAM Journal on Computing}, 28(4):1347--1363, 1999.
\newblock URL: \url{https://doi.org/10.1137/S009753979628292X}, \href {https://arxiv.org/abs/https://doi.org/10.1137/S009753979628292X} {\path{arXiv:https://doi.org/10.1137/S009753979628292X}}.

\bibitem{YAMASHITA10}
Masafumi Yamashita and Ichiro Suzuki.
\newblock Characterizing geometric patterns formable by oblivious anonymous mobile robots.
\newblock {\em Theoretical Computer Science}, 411(26):2433--2453, 2010.
\newblock URL: \url{https://doi.org/10.1016/j.tcs.2010.01.037}.

\end{thebibliography}

\newpage
\begin{center}
    \Large\textbf{Appendix}
\end{center}

\section{Correctness of the Algorithm \textsc{ApfLine}}\label{lineProof} 

For two finite binary strings $\lambda_1$ and $\lambda_2$ of same length, $\lambda_1\prec (\preceq)\lambda_2$ shall denote that $\lambda_1$ is lexicographically less than (less or equal to) $\lambda_2$.
Next, we make a simple observation in Proposition~\ref{prop1} and then another observation in Proposition~\ref{prop2}. 
\begin{proposition}\label{prop1}
 Let $\mathcal{C}$ be a still configuration and $AB$ the SEL of $\mathcal{C}$. Let $\lambda_{A}$ and $\lambda_{B}$ be the binary strings for the two endpoints. Suppose an inner robot, say $r$, moves to its adjacent empty grid towards $A$. Let $\lambda^{new}_{A}$ and $\lambda^{new}_{B}$ be the new binary strings after the movement, then $\lambda_{A}\prec\lambda^{new}_{A}$ and $\lambda^{new}_{B}\prec\lambda_{B}$.  
\end{proposition}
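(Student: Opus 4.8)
The plan is to track exactly which bits of the two strings change when $r$ makes its move and then locate the first index at which each old string differs from its new counterpart; lexicographic comparison then reads off directly.

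First I would pin down the notation and record why the strings stay comparable. Since $AB$ is the SEL of $\mathcal{C}$, both endpoints $A$ and $B$ are occupied; and since $r$ is an inner robot (it occupies a node strictly between $A$ and $B$), those two endpoints are still occupied after the move. Hence the SEL is still $AB$, so $\lambda_A^{new}$ and $\lambda_B^{new}$ have the same length $L:=|AB|$ as $\lambda_A$ and $\lambda_B$, and lexicographic comparison is between equal-length strings. Index the $L$ grid nodes of $AB$ by $0,1,\dots,L-1$ with $A$ at $0$ and $B$ at $L-1$; then $\lambda_A=a_1\cdots a_L$ with $a_i=1$ iff node $i-1$ is occupied, and $\lambda_B=b_1\cdots b_L$ with $b_j=1$ iff node $L-j$ is occupied.

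Next I would describe the effect of the move and find the first point of difference. Say $r$ sits at node $p$ with $0<p<L-1$ and moves to node $p-1$, which by hypothesis is empty; since node $0$ is occupied (the head at $A$), necessarily $p-1\ge 1$, so the move stays inside $AB$. After the move, node $p-1$ becomes occupied and node $p$ becomes empty, and every other node is unchanged. In $\lambda_A$ these two nodes are the consecutive entries $a_p$ (node $p-1$) and $a_{p+1}$ (node $p$): before the move $(a_p,a_{p+1})=(0,1)$ and after it $(a_p^{new},a_{p+1}^{new})=(1,0)$, with all other entries equal. Thus $\lambda_A$ and $\lambda_A^{new}$ first differ at index $p$, where $\lambda_A$ has $0$ and $\lambda_A^{new}$ has $1$, giving $\lambda_A\prec\lambda_A^{new}$. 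Reading from $B$, node $p$ corresponds to index $L-p$ and node $p-1$ to index $L-p+1$, so the changed entries are again consecutive, $b_{L-p}$ (node $p$) and $b_{L-p+1}$ (node $p-1$): before $(b_{L-p},b_{L-p+1})=(1,0)$ and after $(0,1)$, all else equal. Hence $\lambda_B$ and $\lambda_B^{new}$ first differ at index $L-p$, where $\lambda_B$ has $1$ and $\lambda_B^{new}$ has $0$, giving $\lambda_B^{new}\prec\lambda_B$.

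The argument is essentially just bookkeeping, so I do not expect a genuine obstacle; the two points that need care are (i) recording that the SEL — and therefore the common string length — does not change, which is exactly where the hypothesis that $r$ is an inner robot is used, and (ii) the index reversal between $\lambda_A$ and $\lambda_B$, which is precisely why the same toggled pair of nodes raises one string and lowers the other.
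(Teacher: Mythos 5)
Your proof is correct: the paper states Proposition~\ref{prop1} only as a ``simple observation'' and gives no proof, and your bit-by-bit bookkeeping (the occupied pair toggling from $(0,1)$ to $(1,0)$ read from $A$ and from $(1,0)$ to $(0,1)$ read from $B$, with the first index of disagreement deciding each comparison) is exactly the intended argument. The two points you flag --- that the endpoints stay occupied so the SEL and string length are unchanged, and the index reversal between $\lambda_A$ and $\lambda_B$ --- are precisely what makes the observation go through.
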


\begin{proposition}\label{prop2}
    Let $\mathcal{C}$ be a configuration and $\mathcal{L}=AB$, the SEL of $\mathcal{C}$ such that $\lambda_{B}\preceq \lambda_{A}$. Let the robot situated at $B$ move outside the $\mathcal{L}$ and reach a point $B'$, and the new configuration become $\mathcal{C}'$. If $\lambda_{A}'$ and $\lambda_{B'}'$ are the binary strings for $\mathcal{C}'$, then $\lambda_{B'}'\prec\lambda_{A}'$.

\end{proposition}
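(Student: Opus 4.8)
The plan is to write the two new strings $\lambda_A'$ and $\lambda_{B'}'$ explicitly in terms of the old ones and then locate the first coordinate at which they disagree. Put $L=|AB|$ and let $p\ge 1$ be the number of edges the robot at $B$ travels in leaving $\mathcal{L}$, so that $|AB'|=L+p$; then $B$ becomes empty, the $p-1$ nodes strictly between $B$ and $B'$ are empty, $B'$ is occupied, and every other node keeps its state. Writing the old strings as $\lambda_A=1\,a_2\cdots a_{L-1}\,1$ and $\lambda_B=1\,b_2\cdots b_{L-1}\,1$ (the extreme $1$'s are forced because $A,B$ are occupied SEL-endpoints, and $b_i=a_{L+1-i}$ by the two reading directions), one reads off
\[
\lambda_A'=1\,a_2\cdots a_{L-1}\,\underbrace{0\cdots0}_{p}\,1,\qquad
\lambda_{B'}'=1\,\underbrace{0\cdots0}_{p}\,b_2\cdots b_{L-1}\,1 .
\]
Thus $\lambda_{B'}'$ begins with a $1$ followed by at least $p\ge1$ zeros, while $\lambda_A'$ begins with $1\,a_2\cdots$, so the goal becomes: the first place the two strings separate is a place where $\lambda_A'$ carries a $1$.

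Because there is at least one robot strictly between $A$ and $B$ (there are more than two robots in the configuration), the block $a_2\cdots a_{L-1}$ is not identically $0$; set $j^{*}=\min\{\,j\ge 2:a_j=1\,\}$, so $2\le j^{*}\le L-1$ and $a_2=\dots=a_{j^{*}-1}=0$. The key intermediate claim is that also $b_2=\dots=b_{j^{*}-1}=0$, and I would prove it by induction from the hypothesis $\lambda_B\preceq\lambda_A$: the two strings agree in coordinate $1$, and if they agree through coordinate $j-1$ and $a_j=0$ while $b_j=1$, then $\lambda_A\prec\lambda_B$, contradicting $\lambda_B\preceq\lambda_A$, so $b_j=0$; this applies successively to $j=2,\dots,j^{*}-1$, for which the $a_j$ all vanish by minimality of $j^{*}$.

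It then remains to compare $\lambda_A'$ and $\lambda_{B'}'$ coordinatewise. Coordinate $1$ of each is $1$. For $2\le i\le j^{*}-1$, coordinate $i$ of $\lambda_A'$ is $a_i=0$, while coordinate $i$ of $\lambda_{B'}'$ is either a $0$ of the inserted block $0^{p}$ or one of $b_2,\dots,b_{j^{*}-1}$, all $0$ by the claim; hence the strings coincide on coordinates $1,\dots,j^{*}-1$. At coordinate $j^{*}$, $\lambda_A'$ carries $a_{j^{*}}=1$ (legitimate since $j^{*}\le L-1$), whereas coordinate $j^{*}$ of $\lambda_{B'}'$ is $0$: it lies in the block $0^{p}$ if $j^{*}\le p+1$, and otherwise equals $b_{j^{*}-p}$ with $2\le j^{*}-p\le j^{*}-1$, hence $0$ again by the claim. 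Therefore $j^{*}$ is the first coordinate of disagreement and $\lambda_A'$ dominates there, giving $\lambda_{B'}'\prec\lambda_A'$. The only delicate point is exactly this last verification — pinning down which segment of $\lambda_{B'}'$ the coordinate $j^{*}$ (and each coordinate below it) falls in, the inserted zeros or the shifted copy of $b_2\cdots b_{L-1}$, and checking it is $0$ either way; everything else is immediate once the two strings have been written down, and it is precisely the fact that only the robot at $B$ moved that makes that explicit description valid.
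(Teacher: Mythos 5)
Your proof is correct and follows essentially the same route as the paper's: write the two new strings explicitly, locate the first index $j^*\ge 2$ with $a_{j^*}=1$ (which exists since $k>2$), use $\lambda_B\preceq\lambda_A$ to force $b_2=\dots=b_{j^*-1}=0$, and observe that $j^*$ is then the first coordinate of disagreement with $\lambda_A'$ carrying the $1$. The only differences are cosmetic: you allow a general $p$-hop move where the paper writes the strings for a single hop, and you merge into one induction the two cases ($\lambda_B=\lambda_A$ and $\lambda_B\prec\lambda_A$) that the paper treats in separate paragraphs.
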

    
\begin{proof}
    Let $\lambda_{A}=a_1a_2\dots a_{n-1}a_n$ and $\lambda_{B}=b_1b_2\dots b_{n-1}b_n$. Then according to the configuration $\mathcal{C}'$,  $\lambda'_{A}=a_1a_2\dots a_{n-1}0a_n$ and $\lambda'_{B'}=b_10b_2\dots b_{n-1}b_n$.
    Since the total number of robots present in the system is greater than 2, so there exists the smallest $i$ such that $2\le i\le n-1$ and $a_i=1$. If $i=2$, that is, $a_2=1$, then $\lambda_{B'}'\prec\lambda_{A}'$ because the second entry of $\lambda_{B'}'$ is zero. If $i>2$, then $a_{i-1}=b_{i-1}=0$ is the $i^{th}$ entry of $\lambda_{B'}'$ whereas $i^{th}$ entry of $\lambda_{A}'$ is $a_i=1$. The first $i-1$  corresponding entries of $\lambda_{B'}'$ and $\lambda_{A}'$ are equal. Thus, $\lambda_{B'}'\prec\lambda_{A}'$.
    
    Next let $\lambda_{B}\prec\lambda_{A}$. For this case, let $j^{th}$ entry is the first entry from left where $\lambda_A$ and $\lambda_B$ differ. Then $a_j=1$ but $b_j=0 $, $a_i=b_i$ for all $i<j$. If $a_2=1$, then $\lambda_{B'}'\prec\lambda_{A}'$. Otherwise, if $a_2=0$ then from $\lambda_{B}\prec\lambda_{A}$, we have $b_2$ must be zero. Again if $a_3=1$, $\lambda_{B'}'\prec\lambda_{A}'$. Otherwise, if $a_3=0$ then $b_3$ must be zero from $\lambda_{B}\prec\lambda_{A}$. Similarly proceeding, we get $a_2=b_2=\dots a_{j-1}=b_{j-1}=0$. Then $j^{th}$ entry of $\lambda_{A}'$ is 1 but $j^{th}$ entry of $\lambda_{B'}'$ is $b_{j-1}=0$. Thus, $\lambda_{B'}'\prec\lambda_{A}'$.
\end{proof}

Next, in the Algorithm~\ref{algo:line}, there are so-called four types of movements. These four types are respectively in lines 2, 5, 10, and 14 of the Algorithm~\ref{algo:line}. The first two types are the movements by the tail and the other two types are movements by inner robots. In Lemma~\ref{l1} and Lemma~\ref{l2}, it is proved that through the movements of inner robots, the configuration remains asymmetric and the coordinate system does not change until the target pattern is formed. In Lemma~\ref{l3}, it is shown that after finite movements of inner robots, all the inner robots occupy their respective target positions. In Lemma~\ref{tail1} and Lemma~\ref{tail2}, it is shown that the movements by tails do not bring any symmetry in the configuration and the coordinate system remains unchanged until the target pattern is formed. Also, it is shown that, after a finite time the prescribed goal is achieved via the movements of the tail. Finally, in Theorem~\ref{th0} the correctness of the Algorithm~\ref{algo:line} is proved. 

\begin{lemma}\label{l1}
  If an inner robot moves left then the configuration remains asymmetric and the coordinate system does not change.  
\end{lemma}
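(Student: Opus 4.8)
The plan is to run the whole argument at the level of the two boundary strings of the SEL, with Proposition~\ref{prop1} as the engine. Fix notation: let $\mathcal{C}$ be the configuration just before the move, let $AB$ be its SEL, and let $\lambda_A,\lambda_B$ be the associated strings; by the coordinate convention we may take $\lambda_A\prec\lambda_B$, so that $A$ is the origin, the head sits at $A$, the tail at $B$, and ``left'' is ``towards $A$''. The mover is an inner robot $r=r_i$ that steps into an empty node one unit closer to $A$. Since the head (at $A$) and the tail (at $B$) stay put and $r$ remains strictly between them (its old left-neighbour was empty, so it does not reach $A$), all robots stay inside $[A,B]$ and both endpoints stay occupied; hence the SEL of the new configuration $\mathcal{C}^{new}$ is again $AB$, and $\lambda^{new}_A,\lambda^{new}_B$ are the strings of the \emph{same} two endpoints. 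So it is legitimate to compare the new strings with the old ones and to re-run the origin-election on $\mathcal{C}^{new}$ with respect to $AB$.

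By Proposition~\ref{prop1}, $\lambda_A\prec\lambda^{new}_A$ and $\lambda^{new}_B\prec\lambda_B$. Everything then reduces to the single claim $\lambda^{new}_A\prec\lambda^{new}_B$: granting it, $\lambda^{new}_A\ne\lambda^{new}_B$ shows $\mathcal{C}^{new}$ has no reflectional symmetry (it is asymmetric), and $A$ is still the endpoint carrying the lexicographically smaller string, so the origin, the positive direction, and hence the head (first $1$) and the tail (last $1$) are unchanged, i.e.\ the coordinate system is unchanged. I would prove the claim by a position-by-position comparison. Let $j\ge 2$ be the first index at which $\lambda_A$ and $\lambda_B$ differ (both start with the $1$'s of head and tail, so $j\ge 2$, and there $\lambda_A$ has $0$ and $\lambda_B$ has $1$). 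The move swaps $r$'s occupied cell with its empty left-neighbour; in $\lambda_A$ (read from $A$) this rewrites a consecutive pair $01\mapsto 10$, and in $\lambda_B$ (read from $B$) the mirror-image pair $10\mapsto 01$, at two positions that are mirror images of one another. In the easy case these two pairs lie strictly to the right of position $j$: then $\lambda^{new}_A$ and $\lambda^{new}_B$ still agree before $j$ and first disagree at $j$ with a $0$ on the $A$-side, so $\lambda^{new}_A\prec\lambda^{new}_B$.

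The delicate case — and I expect it to be the technical heart of the lemma — is the one where the rewrite touches position $j$ (or something before it): this is exactly when $r$ lands on, or vacates, the ``deciding'' cell that currently makes $\lambda_A$ the smaller string, and a thoughtless step there could put a $1$ at position $j$ of $\lambda^{new}_A$ while position $j$ of $\lambda^{new}_B$ stays $0$, flipping the leading endpoint. To rule this out one must exploit that the move is a genuine \textsc{ApfLine} step rather than an arbitrary leftward step: that $r=r_i$ moves only because $t_i$ lies to the \texttt{left} of $r_i$ and the left-neighbour of $r_i$ is empty, very likely in combination with an invariant kept throughout the left-movement phase of the algorithm (for instance, that no inner robot is ever strictly to the left of its own target, and that the robots already on their targets form a left-prefix of the ordering along $AB$). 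With such an invariant one argues that in this case the first disagreement of $\lambda^{new}_A$ and $\lambda^{new}_B$ merely shifts to the right of $j$ without changing sign, so that again $\lambda^{new}_A\prec\lambda^{new}_B$, completing the proof. Pinning down precisely which invariant is required, and checking it is preserved by left-moves, is the step I would budget the most time for.
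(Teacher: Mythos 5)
There is a genuine gap, and it is located exactly where you suspected: the claim $\lambda^{new}_A\prec\lambda^{new}_B$ is never proved, and under the orientation you chose it cannot be proved, because it is false for an arbitrary leftward step of an inner robot. Concretely, take five nodes $1,\dots,5$ with robots at $1,4,5$, so $\lambda_A=10011\prec 11001=\lambda_B$; with your convention the head is at $A=1$ and ``left'' points towards $A$. The inner robot at $4$ has an empty left neighbour; after it steps to $3$ both endpoint strings become $10101$, i.e.\ the configuration becomes symmetric. Since Lemma~\ref{l1} (unlike Lemma~\ref{l2}) is stated for \emph{any} left move, not only for moves sanctioned by some phase invariant of \textsc{ApfLine}, no ``invariant preserved by left-moves'' can rescue the position-by-position argument you sketch in your last paragraph; the statement you reduced everything to is simply not available.

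The source of the trouble is the direction of the lexicographic comparison relative to the head. The paper's own proof places the head at the endpoint whose string is lexicographically \emph{larger}: with head at $A$, tail at $B$ and $\lambda_B\prec\lambda_A$, a leftward (towards-$A$) step of an inner robot gives, by Proposition~\ref{prop1}, $\lambda_A\prec\lambda^{new}_A$ and $\lambda^{new}_B\prec\lambda_B$, and the three inequalities chain immediately to $\lambda^{new}_B\prec\lambda^{new}_A$: the larger string only grows and the smaller one only shrinks, so the order can never flip, the configuration stays asymmetric, and the election of $A$ is unchanged. No case analysis and no algorithmic invariant are needed. (You were likely misled by the sentence in Section~\ref{lineAlgo} declaring the origin to be the endpoint with the \emph{smaller} string; that sentence is inconsistent with every proof in the appendix --- Lemma~\ref{l1}, Lemma~\ref{tail1}, and the use of Proposition~\ref{prop2} all take the tail to sit at the smaller-string endpoint --- and with the two-dimensional construction, where the head is explicitly the first $1$ of the lexicographically largest string.) Your preliminary observations (the SEL and its endpoints are unchanged, so comparing the new strings at the same endpoints is legitimate) are correct and are tacitly assumed in the paper, but the core of your proof does not close.
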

\begin{proof}
   Let at time $t$, the head and tail robot is at $A$ and $B$, respectively. Then $\lambda_{B}\prec\lambda_{A}$. Suppose, an inner robot moves towards the left and $\lambda^{new}_{A}$ and $\lambda^{new}_{B}$ be the updated respective binary strings. Then from Proposition~\ref{prop1}, $\lambda_{A}\prec\lambda^{new}_{A}$ and $\lambda^{new}_{B}\prec\lambda_{B}$. Therefore, combining the three inequalities we get $\lambda^{new}_{B}\prec\lambda^{new}_{A}$. Thus, the new configuration is asymmetric, and the coordinate system does not change. 
\end{proof}

\begin{lemma}\label{l2}
    If an inner robot moves rightwards according to the Algorithm~\ref{algo:line}, then the configuration remains asymmetric, and the coordinate system does not change unless the target pattern has been formed. 
\end{lemma}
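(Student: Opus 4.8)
The plan is to mirror the argument of Lemma~\ref{l1}, but with an extra case analysis because a rightward move by an inner robot can, a priori, either preserve or destroy the relation $\lambda_B \prec \lambda_A$, and in the boundary case it could even change which endpoint is the origin. First I would recall the guard of line~14 of Algorithm~\ref{algo:line}: an inner robot $r_i$ moves right only when, for every inner robot $r_j$ not yet on $t_j$, the target $t_j$ lies strictly to the right of $r_j$. In particular this means no inner robot has "overshot" its target, so all inner target positions still lie to the right of the current inner-robot positions; combined with the fact that the head is fixed at the origin, this is exactly the situation in which a rightward move cannot yet have completed the pattern unless this very move places the last inner robot onto its target. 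So the "unless the target pattern has been formed" clause in the statement accounts precisely for the move that makes $\mathcal{C}' = \mathcal{C}'_{target}$.

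Next I would split on whether the moving robot $r_i$ stays strictly inside the SEL $AB$ or moves onto the node $B$ (it cannot move past $B$, since $B$ is occupied by the tail and the guard of line~13 forbids moving into an occupied node; and by definition of inner robot $r_i \neq$ tail). \emph{Case 1: $r_i$ remains an inner robot after the move.} Then its rightward step is a leftward step toward the endpoint $B$, so Proposition~\ref{prop1} (with the roles of $A$ and $B$ swapped) gives $\lambda_B \prec \lambda_B^{new}$ and $\lambda_A^{new} \prec \lambda_A$. Combined with the hypothesis $\lambda_B \prec \lambda_A$ at time $t$, these three inequalities do \emph{not} immediately chain, so here I would argue more carefully: since $\lambda_B^{new}$ and $\lambda_A^{new}$ differ (the configuration after the move still has more than two robots and the move strictly changed one string relative to the other — or one invokes the asymmetry that is being maintained inductively), it suffices to rule out $\lambda_A^{new} \preceq \lambda_B^{new}$. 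I would do this by locating, as in the proof of Proposition~\ref{prop2}, the first position from the left at which $\lambda_A$ and $\lambda_B$ differ (a $1$ in $\lambda_A$, a $0$ in $\lambda_B$, since $\lambda_B \prec \lambda_A$), and checking that a rightward move of an inner robot — which only shifts a $1$ one position toward $B$ — cannot move this distinguishing index, nor flip its values, because the head sits at $A$ and remains the leftmost occupied node. Hence the leading corner stays at $A$ and the coordinate system is unchanged; asymmetry follows since $\lambda_B^{new} \prec \lambda_A^{new}$ strictly.

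\emph{Case 2: $r_i$ moves onto $B$.} This happens only if $r_i$ was the robot immediately left of the tail and $t_i = t_{target}$-adjacent-node, i.e. this is the last inner rearrangement step; I would show that in this case $\mathcal{C}' = \mathcal{C}'_{target}$ (all inner robots are now on their targets, because the guard forced every other inner robot already onto its target), so the conclusion holds vacuously under the "unless" clause. The main obstacle I anticipate is Case~1: unlike Lemma~\ref{l1}, the three string inequalities obtained from Proposition~\ref{prop1} point the "wrong way" to chain directly, so the heart of the proof is the combinatorial argument that the leftmost distinguishing index between $\lambda_A$ and $\lambda_B$ is pinned down by the fixed head and is untouched by an inner rightward move — this is where I would spend the most care, possibly by induction on the number of inner moves already performed, using that asymmetry and the origin were maintained at every earlier step.
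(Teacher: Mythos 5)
The difficulty you correctly isolate — that the two inequalities supplied by Proposition~\ref{prop1} for a rightward move point the wrong way to chain with $\lambda_B\prec\lambda_A$ — is exactly the crux, but your proposed resolution of it in Case~1 does not work. The claim that the leftmost index at which $\lambda_A$ and $\lambda_B$ differ is ``pinned down by the fixed head'' and cannot move or flip is false: the head only fixes the first bit of each string. For instance, on a $10$-node segment with robots at nodes $1,5,10$ the strings are $1000100001$ (from node $1$) and $1000010001$ (from node $10$), so node $1$ carries the larger string; after the inner robot steps right to node $6$ the two strings swap and the comparison reverses at index $5$. More tellingly, your argument never uses the target-embedding convention, and the lemma is simply \emph{false} without it: if the inner target were placed (contrary to the embedding rule) at node $8$ while the tail already sits on $t_{target}$ at node $10$, every guard of line~14 is satisfied, yet the rightward walk of the inner robot from node $2$ towards node $8$ flips the head/tail designation as it crosses the midpoint, long before the pattern is formed. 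So no purely local invariant of the current pair of strings, nor any induction on earlier moves, can establish the lemma; the global position of the embedded target must enter the argument. (Your Case~2 is vacuous for the reason you yourself note: $B$ is occupied by the tail, so an inner robot never steps onto it.)

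The paper supplies the missing ingredient by comparing not with the configuration \emph{before} the move but with the hypothetical \emph{completed} configuration $\mathcal{C}'$ in which every inner robot occupies its target and the tail remains at $B$. The embedding convention (together with Proposition~\ref{prop2} when the tail lies strictly beyond $t_{target}$) gives $\lambda'_B\preceq\lambda'_A$ for that completed configuration. Since the completed configuration is reached from the post-move configuration by further rightward inner moves, Proposition~\ref{prop1} yields $\lambda^{new}_B\prec\lambda'_B$ and $\lambda'_A\prec\lambda^{new}_A$, and chaining gives $\lambda^{new}_B\prec\lambda^{new}_A$ whenever at least one more rightward move remains — which is precisely the ``unless the target pattern has been formed'' proviso. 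You should replace the combinatorial claim of your Case~1 with this sandwich against the embedded target.
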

\begin{proof}
    At time $t$, suppose the head and tail robots is at $A$ and $B$, respectively. Then $\lambda_{B}\prec\lambda_{A}$. Suppose, an inner robot $r_k$ is about to move towards right according to the Algorithm~\ref{algo:line} and after the movement $\lambda^{new}_{A}$ and $\lambda^{new}_{B}$ be the updated respective binary strings. According to the Algorithm~\ref{algo:line} the scenario is, some inner robots are on their respective target positions, and the respective target positions of the rest inner robots are on their right side. The robot $r_k$ is one of them. Also, at this time the tail is either at the $t_{target}$ or at the right of $t_{target}$. Let's consider a configuration $\mathcal{C}'$ considering the tail at $B$ and all other target positions but $t_{target}$ are occupied by robots. Let $\lambda'_A$ and $\lambda'_B$ be the binary strings for the end points $A$ and $B$, respectively, in $\mathcal{C}'$. If $t_{target}$ is at $B$ then according to the target embedding $\lambda'_B\preceq\lambda'_A$. Otherwise, if $B$ is at the right of the $t_{target}$, from Proposition~\ref{prop2} $\lambda'_B\prec\lambda'_A$. Thus combining both, we can say $\lambda'_B\preceq\lambda'_A\dots(1)$.
    
    The proof would be done if we show that $\lambda^{new}_{B}\prec\lambda^{new}_{A}$. On the contrary, if possible let $\lambda^{new}_{A}\preceq\lambda^{new}_{B}\dots(2)$. Since the target pattern has not been formed, some inner robots still need to move right. Thus, from Proposition~\ref{prop1} we have $\lambda^{new}_{B}\prec\lambda_B'\dots(3)$ and $\lambda'_{A}\prec\lambda_A^{new}\dots(4)$. Then, from $(2)$ and $(3)$, we have $\lambda_A^{new}\prec\lambda'_B\dots(5)$. From $(4)$ and $(5)$, we have $\lambda_A'\prec\lambda_B'$, which contradicts $(1)$. 
\end{proof}

\begin{lemma}\label{l3}
 If at some time $t$, we have an asymmetric configuration $\mathcal{C}$ such that $\mathcal{C}'\ne\mathcal{C}'_{target}$, and the tail is at the $t_{target}$ or at the right of the $t_{target}$, then after a finite time, through the movements of inner robots, $\mathcal{C}'=\mathcal{C}'_{target}$ becomes true.
\end{lemma}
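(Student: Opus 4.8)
The plan is a potential-function argument coupled with a no-deadlock (pigeonhole) lemma. I first collect the standing facts. In the hypothesised situation---$\mathcal{C}'\ne\mathcal{C}'_{target}$ and the tail on or to the right of $t_{target}$---neither tail-movement rule of Algorithm~\ref{algo:line} (lines~2 and~5) applies, so the tail stays put and only inner robots move; by Lemmas~\ref{l1} and~\ref{l2} the configuration stays asymmetric and the elected coordinate system, the left-to-right order $r_1<\dots<r_k$ of the robots, the embedded target and its ordered positions $t_1<\dots<t_k$ are all unchanged, with $r_1=t_1$ throughout since the head never moves and sits on $t_1$. As $\mathcal{C}'=\{r_1,\dots,r_{k-1}\}$ and $\mathcal{C}'_{target}=\{t_1,\dots,t_{k-1}\}$ compared as sorted sets, the target condition $\mathcal{C}'=\mathcal{C}'_{target}$ is equivalent to $r_i=t_i$ for all $i\le k-1$, i.e.\ to $\Phi:=\sum_{i=1}^{k-1}|r_i-t_i|=0$.

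I then observe that every inner move permitted by \textsc{ApfLine} is a unit step of some $r_i$ with $2\le i\le k-1$ strictly towards $t_i$: the left step (line~10) is guarded by ``$t_i$ at the left of $r_i$'' and the right step (line~14) by ``$t_i$ at the right of $r_i$'', and since a robot's own position does not change between its \textsc{Look} and \textsc{Move}, no robot overshoots its target or crosses a neighbour, so $|r_i-t_i|$ drops by exactly one. Hence $\Phi$ is a non-negative integer, strictly decreasing at each inner move and (the tail being immobile) never increasing; in particular no robot with $r_i\le t_i$ can ever come to satisfy $r_i>t_i$. It therefore suffices to prove the no-deadlock claim: whenever $\Phi>0$, some inner robot satisfies a movement guard---for then, by fairness, that robot (which does not change position in the meantime) eventually moves, and since $\Phi\ge0$ decreases by at least one per move, after finitely many moves $\Phi=0$.

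The crux is this no-deadlock claim, which I would establish from the strict ordering of the $t_i$. Suppose $\Phi>0$. If some inner robot lies strictly right of its target, let $i^\ast$ be the least index with $r_{i^\ast}>t_{i^\ast}$; then $i^\ast\ge2$ because $r_1=t_1$. Were the node immediately left of $r_{i^\ast}$ occupied, the occupant would be $r_{i^\ast-1}$ at position $r_{i^\ast}-1$, whence $r_{i^\ast-1}\le t_{i^\ast-1}<t_{i^\ast}\le r_{i^\ast}-1=r_{i^\ast-1}$, a contradiction; so that node is empty and $r_{i^\ast}$ meets the left-move rule (line~10). Otherwise $r_i\le t_i$ for every inner robot, so the ``no inner robot needs to move left'' precondition of the right branch holds; let $j^\ast$ be the greatest index $\le k-1$ with $r_{j^\ast}<t_{j^\ast}$. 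Were the node immediately right of $r_{j^\ast}$ occupied, the occupant would be $r_{j^\ast+1}$ at position $r_{j^\ast}+1\le t_{j^\ast}<t_{j^\ast+1}$, so $r_{j^\ast+1}<t_{j^\ast+1}$; this contradicts the maximality of $j^\ast$ when $j^\ast+1\le k-1$, and contradicts $r_k\ge t_k$ (the tail lies on or to the right of $t_{target}$) when $j^\ast+1=k$. Hence that node is empty and $r_{j^\ast}$ meets the right-move rule (line~14). Either way a move is forced while $\Phi>0$, giving $\Phi=0$, i.e.\ $\mathcal{C}'=\mathcal{C}'_{target}$, in finite time.

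The main obstacle is exactly this no-deadlock step in the asynchronous setting: one must exhibit, in every reachable still configuration with $\Phi>0$, a robot that is simultaneously \emph{entitled} by its guard and \emph{physically able} to move, and the key is the pigeonhole observation above---a ``frontier'' robot always finds the relevant neighbouring cell free because consecutive target positions differ strictly and no robot overshoots. The remaining asynchrony bookkeeping---that a move computed from a stale still-snapshot can neither collide with another robot nor carry a robot past its target---reduces to the same arithmetic slack coming from the strict inequalities $t_{i-1}<t_i$, together with the fact that a robot's own coordinate is frozen between its \textsc{Look} and \textsc{Move}.
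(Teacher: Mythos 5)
Your proof is correct and follows essentially the same route as the paper's: both rest on Lemmas~\ref{l1} and~\ref{l2} for invariance of the head/tail and the coordinate system, and on the strict ordering $t_1<\dots<t_{k}$ together with order preservation of the robots to rule out collisions and deadlocks. Your explicit potential function $\Phi$ and the frontier (least/greatest index) argument make the no-deadlock and termination steps more rigorous than the paper's, which only checks two local blocking scenarios and asserts termination, but the underlying mechanism is identical.
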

\begin{proof}
    If the inner robots successfully can move according to our algorithm then after a finite time all inner robots take their respective target positions making $\mathcal{C}'=\mathcal{C}'_{target}$ true. The matters that need to be taken care of are (1)~the head and the tail robot remain head and tail, respectively (hence, the coordinate system remains unchanged), and (2)~no collision or deadlock occurs throughout the movement of the inner robots. From Lemma~\ref{l1} and lemma~\ref{l2}, we have that throughout the execution of the algorithm~\ref{algo:line} when the inner robots move, the head and tail remains head and tail respectively and hence the coordinate system remains unchanged. Next, According to the Algorithm~\ref{algo:line} an inner robot only moves to its empty adjacent grid node. Thus, a collision can only happen if two robots move to the same node. Suppose, two inner robots $r_i$ and $r_{i+1}$ move to an empty node $v$. This implies, either $t_i$ and $t_{i+1}$ both are at $v$ or $t_i$ is at the right of $t_{i+1}$, which is not possible. Thus, no collision will occur. Then we show no deadlock will be created throughout the movement of the inner robots. A deadlock can occur if there are two inner robots $r_i$ and $r_{i+1}$ adjacent to each other, where $r_i$ wants to move towards right and $r_{i+1}$ is either at $t_{j+1}$ or wants to move towards left. From a similar argument as before, this is not possible.

\end{proof}

\begin{lemma}\label{tail1}
  If at some time $t$, we have an asymmetric configuration $\mathcal{C}$ such that $\mathcal{C}'\ne\mathcal{C}'_{target}$, and the tail is at the left of $t_{target}$, then after a finite number of moves by tail towards right the tail reaches $t_{target}$. 
\end{lemma}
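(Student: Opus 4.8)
The plan is to reduce everything to the behaviour of the one robot that is allowed to move in this situation, namely the tail, and then to show that its rightward journey toward the node $t_{target}$ terminates after a bounded number of steps. Under the hypothesis of the lemma we have $\mathcal{C}'\ne\mathcal{C}'_{target}$ and $t_{target}$ strictly to the right of the tail, so, reading \textsc{ApfLine} (line~5 of Algorithm~\ref{algo:line}), the tail is prescribed to move one node to the right while the head and every inner robot are prescribed to stay put. First I would record that this remains the instruction for \emph{every} robot activated in this regime: a robot that wakes up while the tail is on an edge discards its snapshot and sleeps, and a robot that wakes up while the tail is on a node reads exactly the same configuration type and, not being the tail, does nothing. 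Hence no robot other than the tail ever changes its position; in particular $\mathcal{C}'$ is frozen, so $\mathcal{C}'\ne\mathcal{C}'_{target}$ persists, and the tail---being the rightmost occupied node---always moves into an empty node, so no collision occurs.

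Next I would pin down the coordinate system, which is essential because $t_{target}$ is defined relative to the embedding. With the head at $A$ and the tail at $B$ we have $\lambda_{B}\prec\lambda_{A}$, so each rightward step of the tail is precisely the move of the robot at $B$ out of the SEL $\mathcal{L}=AB$ analysed in Proposition~\ref{prop2}, whose hypothesis $\lambda_{B}\preceq\lambda_{A}$ is met at time $t$ since $\mathcal{C}$ is asymmetric. Proposition~\ref{prop2} then yields a configuration in which the moved robot's new endpoint $B'$ still carries the strictly smaller string, i.e.\ $\lambda'_{B'}\prec\lambda'_{A}$; this re-establishes the hypothesis for the following step. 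By induction on the number of moves, after every rightward step of the tail the head is still the head, the tail is still the tail, and the origin $A$ is unchanged, so the target embedding and the node $t_{target}$ never move.

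Finally, combining the two points: as long as the tail lies strictly to the left of the (now provably fixed) node $t_{target}$ and $\mathcal{C}'\ne\mathcal{C}'_{target}$, each activation of the tail moves it exactly one node closer to $t_{target}$, and when it is one node to the left of $t_{target}$ the next move places it exactly on $t_{target}$. Writing $d$ for the finite distance between the tail's position at time $t$ and $t_{target}$, and using that the scheduler is fair so the tail is activated infinitely often, the tail reaches $t_{target}$ after at most $d$ of its activations, hence within finite time. The one place that needs care is the inductive bookkeeping of the previous paragraph: one must make sure that letting the SEL grow to the right never lets the string read from the far endpoint catch up with the head's string, which is exactly what Proposition~\ref{prop2} guarantees.
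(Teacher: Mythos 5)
Your proof is correct and follows essentially the same route as the paper's: the key step in both is the invocation of Proposition~\ref{prop2} to show that each rightward move of the tail out of the current SEL preserves $\lambda_{B'}'\prec\lambda_{A}'$, hence preserves asymmetry, the head/tail roles, and the embedding of $t_{target}$, so the tail closes the remaining distance in finitely many steps. Your additional bookkeeping (no other robot moves, no collision, scheduler fairness) is sound and merely makes explicit what the paper leaves implicit.
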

\begin{proof}
    Let $AB$ be the SEL of the $\mathcal{C}$ such that the tail is at $B$. Let $\lambda_{A}$ and $\lambda_{B}$ be the binary strings in $\mathcal{C}$. Then $\lambda_{B}\prec\lambda_{A}$. Suppose the tail moves towards the right and moves to a point $B'$, and the new configuration is $\mathcal{C}'$. Let $\lambda_A'$ and $\lambda_{B'}'$ be the binary strings from the endpoints of the SEL of $\mathcal{C}'$. Then from Proposition~\ref{prop2}, $\lambda_{B'}'\prec \lambda_A'$. Thus, after the movement of the tail towards the right, the configuration remains asymmetric and the tail robot remains the tail. Thus, after a finite move towards $t_{target}$, the tail reaches $t_{target}$.\hfill $\Box$
\end{proof}
    
\begin{lemma}\label{tail2}
    If at some time $t$, the configuration is $\mathcal{C}$ such that $\mathcal{C}'=\mathcal{C}'_{target}$ is true, then after a finite move of the tail towards $t_{target}$, the target pattern gets formed.
\end{lemma}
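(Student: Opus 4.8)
The plan is to establish Lemma~\ref{tail2} by combining two facts: that once $\mathcal{C}'=\mathcal{C}'_{target}$ holds, the only applicable movement in Algorithm~\ref{algo:line} is the tail moving towards $t_{target}$ (the line~2 move), and that this move is made repeatedly and correctly until the tail lands on $t_{target}$, at which point the whole configuration equals $\mathcal{C}_{target}$. First I would argue that the hypothesis $\mathcal{C}'=\mathcal{C}'_{target}$ guarantees the coordinate system (head, tail, origin, positive direction) is stable: since $\mathcal{C}'=\mathcal{C}'_{target}$ and the target is embedded with all inner and head positions to the right of (or at) the origin with the head at $t_1$, the head is still the leftmost robot and remains the head, and the binary string comparison still selects the same endpoint $A$ as origin. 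In particular no inner robot and not the head is eligible to move: the inner robots are exactly on their target positions, so the $\mathcal{C}'\ne\mathcal{C}'_{target}$ branch of Algorithm~\ref{algo:line} is not entered, and the head never moves by the description of the algorithm.

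Next I would handle the tail's movement. When $\mathcal{C}'=\mathcal{C}'_{target}$ is true, the algorithm instructs the tail to move towards $t_{target}$. Here there are two sub-cases depending on where the tail currently sits relative to $t_{target}$ along the line. If the tail is already at $t_{target}$, then $\mathcal{C}=\mathcal{C}_{target}$ and we are done. Otherwise the tail is strictly to one side of $t_{target}$ and takes one unit step towards it. I would then invoke Proposition~\ref{prop2} (together with the target-embedding choice that makes $\lambda_{A'}\preceq\lambda_{B'}$ fail strictly, i.e. $t_{target}$ is the rightmost target) to conclude that after this step the configuration is still asymmetric with the same head and tail, so the same instruction applies at the next activation. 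Since each such step strictly decreases the (finite) distance between the tail and $t_{target}$, after finitely many activations of the tail the tail reaches $t_{target}$; by the fairness of the scheduler the tail is activated infinitely often, so this happens in finite time. At that moment $\mathcal{C}=\mathcal{C}'\cup\{t_{target}\}=\mathcal{C}'_{target}\cup\{t_{target}\}=\mathcal{C}_{target}$, i.e. the target pattern is formed.

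One subtlety I would be careful about is the asynchronous scheduler: a tail that has already computed a destination in an old \textsc{Look} could in principle act on stale information. But the algorithm only ever moves the tail by one step towards $t_{target}$, and the set of eligible moves (tail steps towards $t_{target}$) does not change as long as $\mathcal{C}'=\mathcal{C}'_{target}$, so a pending move from a stale snapshot is still a legal and progress-making move; moreover, since only the tail is moving, there is no collision risk with any other robot, and the tail cannot overshoot because when it is adjacent to $t_{target}$ the step brings it exactly onto $t_{target}$ and the process halts. The main obstacle, and the part deserving the most care, is precisely this verification that the coordinate system and the head/tail designation are invariant under the tail's rightward (or leftward) march — this is exactly what Proposition~\ref{prop2} and the target-embedding convention give us, so once those are cited the argument closes cleanly.
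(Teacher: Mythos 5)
Your proof is correct and follows essentially the same route as the paper's: both reduce the claim to showing, via Proposition~\ref{prop2} and the target-embedding convention, that the tail remains the tail (and the coordinate system is unchanged) after each unit step toward $t_{target}$, so that finitely many such steps complete the pattern. The paper merely makes explicit the two sub-cases (the tail stepping left into the segment $AB$ versus stepping right outside it) that you treat uniformly, while you add a scheduler-staleness remark the paper omits.
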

\begin{proof}
    
    Let $AB$ be the SEL of the configuration $\mathcal{C}$ such that the tail is situated at $B$. Suppose the embedding of the target $t_{target}$ is at $B_t$. If $B_t$ and $B$ are one hop away from each other then after one movement of the tail target pattern will be formed. Suppose $B_t$ and $B$ are not adjacent to each other. According to the Algorithm~\ref{algo:line} the tail moves towards $B_t$. Suppose after one movement the tail lands on the point $B'$ and the configuration becomes $\mathcal{C}'$. We need to show the tail remains the tail after moving to $B'$. Let $\lambda_A'$ and $\lambda_{B'}'$ be the binary strings from the endpoints of the SEL of $\mathcal{C}'$. There are two exhaustive cases: The point $B_t$ is on the line segment $AB$ or not. If $B_t$ is on the line segment $AB$ then the tail moved left and also $B_t$ is on the line segment $AB'$. From Proposition~\ref{prop2}, $\lambda_{B'}'\prec \lambda_A'$. This gives that $\mathcal{C}'$ is asymmetric and the tail robot remains tail in $\mathcal{C}'$. For the remaining case, we have $B_t$ is not on the line segment $AB$. Then the tail moved right and also $B_t$ is not on the line segment $AB'$. Since the tail was at $B$ in $\mathcal{C}$, so if $\lambda_{A}$ and $\lambda_{B}$ are the binary strings for $\mathcal{C}$ then $\lambda_{B}\prec \lambda_{A}$. Then from Proposition~\ref{prop2}, a similar conclusion as the former case can be made. Thus, after each movement of the tail towards $t_{target}$ the tail remains the tail until it reaches $t_{target}$, and after a finite number of movements tail reaches $t_{target}$ that completes the target formation.
    
\end{proof}

\begin{restatable}
[Statement of Theorem~\ref{th0}]{thm}{goldbach}
From any asymmetric initial configuration, the algorithm~\ref{algo:line} can form any target pattern on an infinite grid line within finite time under an asynchronous scheduler.

\end{restatable}
\begin{proof}
    Let $\mathcal{C}$ be an asymmetric initial configuration. If for the initial configuration, $\mathcal{C}'=\mathcal{C}'_{target}$ is true, then from Lemma~\ref{tail2} after a finite time target pattern gets formed. Suppose, initial configuration satisfy $\mathcal{C}'\ne\mathcal{C}'_{target}$, there can be two exhaustive cases. Either the tail is at the left side of the $t_{target}$ or not. If the tail is at the left side of the $t_{target}$, then from Lemma~\ref{tail1} we have that after finite movements, the tail reaches the $t_{target}$. Thus after a finite time, we arrive at the configuration where $\mathcal{C}'\ne\mathcal{C}'_{target}$ is true and the tail is either on the $t_{target}$ or on the right side of $t_{target}$. Then from Lemma~\ref{l3} we have that, after finite time $\mathcal{C}'=\mathcal{C}'_{target}$ becomes true. Now at this time, if the tail is on the $t_{target}$ then the target pattern has been formed. If the tail is at the right side of the $t_{target}$, then from Lemma~\ref{tail2}, after a finite time target pattern gets formed. The flow of the algorithm is given in Fig.~\ref{fig:flow0}.
\end{proof}

\begin{figure}[ht!]
    \centering
    \includegraphics[width=.99\linewidth]{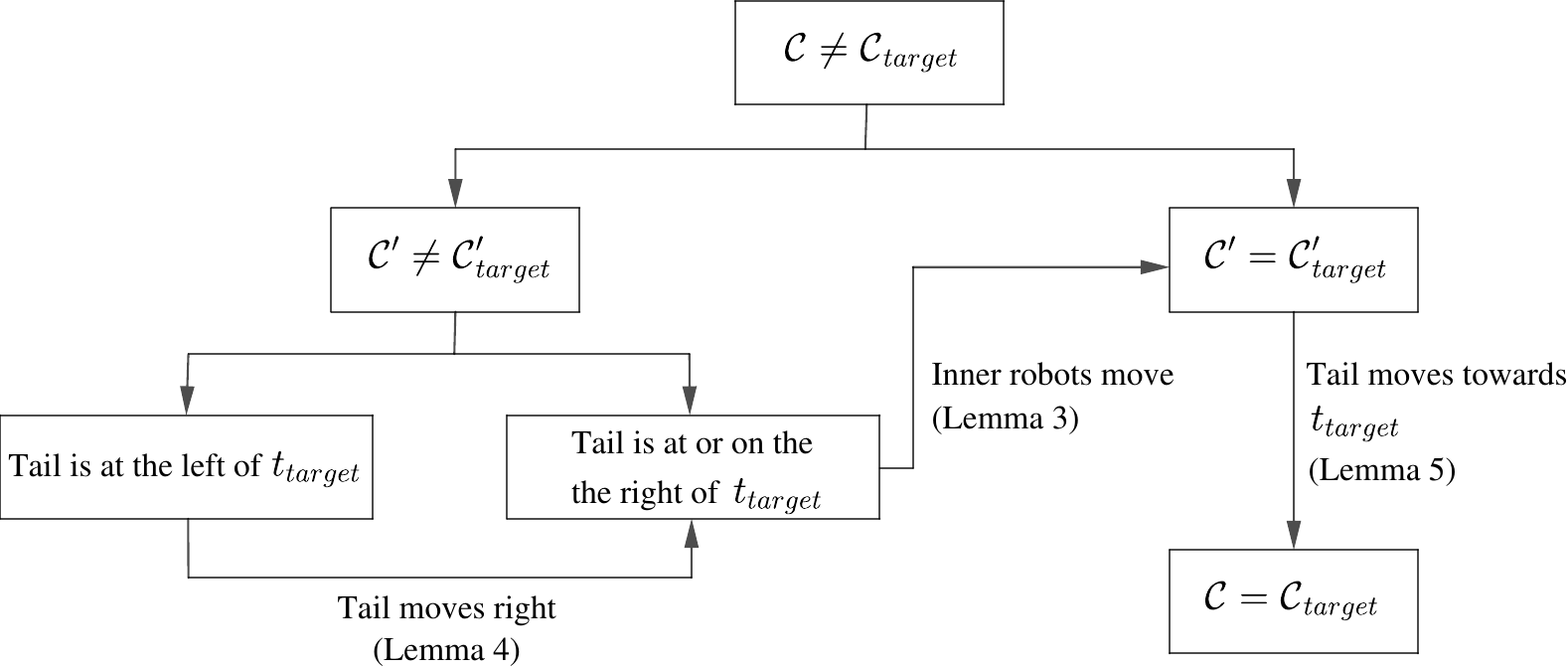}
    \caption{\footnotesize Flow of the Algorithm \textsc{ApfLine}}
    \label{fig:flow0}
\end{figure}

\section{Correctness of the Proposed Algorithm on Rectangular Grid}
\subsection{Correctness of Phase~I}\label{ph1proof}

\begin{theorem}\label{ph1}
 If at some time $t$, we have an asymmetric configuration $\mathcal{C}$ in phase~I, then after one move of the tail upwards, the tail robot remains the tail and $\neg(C_1\land C_3)$ remains true. Also, after a finite number of moves by the tail upwards $C_4\land C_5\land C_6$ becomes true.
\end{theorem}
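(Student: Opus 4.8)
The plan is to track how the lexicographically-largest string changes when the tail moves one step upward, and show the leading corner (hence the coordinate system and the head/tail roles) is preserved, while the relevant conditions eventually become true.

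First I would set up the situation: let $ABCD$ be the SER of $\mathcal{C}$ with origin at $A$, $\overrightarrow{AB}$ the $x$-axis, $\overrightarrow{AD}$ the $y$-axis, and let $\lambda_{AB}$ be the unique lexicographically largest string among the (eight, or fewer) corner strings. Since we are in Phase~I, $\neg(C_4\land C_5\land C_6)\land\neg(C_1\land C_3)$ holds, so in particular the tail is the topmost-relevant robot only if $C_4$ fails in one of its specific ways; I would first argue the tail is a well-defined robot (using $k>2$ and asymmetry), that it lies on the top side $DC$ of the SER or strictly below it, and that after moving up by one the SER either stays the same or grows by one horizontal line at the top. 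In either subcase I would recompute the strings: when the tail moves up, the contribution of the tail's position in $\lambda_{AB}$ moves to a later position of the scan order (since the scan goes bottom rows first), so $\lambda_{AB}$ can only change in a suffix; a short case analysis on whether the new top row is even- or odd-indexed, together with the fact that all other corner strings $\lambda_{BA},\lambda_{AD},\dots$ either shrink in the corresponding comparison or differ already in a strictly earlier position, shows $\lambda_{AB}$ remains the strict lexicographic maximum. Hence $A$ remains the leading corner, the coordinate system is unchanged, and the moved robot is still the tail. This is essentially Proposition~\ref{prop2} transported to the two-dimensional scan, and I would phrase it that way.

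Next I would verify $\neg(C_1\land C_3)$ is preserved after one upward step. The point is that moving the tail changes only the tail's position, and $C_1$ ($\mathcal{C}'=\mathcal{C}'_{target}$) does not involve the tail, so its truth value is unchanged; meanwhile if $C_1$ held before the move it still holds, and then I must ensure $C_3$ ($x$-coordinate of tail equals that of $t_{target}$) cannot become true — but an upward move does not change the tail's $x$-coordinate, so $C_3$'s value is also unchanged, and since $\neg(C_1\land C_3)$ held before, it still holds. (If $C_1$ was already false, $\neg(C_1\land C_3)$ is trivially maintained.)

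Finally, termination: I would show that after finitely many upward steps $C_4\land C_5\land C_6$ holds. For $C_4$: the tail's $y$-coordinate strictly increases each step, while the (finitely many) robots other than the tail and the (finitely many) target positions have a bounded maximum $y$-coordinate $Y_{\max}$; once the tail's $y$-coordinate exceeds $Y_{\max}$, $C_4$ is true and, crucially, remains true under further upward moves. For $C_5$ (tail $y$-coordinate odd) and $C_6$ (SER not a square): once $C_4$ holds, the algorithm (Phase~I continues until $C_4\land C_5\land C_6$) lets the tail take at most a couple more upward steps — one to fix parity for $C_5$, and since each upward step increases the height $m$ of the SER without changing its width, after at most one further step the SER has $m>n$ strictly, giving $C_6$; a tiny interleaving argument shows parity and non-squareness can be achieved simultaneously within a bounded number of additional steps (e.g. if fixing $C_6$ breaks $C_5$, two more steps restore $C_5$ while keeping $m>n$). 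I expect the main obstacle to be the first part — rigorously showing the leading-corner/string maximality is preserved under the 2D boustrophedon scan when the SER changes shape, since unlike the clean one-dimensional Proposition~\ref{prop2} one must handle the parity of the newly created top row and check all competing corner strings; the termination and $\neg(C_1\land C_3)$ parts are straightforward bookkeeping once that invariant is in hand.
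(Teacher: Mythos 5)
Your overall architecture (track the lexicographically largest string under the upward move, then do the bookkeeping for $\neg(C_1\land C_3)$ and termination) matches the paper, but the central invariant you propose to prove is too strong and is in fact false. You claim that a case analysis on the parity of the new top row ``shows $\lambda_{AB}$ remains the strict lexicographic maximum,'' hence that the leading corner and the coordinate system are unchanged. This fails exactly when $C_{10}$ holds, i.e.\ when $\mathcal{C}\setminus\{\text{tail}\}$ is symmetric about a vertical line. In that case $\lambda_{AB}$ and $\lambda_{BA}$ agree on every position except those contributed by the tail and its mirror image, so their comparison is decided solely by whether the tail precedes its mirror in the boustrophedon scan of its row; since the tail sits on the top side $DC$ and moves into a new top row of \emph{opposite} parity, the scan direction of its row reverses and the comparison flips, making $B$ the new leading corner. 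This is precisely why the theorem statement — unlike Theorem~\ref{ph2} — asserts only that the tail remains the tail, not that the coordinate system is preserved, and why the paper's proof splits into the two cases ``$C_{10}$ true: $B$ becomes the leading corner'' and ``$C_{10}$ false: $A$ remains the leading corner.'' Your argument as written proves the wrong invariant and would have to be repaired by handling the flip case separately (showing the tail, as the unique robot on the new top row, is still the last $1$ of the new largest string $\lambda_{BA}$).

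This gap propagates into your second step: your preservation of $\neg(C_1\land C_3)$ assumes the tail's $x$-coordinate and the sets $\mathcal{C}'_{target}$, $t_{target}$ are measured in an unchanged frame. When the leading corner flips from $A$ to $B$, the origin moves and the target is re-embedded, so both sides of the equalities in $C_1$ and $C_3$ change; one must then invoke the vertical symmetry of $\mathcal{C}'$ (and correspondingly of the relevant target sets) to see that the truth values of $C_1$ and $C_3$ are nevertheless unaffected. A small further slip: the tail cannot lie ``strictly below'' the top side $DC$ — being the last $1$ of the bottom-to-top scan, it always lies on $DC$, which is what guarantees the upward move creates a new top row containing only the tail. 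Your termination argument for $C_4\land C_5\land C_6$ is fine and matches the paper's (brief) treatment.
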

\begin{proof}
    Let $ABCD$ be the SER of the configuration at time $t$ and after a movement of the tail, say $r_t$, upwards the configuration becomes $\mathcal{C}'$. Let $ABC'D'$ be the SER of $\mathcal{C}'$, then $|AB|<|AD'|$ and $r_t$ is the only robot on $C'D'$ line-segment. Let $A$ be the leading corner of the $\mathcal{C}$. Note that movement of $r_t$ is such that neither $C'$ nor $D'$ can be the leading corner of $\mathcal{C}'$. If $C_{10}$ is true, then $B$ is the leading corner of $\mathcal{C}'$. Otherwise, $A$ remains the leading corner of $\mathcal{C}'$. In both cases, configuration remains asymmetric and the tail remains the tail. If either of $C_1$ or $C_3$ is false in $\mathcal{C}$, then that remains false in $\mathcal{C}'$. Now it is easy to observe that after a finite number of movements of the robot $r_t$ upwards $C_4\land C_5\land C_6$ becomes true. 
\end{proof}

\subsection{Correctness of Phase~II}\label{ph2proof}

\begin{theorem}\label{ph2}
 If at some time $t$, we have an asymmetric configuration $\mathcal{C}$ in phase~II, then after one move of the head towards left the new configuration remains asymmetric and the coordinate system remains unchanged. Also, after a finite number of moves by the head, $C_8$ becomes true, and phase~II terminates with $(C_4\land C_5\land C_6\land C_8) \land (\neg C_2\lor(\neg C_3\land C_2))$ true.
\end{theorem}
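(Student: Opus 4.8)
\emph{Plan and geometry.} The plan is to first pin down the geometry forced by being in Phase~II, then analyse how the lexicographic boundary strings react to a single leftward step of the head, and finally iterate to termination; the argument is a two‑dimensional analogue of the line case (Propositions~\ref{prop1} and~\ref{prop2}). Let $ABCD$ be the SER of the current configuration with $A$ the leading corner (the origin). First I would show that the leading corner is a \emph{bottom} corner and that the head lies on the bottom side. The bottom side always carries a robot (it realises the minimum $y$-coordinate), so a boundary string that starts by scanning the bottom side does not begin with $n$ zeros; on the other hand, by $C_4$ the tail is the only robot on (or above) the top side, so any string starting by scanning the top side has the tail as its first $1$ and would force the head to coincide with the tail --- impossible since $k>2$. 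Hence the lexicographically largest string is $\lambda_{AB}$ or $\lambda_{BA}$, the leading corner is the corresponding bottom corner, and the head is the robot of the bottom side closest to it, at distance $d\ge 1$ (with $d\ge 1$ because $\neg C_8$); all nodes strictly between the leading corner and the head along the bottom side are empty, so the head's leftward step is into an empty node and is collision-free (no other robot moves in Phase~II).

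\emph{Effect of one move.} After the step I would compare the (at most four, since $C_6$ rules out a square SER) candidate boundary strings of the new configuration. The string rooted at the leading corner strictly increases, since its first $1$ shifts one position earlier; the string rooted at the other bottom corner strictly decreases; and the two strings rooted at the top corners are modified only within their last $n$ bits. Because $\mathcal{C}$ is asymmetric, $\lambda_{AB}$ already disagrees with each top-rooted string \emph{before} those last $n$ bits --- a disagreement confined to the last block would force the whole configuration to be invariant under a $180^{\circ}$ rotation or a horizontal reflection --- so the comparison is unaffected. Consequently the leading corner is preserved, hence so is the whole agreed coordinate system (the axes are fixed because the SER, being non-square, is unchanged), and strict uniqueness of the largest string re-establishes asymmetry. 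I would also note that $C_4,C_5,C_6$ are untouched (tail, region above $H_t$, and SER unchanged) and that a head-move cannot affect $C_2$ or $C_3$ (the head is not in $\mathcal{C}''$, and the tail and $t_{target}$ are fixed); thus $\neg(C_2\land C_3)$, equivalently $(C_2\land\neg C_3)\lor\neg C_2$, persists, so the configuration stays in Phase~II until $C_8$ becomes true.

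\emph{Termination.} With the origin fixed, each head-move decreases the head's $x$-coordinate by exactly one while keeping it on the bottom row; after the $d$ such moves the head reaches the origin, i.e., $C_8$ holds, and by fairness of the scheduler this occurs in finite time with no other robot intervening. At that instant $C_4\land C_5\land C_6\land C_8$ holds and, combined with the persistent $\neg(C_2\land C_3)$, this is exactly $(C_4\land C_5\land C_6\land C_8)\land(\neg C_2\lor(\neg C_3\land C_2))$, the asserted termination condition, while $\neg C_8$ fails, so Phase~II is left.

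\emph{Main obstacle.} The delicate step is the boundary-string comparison: one must exclude that promoting the head --- which sits at the \emph{start} of the winning string but near the \emph{end} of the top-rooted strings --- turns some other corner into the leading corner. This is precisely where asymmetry of $\mathcal{C}$ is used: it forces $\lambda_{AB}$ to dominate each competitor strictly before its last $n$ bits, which are the only bits a single head-move can alter in the top-rooted strings. A secondary care point is the case in which the head starts adjacent to the left side of the SER and thus lands on it after the move; there the SER is nonetheless unchanged because the left side was already occupied by another robot.
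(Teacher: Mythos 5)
Your proof is correct and follows essentially the same route as the paper's: identify the head as the first occupied node, from the leading corner, of the side scanned first by the winning string; show that one leftward step keeps that string strictly the largest; and iterate until the head reaches the origin. The only substantive difference is in how the two top-rooted strings $\lambda_{CD},\lambda_{DC}$ are handled. The paper's mechanism is shorter: after the move the first $1$ of $\lambda_{AB}$ sits at position $i-1$, while every competitor still has an all-zero prefix of length at least $i-1$ (in a top-rooted string the head occupies an index $\ge (m-1)n+1 > i-1$ both before and after the move, and no other robot changes), so $\lambda^{new}_{AB}$ wins outright. Your alternative --- that the changes to the top-rooted strings are confined to their last $n$ bits while, by asymmetry, their disagreement with $\lambda_{AB}$ already occurs earlier --- is also sound, and your supporting claim is true (agreement on the first $(m-1)n$ bits does force full rotational or reflectional invariance, because the row-$m$ nodes are mapped into row $1$, where the biconditional already holds); but it is an unnecessary detour, since the zero-prefix argument settles those comparisons without invoking asymmetry at all, and your phrase ``the comparison is unaffected'' glosses over the fact that the first-disagreement position does move (to $i-1$), even though the winner does not. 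Your explicit verification that $C_4,C_5,C_6$ and $\neg(C_2\wedge C_3)$ persist, that the move is into an empty node, and that the SER cannot change, are welcome additions that the paper's proof leaves implicit.
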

\begin{proof}
    Let $ABCD$ be the SER of $\mathcal{C}$ such that $|AD|\ge|AB|$ and $\lambda_{AB}$ be the lexicographically largest string. Let $i^{th}$ term of $\lambda_{AB}$ be the first nonzero term. After the movement of the head towards left, if $\lambda^{new}_{AB}$ is the updated string, then $(i-1)^{th}$ term of $\lambda^{new}_{AB}$ be the first nonzero term. And the first $(i-1)$ terms of the rest of the other considered strings are zero. Therefore, $\lambda^{new}_{AB}$ is the strictly largest string after the movement of the head towards the left. Hence the new configuration remains asymmetric and the coordinates system does not change. After a finite number of movements of the head towards the left, it reaches the origin that results $C_8=$ true.  
\end{proof}

\subsection{Correctness of Phase~III}\label{ph3proof}

\begin{theorem}\label{ph3}
 If at some time $t$ we have an asymmetric configuration $\mathcal{C}$ in phase~III such that $C_{10}$ is not true, then after one move of the tail the configuration remains asymmetric and the coordinate system remains unchanged. Also, after one move by the tail towards the right, we still have $\neg C_2\land C_4\land C_8=$ true. After a finite number of moves by the tail $C_4\land C_5\land C_6 \land C_7 \land C_8 \land C_9 \land \neg C_2 $ becomes true.
\end{theorem}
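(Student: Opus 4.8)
The plan is to mirror the structure of the Phase~I and Phase~II correctness proofs (Theorems~\ref{ph1} and~\ref{ph2}), but now paying attention to the fact that in Phase~III the tail moves either rightward or upward depending on whether $m>n+1$ or $m=n+1$, and only in the sub-case $C_{10}$ false, $C_9$ true. First I would fix notation: let $ABCD$ be the SER of $\mathcal{C}$ at time $t$ with $|AD|=m\ge n=|AB|$ and $A$ the leading corner, so $\lambda_{AB}$ is the strictly lexicographically largest of the eight (or two) boundary strings. I would then split into the two moves the algorithm prescribes in this branch.

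\emph{Upward move ($m=n+1$, or $C_9$ false).} When the tail moves up, the SER grows in the $|AD|$-direction to some $ABC'D'$ with $|AD'|>|AD|$, and the tail $r_t$ is the unique robot on $C'D'$. As in Theorem~\ref{ph1}, since $r_t$ is the only robot on the new top side, neither $C'$ nor $D'$ can be the leading corner of $\mathcal{C}'$; and since $C_{10}$ is false, the reflection across the vertical does not make $B$ a leading corner either, so $A$ remains the leading corner and the coordinate system is unchanged. I would then check that $\neg C_2$ is preserved (the inner robots have not moved, and $\mathcal{C}''_{target}$ is fixed), that $C_4$ is preserved (moving the tail further up keeps the region on and above $H_t$ free of other robots and target positions — this uses that $C_4$ was already true or, if not, that the tail is still above every target/robot except possibly itself), and that $C_8$ is preserved since the head does not move.

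\emph{Rightward move ($m>n+1$ and $C_9$ true).} This is the case requiring the most care. The tail moves right; I need to argue (i) the configuration stays asymmetric with $A$ still the leading corner, (ii) $\neg C_2\land C_4\land C_8$ still holds, and (iii) after finitely many such moves $C_7$ becomes true while the other listed conditions are achieved. For (i): while $m>n+1$ the tail moving right keeps $|AD|>|AB|$, i.e. the SER stays a genuine $m\times n$ rectangle with the long side along $AD$; moving the tail along the line $DC$ (the far vertical side relative to $AB$) only appends/shuffles a suffix of $\lambda_{AB}$ in a way analogous to Proposition~\ref{prop2}, so $\lambda_{AB}$ stays strictly largest — here I would invoke that $C_9$ true guarantees precisely that relocating head and tail to the corners keeps the configuration asymmetric, which is what prevents a tie from forming as the tail walks to the corner. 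For (ii): $C_4$ persists because the tail stays on the same horizontal line it was on (the rightward move does not change its $y$-coordinate), $\neg C_2$ persists because inner robots are static, $C_8$ persists because the head is static. For (iii): the tail keeps moving right until it passes every target position and every other robot, i.e. until $C_7$; and the algorithm's choice between right and up is exactly engineered so that by the time $C_7$ holds we still have $C_4\land C_5\land C_6$ (the earlier phases established these and nothing since has destroyed them) and $C_9$ (which $C_9$-true branch maintains), giving $C_4\land C_5\land C_6\land C_7\land C_8\land C_9\land\neg C_2$.

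\textbf{Main obstacle.} The delicate point is step (i) in the rightward case: showing that as the tail marches rightward toward the far corner, no accidental symmetry (in particular no equality between $\lambda_{AB}$ and one of $\lambda_{BA},\lambda_{CD},\lambda_{DC}$, or in the square-boundary case one of the eight strings) ever appears, and that $A$ — not $B$, $C$, or $D$ — remains the leading corner throughout. The condition $C_9$ is introduced precisely to rule out the terminal configuration (tail at $C$, head at $A$) being symmetric, so the heart of the argument is to show that $C_9$ true at time $t$ already implies every intermediate configuration (tail somewhere on segment between its current position and $C$, head at $A$) is asymmetric with leading corner $A$; I would prove this by the same suffix/prefix comparison of binary strings used in Proposition~\ref{prop2} and Theorem~\ref{ph2}, treating the tail's rightward walk as successively inserting a $0$ into the scanned string and noting that the first $1$ of $\lambda_{AB}$ (produced by the head near $A$) sits in a position no other corner's string can match. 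The termination claims (finiteness of the number of tail moves) are then routine, since the tail strictly increases a bounded-below, monotone coordinate until the target-free region is reached.
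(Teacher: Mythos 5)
Your overall skeleton (split on upward versus rightward tail moves, comparison of the corner strings, monotone termination) matches the paper's proof, but two steps as written would fail. First, your termination argument asserts that $C_4\land C_5\land C_6$ were ``established earlier and nothing since has destroyed them.'' This is false for $C_5$: every upward move of the tail flips the parity of its $y$-coordinate, so each upward hop prescribed in this branch (the $m=n+1$ sub-case, and the $C_9$-false sub-case) makes $C_5$ false, and the configuration momentarily leaves Phase~III altogether. The paper's proof must explicitly route through Phase~I (one further upward hop restores $C_5$) and argue that the algorithm re-enters Phase~III with $m>n+1$, so that this ping-pong happens only finitely often; your proposal is silent on this excursion, and the final conjunction $C_4\land C_5\land C_6\land C_7\land C_8\land C_9\land\neg C_2$ cannot be concluded without it.

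Second, your mechanism for preventing ties during the rightward walk over-credits $C_9$. The observation that ``the first $1$ of $\lambda_{AB}$ sits in a position no other corner's string can match'' disposes only of $\lambda_{CD}$ and $\lambda_{DC}$, whose starting corners are empty while the tail is strictly interior to the top side; it does not dispose of $\lambda_{BA}$, since $B$ may well be occupied. The paper eliminates $\lambda_{BA}$ using $C_{10}$ false: the sub-configuration $\mathcal{C}'$ (everything except the tail) is not vertically symmetric, so $\lambda_{AB}$ and $\lambda_{BA}$ already differ strictly on the portion unaffected by the tail's horizontal walk, and the inequality persists no matter where the tail sits on its row. The condition $C_9$ is only needed at the terminal position (tail at the corner $C$), where $\lambda_{CD}$ also begins with a $1$ and a genuine full comparison is required. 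Your plan to derive ``every intermediate configuration is asymmetric with leading corner $A$'' from $C_9$ alone cannot work, because $C_9$ concerns a single relocated configuration and says nothing about the $\lambda_{AB}$-versus-$\lambda_{BA}$ comparison; you must invoke $C_{10}$ false here, as you did (correctly) in the upward case. With those two repairs your argument aligns with the paper's.
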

\begin{proof}
    Let $ABCD$ be the SER of the $\mathcal{C}$ with $\lambda_{AB}$ as the lexicographically largest string. If $C_9$ is false then after one movement of the tail upwards $C_9$ becomes true and finally, the configuration satisfies $C_4\land C_5\land C_6 \land C_8 \land C_9 \land \neg C_2 \land \neg C_7=$ true. Suppose $C_9$ is true. Let after one movement of the tail the SER remains the same. Then $C_6$ remains true and so we need to compare four strings. Note that the $D$ node is empty. Since $A$ node is occupied (because $C_8$ is true), so $\lambda^{new}_{AB}$ remains larger than $\lambda^{new}_{DC}$. Since $C_{10}$ is false, so by the movement of tail robot $\lambda^{new}_{AB}$ remains larger than $\lambda^{new}_{BA}$. So we need to compare only $\lambda^{new}_{AB}$ and $\lambda^{new}_{CD}$. It is easy to see that after the movement of the tail $\lambda^{new}_{AB}$ remains the largest string. Thus, the configuration remains asymmetric and the coordinate system remains unchanged. it is easy to observe that after the movement in this case $\neg C_2\land C_4\land C_5\land C_6\land C_8 \land C_9$ remains true. 
    
    Next, suppose after one movement of the tail the SER gets changed. This can occur in two ways: either when $m=n+1$ and the tail moves upward or when the tail is at $C$ and it moves rightwards. Suppose $m=n+1$ and the tail moves upward. Let the SER become $ABC'D'$ after the move. In Theorem~\ref{ph1}, it is shown that if $C_{10}$ is false and the tail moves upward then the configuration remains asymmetric and the coordinate system does not change. Suppose the tail is at $C$ and it moves rightwards. Let the new SER be $AB'C'D$. Then note that $B'$ and $D$ are empty but $A$ and $C'$ are occupied. Since before the move of the tail $m>n+1$, $AB'C'D$ is a non-square rectangle. Then we have to compare only two strings $\lambda^{new}_{AB'}$ and $\lambda^{new}_{C'D}$. It is easy to see that $\lambda^{new}_{AB'}$ remains the largest string. Thus, the configuration remains asymmetric, and the coordinate system remains unchanged. It is easy to show that $C_4\land C_6\land C_8$ remains true for both the movements. For both types of movements, $C_4 \land C_6\land C_8\land C_9$ remains true. If $C_5$ becomes false during the upward movement, the algorithm enters into Phase~I. And the tail moves one hop upward to make $C_5$ true. After that, the algorithm will again enter into Phase~III. Thus, after a finite number of movements of the tail robot $C_7$ becomes true.    
      
\end{proof}

\begin{theorem}\label{ph30}
 If at some time $t$ we have an asymmetric configuration $\mathcal{C}$ in phase~III such that $C_{10}$ is true, then after a finite number of moves by the tail $C_4\land C_5\land C_6 \land C_7 \land C_8 \land C_9 \land \neg C_2 $ becomes true. The configuration remains asymmetric during the movement of the tail.
\end{theorem}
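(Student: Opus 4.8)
The plan is to follow the same pattern as the proofs of Theorems~\ref{ph1}--\ref{ph3}: track the truth values of the conditions $C_2,C_4,\dots,C_9$ under the moves the tail makes in Phase~III when $C_{10}$ holds. The new feature, and the reason this statement (unlike Theorem~\ref{ph3}) does not claim the coordinate system is preserved, is that the agreed frame may change during the process; so the whole difficulty is to keep the configuration asymmetric while this happens.

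First I would fix the geometric picture at the moment this case is entered. Since $C_{10}$ holds, $\mathcal{C}'=\mathcal{C}\setminus\{\text{tail}\}$ is symmetric about a vertical line $\ell$, and since $\mathcal{C}$ is asymmetric the tail does not lie on $\ell$. As $C_4\wedge C_5\wedge C_6$ holds, the tail is the unique robot on the top row of the SER, the SER is non-square, and $y_{\text{tail}}$ is odd. Then I record the facts that come for free throughout this phase: a tail move relocates no inner robot, so $\mathcal{C}''$ is untouched and $\neg C_2$ persists; $\mathcal{C}'$ never changes, so $C_{10}$ and $\ell$ stay fixed; the head does not move and in every frame that can be elected the head sits at the elected corner, so $C_8$ keeps holding; and placing a robot at the far corner $C$, strictly above $\mathcal{C}'$, cannot create a symmetry, so $C_9$ holds whenever $C_4$ does.

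Next, analyse a single tail move in the two sub-cases. If $m=n+1$ the tail moves up: $C_4$ and $C_6$ clearly survive (the SER only grows vertically), $C_5$ may turn false, in which case the configuration re-enters Phase~I, one further upward step restores $C_5$, and Phase~III resumes --- the detour already justified in Theorems~\ref{ph1} and \ref{ph3}; the asymmetry and frame analysis of this upward step is the one in Theorem~\ref{ph1} for the $C_{10}$-true sub-case, where the leading corner passes to $B$, and the tail's $x$-coordinate (hence its being off $\ell$) is unchanged. If $m>n+1$ the tail moves left: $C_4,C_5,C_6$ survive (a horizontal move of the top-row robot changes neither its row nor $m$, and if the tail over-runs an edge the SER only widens, so $m>n$ stays). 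The delicate point is asymmetry: here I would use that every non-top row of $\mathcal{C}$ is a palindrome (those cells lie in the symmetric $\mathcal{C}'$), reduce the comparison of the relevant boustrophedon strings to the contribution of the single occupied cell on the top row, and deduce that after the move exactly one corner still carries a strictly largest string --- with the elected corner, hence the frame, possibly switching when the tail changes side of $\ell$. The crux, and what I expect to be the main obstacle, is to show the tail never lands \emph{on} $\ell$: otherwise $\mathcal{C}$ would inherit the reflection of $\mathcal{C}'$ about $\ell$ and become symmetric. I would settle this by tracking the parity relation between the width of $\mathcal{C}'$ and the position of $\ell$ that is forced by $C_5$ together with the split $m>n+1$ versus $m=n+1$, so that whenever the tail must pass $\ell$ it steps over it rather than onto it.

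Finally, termination: there are finitely many robots and finitely many target positions, so after finitely many tail moves (up or left, plus the finitely many Phase~I detours fixing $C_5$) the tail lies strictly beyond every robot of $\mathcal{C}'$ and every target position in the direction the current frame calls positive, which is precisely $C_7$. Together with the invariants above this yields $C_4\wedge C_5\wedge C_6\wedge C_7\wedge C_8\wedge C_9\wedge\neg C_2$, and asymmetry held at every intermediate configuration, as claimed.
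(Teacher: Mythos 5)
Your overall plan (track the conditions move by move and reuse Theorems~\ref{ph1} and \ref{ph3} for the upward steps) has the right shape, but the core of the leftward case is off in a way that breaks the argument. The paper's proof opens with the one structural fact that makes this case work: since $\mathcal{C}'$ is symmetric about the vertical line $L$ while $\mathcal{C}$ is not, the head and the tail lie on the \emph{same} side of $L$; the tail therefore moves left \emph{away} from $L$, and (with $C_5$ fixing the scanning direction of the tail's row) each such step makes $\lambda_{AB}$ strictly larger and $\lambda_{BA}$ strictly smaller, so the configuration stays asymmetric and the frame is unchanged until the tail leaves the SER past $D$. Your picture is the opposite: you have the tail approaching and ``stepping over'' $\ell$, and you make ``the tail never lands on $\ell$'' the crux, to be settled by an unexecuted parity argument. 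In the algorithm as designed there is nothing to step over --- the whole point of choosing the leftward direction in the $C_{10}$-true branch is that the tail recedes from the axis --- so you are attacking the wrong obstacle, and the same-side observation plus the resulting monotonicity of the two strings (which also disposes of asymmetry in one line) is missing from your write-up.

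The second gap is termination. $C_7$ is a statement about what lies to the \emph{right} of the tail's column, and you never explain how an unbounded sequence of leftward moves can make it true. The paper's mechanism is a single coordinate flip at the SER boundary: once the tail exits on the left, only $B$ and the tail's new corner are occupied corners, $\lambda_{BA'}$ becomes the unique largest string, the frame reverses (so the tail now occupies the rightmost column of the new frame), the paper asserts $C_{10}$ then becomes false, and Theorem~\ref{ph3} (the $C_{10}$-false branch, in which the tail moves right) finishes the job. You instead claim ``$\mathcal{C}'$ never changes, so $C_{10}$ and $\ell$ stay fixed,'' which directly contradicts the handoff the paper relies on; combined with your admission that the frame ``possibly switches,'' this leaves open exactly the livelock (the tail oscillating around the flip point, or marching left forever) that the reduction to Theorem~\ref{ph3} exists to exclude. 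To repair the proof you need (i) the same-side observation and the monotonicity of $\lambda_{AB}$ under leftward moves, and (ii) an explicit account of the flip at the SER boundary and of why the process then reduces to the already-proved $C_{10}$-false case; a brief justification of why $C_{10}$ fails after the flip would in fact strengthen the paper's own exposition, which asserts it without argument.
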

\begin{proof}
    Let $ABCD$ be the SER of the $\mathcal{C}$ with $\lambda_{AB}$ be the lexicographically largest string. Let $L$ be the line of symmetry of $\mathcal{C}$. Since $\mathcal{C}$ is asymmetric, the head and the tail must be on the same side of the $L$. Suppose that after one movement of the tail towards the left the SER remains the same. Since the height of the configuration is odd, $\lambda_{AB}^{new}$ becomes larger than before whereas $\lambda_{BA}^{new}$ becomes smaller than before. So, if after the move the tail does not reach $D$, then $\lambda_{AB}$ remains the largest string. Suppose, after the movement, the tail reaches $D$. Because $C$ is empty node, so $\lambda_{DC}$ is smaller than both $\lambda_{AB}$ and $\lambda_{BA}$. Thus, if the SER remains the same after the movement of the tail then $\lambda_{AB}$ remains the largest string. So the configuration remains asymmetric and the coordinate system does not change. 
    
    Suppose after the movement of the tail the SER changes. If the tail's movement upward is responsible for the change of the SER then using the same argument of Theorem~\ref{ph3} the configuration remains asymmetric and the coordinate system does not change. Suppose the tail's movement towards the left is responsible for the change of the SER. Then according to Phase~III, the SER is still a non-square rectangle. Let the new configuration be $A'BCD'$. Note that only, $B$ and $D'$ are occupied corners. Clearly, $\lambda_{BA'}$ is the larger one. Thus, the configuration remains asymmetric but the coordinate system changes. After this movement $C_{10}$ becomes false. Thus, after a finite number of moves towards left it reaches at the corner $D$. Then, the algorithm might enter Phase~I and after a finite move upwards the algorithm again enters into Phase~III. Then after one movement of the tail towards the left makes $C_{10}$ false so from Theorem~\ref{ph3}, after a finite number of moves by the tail $C_4\land C_5\land C_6 \land C_7 \land C_8 \land C_9 \land \neg C_2 $ becomes true. 
\end{proof}

\subsection{Correctness of Phase~IV}\label{ph4proof}

\begin{theorem}\label{ph4}
 If at some time $t$, we have an asymmetric configuration $\mathcal{C}$ in phase~IV, then after any movement of inner robots according to the function \texttt{rearrange}, the new configuration remains still asymmetric and the coordinate system remains unchanged. During these movements of inner robots, $C_4\land C_5\land C_6\land C_7 \land C_8$ remains true. After a finite number of movements of inner robots according to the function \texttt{rearrange} $C_2$ becomes true.
\end{theorem}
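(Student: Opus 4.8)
The plan is to reduce Phase~IV to the one-dimensional analysis of \textsc{ApfLine} already carried out in Lemmas~\ref{l1}, \ref{l2} and \ref{l3}. On entering Phase~IV the condition $C_4\land C_5\land C_6\land C_7\land C_8\land\neg C_2$ holds, so (exactly as assumed in function \texttt{Rearrange}) the SER is a non-square rectangle $ABCD$, the head sits at the leading corner $A$ (the origin), the tail sits at the opposite corner $C$, the corners $B$ and $D$ are empty, and, by $C_4$ and $C_7$, every inner robot and \emph{every} target position lies strictly below $H_t$ and strictly left of $V_t$, hence strictly inside $ABCD$; in particular no target sits on $B$ or $D$. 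First I would fix the arc-length $\sigma(\cdot)$ of a node of $ABCD$ along the boustrophedon path $\mathcal{P}$, measured from $A$. Since $\mathcal{P}$ is a Hamiltonian path of the box, $\sigma$ is a bijection onto $\{0,\dots,L-1\}$ ($L$ being the number of nodes of $ABCD$), the $\lambda_{AB}$-reading of any configuration with SER $ABCD$ is just its occupancy string ordered by $\sigma$, and the $\lambda_{CD}$-reading is the occupancy string ordered by $L-1-\sigma$.

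The key structural claim I would establish is that every single grid step prescribed by \texttt{Rearrange} to a robot $r_i$ is \emph{$\sigma$-monotone over an empty block}: the guard ``no other robot lies on the sub-path of $\mathcal{P}$ from $r_i$ to $t_i$'', together with the fact that the step never overshoots $t_i$, forces the destination node, and all nodes strictly $\sigma$-between the source and the destination, to be empty; moreover a ``down'' or ``left-adjacent-on-$\mathcal{P}$'' step strictly decreases $\sigma$, while an ``up'' or ``right-adjacent-on-$\mathcal{P}$'' step strictly increases $\sigma$ (a short case check on the parity of the current rung). Thus the vertical moves, although not $\mathcal{P}$-adjacent, behave exactly like \textsc{ApfLine} inner moves that jump over a block of empty cells of the line, and \texttt{Rearrange} is precisely the inner-robot part of \textsc{ApfLine} executed on $\mathcal{P}$, with the path's ``tail'' frozen at $C$ and the path's inner targets equal to the real inner target positions. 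I would then record that Proposition~\ref{prop1} and Lemmas~\ref{l1}, \ref{l2}, \ref{l3} go through verbatim when ``moves to its adjacent empty grid node towards an endpoint'' is weakened to ``jumps over a block of empty cells towards that endpoint'', since their proofs only use that a single $1$ is transported past a run of $0$'s.

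Granting this, the three assertions follow. For $C_4\land C_5\land C_6\land C_7\land C_8$: up-steps are blocked before reaching $H_t$ and right/left-adjacent steps before reaching $V_t$ (because all inner targets lie strictly inside, a short check against the preference rule ``move up unless it overshoots $t_i$''), down-steps are blocked by $y\ge 0$, and $\sigma$-moves stay inside the box; hence no robot leaves $ABCD$, so the SER stays $ABCD$, the path $\mathcal{P}$ is unchanged, the head stays at the origin, the tail stays at $C$, and nothing but the tail ever lands on or above $H_t$ or on or right of $V_t$. For asymmetry and coordinate invariance: $B$ and $D$ remain empty (no inner robot ever reaches $V_t\cup H_t$), so $\lambda_{AB}$ and $\lambda_{CD}$ begin with $1$ while $\lambda_{BA}$ and $\lambda_{DC}$ begin with $0$, and it suffices to keep $\lambda_{AB}$ strictly above $\lambda_{CD}$. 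A left-branch move is a $\sigma$-decreasing jump over empty cells, so by the jump-extended Proposition~\ref{prop1} it strictly increases $\lambda_{AB}$ and strictly decreases $\lambda_{CD}$, and $\lambda_{CD}^{new}\prec\lambda_{AB}^{new}$ follows as in Lemma~\ref{l1}. A right-branch move is taken only when no inner robot needs to go left; here I would replay Lemma~\ref{l2}, using that the target embedding makes the $\lambda_{AB}$-reading the lexicographically largest string of $\mathcal{C}_{target}$, that $C$ is strictly beyond $t_{target}$ in both coordinates (by $C_4\land C_7$), and that the head sits at $A=A'$ (which only helps), so that a Proposition~\ref{prop2}-type inequality shows the $\lambda_{AB}$-reading of the configuration ``all inner robots on their inner targets'' still strictly dominates its $\lambda_{CD}$-reading; the contradiction argument of Lemma~\ref{l2} then yields $\lambda_{CD}^{new}\prec\lambda_{AB}^{new}$ as long as $\neg C_2$. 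In either case $\lambda_{AB}^{new}$ remains the unique largest string, so the configuration stays asymmetric with unchanged leading corner and coordinate system. Finally, by the jump-extended Lemma~\ref{l3} the inner robots never collide or deadlock and, after finitely many moves, occupy their inner target positions, i.e.\ $C_2$ becomes true.

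The main obstacle is the second paragraph together with the right-branch part of the third: certifying that the vertical (``down''/``up'') steps of \texttt{Rearrange}, which are not $\mathcal{P}$-adjacent, are legitimate ``jump-over-empty-cells'' line moves so that the one-dimensional machinery transfers, and then re-running the delicate rightward-sweep argument of Lemma~\ref{l2} with the twist that the path's tail is pinned at the corner $C$ rather than at $t_{target}$. The box-containment and SER-invariance checks, and the fact that $B,D$ stay empty, are comparatively routine, being immediate from $C_4$ and $C_7$.
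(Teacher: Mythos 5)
Your proposal is correct and follows essentially the same route as the paper's proof: show that the corners $B$ and $D$ stay empty so that only $\lambda_{AB}$ and $\lambda_{CD}$ --- the two end-readings of the path $\mathcal{P}$ --- need to be compared, and then transfer Lemma~\ref{l1}, Lemma~\ref{l2} and Lemma~\ref{l3} from the line setting. Your explicit treatment of the vertical moves as $\sigma$-monotone jumps over a block of empty path-cells (and of the tail being pinned at $C$ rather than at $t_{target}$ in the rightward sweep) addresses exactly the step the paper dismisses with ``it is evident,'' so your version is, if anything, the more careful one.
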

\begin{proof}
    Let $ABCD$ be the SER of $\mathcal{C}$ with $\lambda_{AB}$ as the largest string. First, we show that no inner robot moves to the $BC$ line. On the contrary, Suppose an inner robot $r_i$ lands at a node $v_2$ on the $BC$ line from a point $v_1$ at the left side of $v_2$. There are two cases: $v_1$ is at the \texttt{left} of $v_2$ or $v_1$ is at the \texttt{right} of $v_2$. Suppose, $v_1$ is at the \texttt{left} of $v_2$. Then consider the downwards nodes $u_1$ and $u_2$ of $v_1$ and $v_2$ respectively. $t_i$ cannot be at $u_2$ or $v_2$, because there is no target position on $BC$ line segment. If $t_i$ is at $u_1$ then according to \texttt{rearrange} $r_i$ shall move to $u_1$. Otherwise, $t_i$ can be at the \texttt{left} of the $u_1$. In that case, if $u_1$ is occupied by a robot then there is a robot in between $r_i$ and $t_i$ in the path $\mathcal{P}$. In that case, $r_i$ does not move according to \texttt{rearrange}. Otherwise, $u_1$ is unoccupied, then according to \texttt{rearrange} $r_i$ shall move to $u_1$. Thus, in all the cases, $r_i$ does not move to $v_2$. Similarly, if $v_1$ is at the \texttt{right} of $v_2$, then also we can show that that $r_i$ does not move to $v_2$. Next, since $C_4$ is true, there is no target position of the line segment $DC$ or above it. For this reason, we can show that no inner robot moves to $CD$ line segment according to \texttt{rearrange}.

    Suppose $r_i$ is an inner robot. Note that in $\mathcal{C}$, $A$ and $C$ corners of SER are occupied and others are not. Movements of inner robots are designed such that SER does not change with that. Because $C_7$ ($C_4$) is true, there is not any other robot except the tail or any target position on the $BC$ ($CD$) line. We showed that no inner robot ever moves on to $BC$ line segment in function \texttt{rearrange}. So, $B$ remains unoccupied after the movement of $r_i$. Next, we showed that no inner robot moves to the $CD$ line segment, so $D$ remains unoccupied after the movement of $r_i$. Thus, it is sufficient to consider only two strings $\lambda_{AB}$ and $\lambda_{CD}$. 
    
    Note that, strings $\lambda_{AB}$ and $\lambda_{CD}$ are the two binary strings of the path $\mathcal{P}$ from the different ends. Now, the movements of the inner robots are designed in such a way that if $t_i$ is at the \texttt{left} (\texttt{right}) of $r_i$ then $r_i$ moves to a point that is at the \texttt{left} (\texttt{right}) of the $r_i$. This is an adoption of the Algorithm~\ref{algo:line}. Thus, from Lemma~\ref{l1} and Lemma~\ref{l2}, it is evident that during the movement of the inner robots, $\lambda_{AB}$ remains the larger string. Thus, the configuration remains asymmetric and the coordinate system does not change. Since the head and the tail do not move at all, all inner robot moves inside the SER formed by the head and the tail, and no inner robot moves onto line-segment $BC$ or $CD$, so $C_4\land C_5\land C_6\land C_7 \land C_8$ remains true throughout the movements of the inner robots. Also from Lemma~\ref{l3}, after a finite time, all inner robots take their respective positions making $C_2$ true.
\end{proof}

\subsection{Correctness of Phase~V}\label{ph5proof}

\begin{theorem}\label{ph5}
 If at some time $t$, we have an asymmetric configuration $\mathcal{C}$ in phase~V, then after a finite number of movements of the tail $C_2 \land C_3\land C_4\land C_5\land C_6 \land C_8$ becomes true. If at this point the configuration has vertical symmetry then $C_1$ must be true.
\end{theorem}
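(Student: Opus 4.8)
\noindent\emph{Proof plan.}
The plan is to follow the two cases built into the description of Phase~V, namely $C_{10}$ false and $C_{10}$ true. In both cases the head and every inner robot stay put, so $\mathcal{C}''$ and the embedding of the target never change; this already gives $C_2$ throughout, and reduces the work to four things: (a)~the tail never leaves the horizontal line $H_t$, which keeps $C_5$ and, since $C_4$ already certifies that the tail is strictly above every other robot and every target position, keeps $C_4$; (b)~the leading corner stays at $A$, so that the coordinate system and hence $C_8$ are unchanged and the tail keeps being the tail; (c)~the width of the SER never grows, which preserves $C_6$; and (d)~after finitely many hops the tail's $x$-coordinate equals that of $t_{target}$, i.e. $C_3$ holds. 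Item~(a) is immediate because the tail only moves horizontally, and item~(c) holds because, by the end of Phase~III, the tail is the rightmost robot and $t_{target}$ lies strictly to its left (no target position is on or right of $V_t$, and Phase~IV does not move the tail), so the tail only walks left and the $x$-extent of the configuration can only shrink; in particular $C_6$ is not destroyed because the height, which is large thanks to Phases~I and~III, is unchanged.

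For item~(b) when $C_{10}$ is false, I would argue exactly as in the proofs of Theorems~\ref{ph1} and~\ref{ph3} that the boundary string rooted at $A$ stays strictly lexicographically largest: the tail alone occupies the top rows (several of which are empty, by the upward slack created in Phases~I and~III), so the strings rooted at the two top corners of the SER start with a long run of zeros, while the string rooted at $A$ meets the head immediately and the rest of the robot mass soon after, and the string rooted at the bottom-right corner stays strictly smaller precisely because $\mathcal{C}'$ has no vertical symmetry. Hence the tail keeps walking left toward $T'$, it remains the tail, and after finitely many steps $C_3$ becomes true with $C_2\wedge C_4\wedge C_5\wedge C_6\wedge C_8$ intact.

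When $C_{10}$ is true, $\mathcal{C}'$ has a vertical axis of symmetry $L$, and the only feature making $\mathcal{C}$ asymmetric is the position of the tail relative to $L$. The Phase~V rule (compare the tail position $T$ and the target column $T'$ against the point $C''$ where the right edge of the SER of $\mathcal{C}'$ meets $H_t$, and against the segment $DE$ of Fig.~\ref{fig:phV}) is designed so that the tail is allowed to walk toward $T'$ only while doing so does not carry it across $L$; in that sub-case the string comparison of the previous paragraph again pins the leading corner at $A$ and $\mathcal{C}$ stays asymmetric until $C_3$ holds. Otherwise the tail is first sent leftward; after finitely many hops it clears the right edge of the SER of $\mathcal{C}'$ (or reaches the top-left corner $D$), at which moment $C_{10}$ becomes false --- and if that crossing re-anchors the coordinate system it does so in the controlled manner already analysed for the analogous situation in Theorem~\ref{ph30} --- so we drop into the previous case and reach $C_3$ after finitely many further leftward hops, still with $C_2\wedge C_4\wedge C_5\wedge C_6\wedge C_8$ true.

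For the conditional part, suppose that at the instant $C_3$ first becomes true the configuration $\mathcal{C}$ is vertically symmetric about a line $L$. By $C_4$ the tail is the unique robot at its height, so the reflection across $L$ fixes the tail and $L$ is the vertical line through the tail; by $C_3$ this is the column of $t_{target}$, so $t_{target}$ lies on $L$. Deleting the tail (which lies on $L$) preserves the symmetry, so $\mathcal{C}'=\{\text{head}\}\cup\mathcal{C}''$ is symmetric about $L$, and by $C_2$ this says $\{\text{head}\}\cup\mathcal{C}''_{target}$ is symmetric about $L$. Since the head sits at the origin, which is the corner $A$ of the SER of $\mathcal{C}$ by $C_8$, its mirror image is the opposite corner, which must therefore also be occupied; together with the facts that $t_{target}$ lies on $L$ and that, by the target-embedding rule, $h_{target}$ is the first and $t_{target}$ the last occupied node of the canonical scan (which also fixes the SER of the target, its leading corner, and hence the whole embedding), this forces the head's position to coincide with $h_{target}$. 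Consequently $\mathcal{C}'=\{\text{head}\}\cup\mathcal{C}''=\{h_{target}\}\cup\mathcal{C}''_{target}=\mathcal{C}'_{target}$, i.e. $C_1$ is true. The step I expect to require the most care is the $C_{10}$-true case --- tracking how the boundary strings evolve as the tail approaches and crosses the symmetry axis of $\mathcal{C}'$, and checking that no intermediate configuration is symmetric except possibly the final one --- together with the embedding bookkeeping in the last paragraph that forces the head onto $h_{target}$ rather than onto some other boundary node; items~(a), (c), (d) and the $C_{10}$-false case are routine adaptations of Lemmas~\ref{l1}--\ref{l3} and of Theorems~\ref{ph1} and~\ref{ph3}.
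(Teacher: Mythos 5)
Your overall architecture coincides with the paper's: you split on $C_{10}$, use boundary-string comparisons to keep the leading corner (and hence $C_8$ and the tail's identity) fixed when $C_{10}$ is false, run the $C''$/$DE$ case analysis when $C_{10}$ is true, and derive the final $C_1$ claim by arguing that the mirror of the head's corner is an occupied inner-target position, which by $C_2$ and the symmetric embedding forces the head onto $h_{target}$. That last argument is essentially the one the paper gives, at about the same level of detail.

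There is, however, one step in your $C_{10}$-true case that does not hold as written. You claim that once the tail clears the right edge of the SER of $\mathcal{C}'$ (or reaches $D$), ``$C_{10}$ becomes false'' and you can then fall back on the $C_{10}$-false analysis. But $C_{10}$ is a property of $\mathcal{C}'=\mathcal{C}\setminus\{\text{tail}\}$, and in Phase~V only the tail moves; the paper explicitly establishes that across these crossings the tail remains the tail while only the \emph{head} re-anchors to the robot at $B'$. Hence $\mathcal{C}'$ is literally the same point set before and after, it is still symmetric about the same vertical line, and $C_{10}$ stays true throughout Phase~V. The paper's proof never claims otherwise: instead it shows that after the coordinate flip the pair $(T,T')$ lands in one of the two benign sub-cases (both on $DE$, or both to the right of $C''$ in the new frame), and termination follows from those. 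Your reduction therefore needs to be replaced by that ``reduce to a benign sub-case after the flip'' argument. A smaller inaccuracy: your item~(c) asserts the tail only ever walks left so the $x$-extent can only shrink, but in the sub-case where $T$ lies on $DE$ and $T'$ lies to the right of $C''$ the rule deliberately sends the tail leftward past $D$, which widens the SER before the frame flips; so preservation of $C_6$ cannot be argued purely by monotone shrinkage and instead rests on the vertical slack created in Phases~I and~III, as you note for the height.
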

\begin{proof}
    Let $ABCD$ be the SER of $\mathcal{C}$ with $\lambda_{AB}$ as the largest string. Let $AB'C'D'$ be the SER of the $\mathcal{C}'$. Suppose the tail is at $T$. Suppose $C_{10}$ is not true. Since $C_4$ is true, for this case, throughout the movement of the tail towards $T'$ the configuration remains asymmetric and the coordinate system remains unchanged. After a finite number of movements, $C_3$ becomes true. Next, suppose $C_{10}$ is true. In this case, if $T$ and $T'$ both are on the $DE$ line segment then it is easy to see that the coordinate system remains invariant during the movement of the tail. Also, if both $T$ and $T'$ are at the right side of the $C''$, then the tail remains the tail while the movement of the tail towards $T'$.

    Next, suppose $T$ is on the $DE$ line segment but $T'$ is at the right of the $C''$. In this case, the tail moves leftward. When the tail moves at the left of the $D$, then the coordinate system flips. The tail remains the tail but the robot at $B'$ becomes the head. Then it reduces it to one of the previous cases. Next, suppose $T'$ is on the $DE$ line segment but $T$ is at the right of the $C''$. In this case, the tail moves leftwards towards $T'$. When the tail reaches $C''$, the coordinate system flips and the robot at $B'$ becomes the head. Thus, it again reduces to one of the previous cases. Therefore, $C_3$ becomes true after a finite number of movements of the tail.

    After $C_3$ becomes true, if the configuration has a vertical symmetry then the $T'$ must be $E$ which coincides with a grid node. Before the tail moved on $E$ the configuration was asymmetric. Without loss of generality let $A$ be the leading corner. Since $C_8$ was true in this phase, $A$ was occupied by the head robot. So after $C_3$ becomes true, $A$ is occupied. Due to the symmetry, $B'$ must be occupied by a robot. According to the embedding of the target pattern, both the left and the right bottom corners also have target positions. If possible let both $A$ and $B'$ are not unoccupied by any target position. This gives, at this point three robots three robots are not at their target position, which contradicts the assumption $C_2=$ true according to which $k-2$ inner robots are at their target positions.
    Since $A$ and $B'$ both are occupied by the robots, the $h_{target}$ must be occupied. Thus, $C_1$ is true.
\end{proof}

\section{Correctness of Phase~VI}\label{ph6proof}

\begin{theorem}\label{ph6}
    If at some time $t$, we have an asymmetric configuration in phase~VI, then after a finite number of movements by the head towards $h_{target}$, $C_1\land C_3\land C_4\land C_5\land C_6$ becomes true. If the configuration becomes such that it has a vertical symmetry then $C_{target}$ has the same.
\end{theorem}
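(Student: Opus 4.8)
The plan is to follow the pattern of Theorems~\ref{ph2} and \ref{ph5}: fix the coordinate system, show by a comparison of the four corner strings that every rightward step of the head keeps the leading corner, the embedding and the relevant conditions intact, conclude termination, and finally treat the single symmetric situation that can occur. \emph{Set-up.} Let $ABCD$ be the SER, $A$ the leading corner (origin) and $\lambda_{AB}$ the lexicographically largest string; since $C_6$ holds, the SER is a non-square rectangle and only $\lambda_{AB},\lambda_{BA},\lambda_{CD},\lambda_{DC}$ compete. Because the head does not move in Phases~III--V, $C_8$ holds when Phase~VI begins, so the head sits at $A$. By definition $h_{target}$ is the first $1$ of $\lambda_{A'B'}$, whose scan starts along the bottom edge $A'B'$, so $h_{target}$ lies on the $x$-axis; combined with $\neg C_1\land C_2$ this forces $h_{target}$ strictly to the right of the origin, hence the target corner $A'$ is empty, hence (by maximality of $\lambda_{A'B'}$) all four corners of the target SER are empty, $n'\ge 3$, and $x(h_{target}),x(t_{target})\in[1,n'-2]$.

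The crux of the argument is a pair of inequalities I would extract from the maximality of $\lambda_{A'B'}$: $x(h_{target})\le x(t_{target})$ and $x(h_{target})+x(t_{target})\le n'-1$, and likewise $x(h_{target})+\rho\le n'-1$ where $\rho$ is the $x$-coordinate of the rightmost occupied node of the bottom row. These come from comparing the leading-zero counts of $\lambda_{A'B'}$ with those of $\lambda_{B'A'},\lambda_{C'D'},\lambda_{D'C'}$, using that $t_{target}$ is the rightmost (resp.\ leftmost) occupied node of the top row when $m'$ is odd (resp.\ even) and that a leftmost node has $x$-coordinate at most that of a rightmost one. What I really need from them is that the head never has to slide past the vertical line carrying the tail: throughout the phase $x(\text{head})\le x(h_{target})\le x(t_{target})$.

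\emph{Invariants of one step.} The tail does not move, so $C_5$ is untouched; nothing except the tail (and no target position) is on or above $H_t$ and the head stays on $y=0$, so $C_4$ persists. An inner robot occupies the left-edge target of the pattern (which is neither a corner nor $h_{target}$ nor $t_{target}$) and another occupies the right-edge target, so the SER stays equal to the fixed rectangle of width $n'$ and height $y(H_t)+1>m'\ge n'$; hence $C_6$ persists and the corners stay put. As the head lies to the left of every bottom-row target and of the tail's column, $\lambda_{AB}$ begins with exactly $x(\text{head})$ zeros, while the inequalities above (together with $x(\text{head})<x(h_{target})$, valid until $C_1$ is reached) make $\lambda_{BA},\lambda_{CD},\lambda_{DC}$ each begin with strictly more zeros; so $A$ stays the leading corner, the embedding is unchanged, $C_3$ persists, and the configuration stays asymmetric. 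Since the head advances one node per activation, after finitely many steps it reaches $h_{target}$, whence $C_2$ gives $\mathcal{C}'=\mathcal{C}'_{target}$, i.e.\ $C_1$, while $C_3\land C_4\land C_5\land C_6$ still holds.

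\emph{The symmetric case.} Suppose at some moment $\mathcal{C}$ has a nontrivial symmetry. By $C_4$ the tail is the unique robot strictly above all others, so the symmetry fixes it; a horizontal or diagonal reflection, or any rotation, about the tail would send some robot above $H_t$ or far outside the strip $0\le x\le n'-1$, so the symmetry must be the reflection about the vertical line through the tail, namely $x=x(t_{target})$. By the third paragraph this cannot happen before the head has reached $h_{target}$, so $\mathcal{C}'=\mathcal{C}'_{target}$ is symmetric about $x=x(t_{target})$; as $t_{target}$ lies on that line, $\mathcal{C}_{target}=\mathcal{C}'_{target}\cup\{t_{target}\}$ is symmetric about it too, i.e.\ $C_{target}$ has a vertical symmetry, as claimed. \emph{Main obstacle.} The delicate point is precisely the corner-string comparison, and within it the fact that the head never needs to move past the column of the tail: this is what rules out the leading corner jumping to $C$ or $D$, which would flip the vertical orientation and destroy $C_4$. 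It is exactly here that the maximality of the target string $\lambda_{A'B'}$ (yielding $x(h_{target})\le x(t_{target})$) and the large height $y(H_t)$ produced in Phase~I are used.
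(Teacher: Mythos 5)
Your core argument for the rightward case---pinning down the four corner strings by leading-zero counts, with the inequalities $x(h_{target})\le x(t_{target})$ and $x(h_{target})+x(t_{target})\le n'-1$ extracted from the maximality of $\lambda_{A'B'}$---is a more explicit version of what the paper does (the paper instead compares against the hypothetical final configuration $\mathcal{C}_h$ and invokes the maximality of $\lambda^{target}_{AB}$ there), and that part checks out. But there is a genuine gap at the very start: you justify $C_8$ by ``the head does not move in Phases~III--V,'' which is an argument from execution history, whereas a phase is determined solely by the currently visible configuration. The Phase~VI predicate $\neg C_1\land C_2\land C_3\land C_4\land C_5\land C_6$ does not contain $C_8$, so an initial configuration can sit directly in Phase~VI with the head (the leftmost occupied node of the bottom row) not at the corner $A$, and possibly to the right of $h_{target}$. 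The paper explicitly treats the case ``$H'$ is at the left of $H$'' by reusing the argument of Theorem~\ref{ph2}; your proof silently excludes it, and your entire leading-zero computation (which assumes the head starts at column $0$ and satisfies $x(\mathrm{head})<x(h_{target})$ throughout) does not cover it. You need either to prove that Phase~VI forces $C_8$ (which the predicate does not give you) or to add the leftward branch.

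A second, smaller weakness is in the symmetry classification: from ``the tail is the unique robot strictly above all others'' you conclude ``the symmetry fixes it,'' but a horizontal reflection or a $180^{\circ}$ rotation of the SER maps the top row to the bottom row, so it would send the tail to a bottom-row robot rather than fix it; ruling those symmetries out requires an extra argument (e.g.\ comparing the single-robot top row with the bottom row, or using the emptiness of the rows between $y=m'-1$ and $H_t$), not the fixed-point claim. To be fair, the paper's own treatment of the tie case is also terse---it simply asserts that $\lambda^{target}_{AB}$ can fail to be strictly largest in $\mathcal{C}_h$ only when $\mathcal{C}_{target}$ has a vertical symmetry---so your conclusion for the vertical case (symmetry of $\mathcal{C}'_{target}$ about the tail's column plus $t_{target}$ lying on that axis) is the right one; it is only the elimination of the non-vertical symmetries that is unsupported as written.
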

\begin{proof}
    Suppose $ABCD$ is the SER of the current configuration $\mathcal{C}$ with $\lambda_{AB}$ as the largest string. Suppose the head robot is at $H$ in $\mathcal{C}$ and $h_{target}$ is at $H'$. First, suppose the $H'$ is at the left of the $H$. Then following from the Theorem~\ref{ph2}, when the head moves towards the left, the configuration remains asymmetric and the coordinates system remains the same. Suppose the $H'$ is at the right of the $H$. Suppose the tail is at $T$ and the $t_{target}$ is at $T'$. According to the target embedding, the SER of the embedded target pattern should also be $ABC'D'$ with $\lambda^{target}_{AB}$ as a lexicographically largest string, where $T'$ is on the line segment $C'D'$. Since $C_3$ is true, $T$ and $T'$ are at the same vertical line. If the tail moves from $T$ to $T'$ and the head reaches from $H$ to $H'$, then the target is formed. Let $\mathcal{C}_h$ be the configuration if the head moves from $H$ to $H'$. Then $\lambda^{target}_{AB}$ will be largest string in $\mathcal{C}_h$. $\lambda^{target}_{AB}$ may not be the strictly largest string in $\mathcal{C}_h$, only when $C_{target}$ has a vertical symmetry. In both cases, while the movement of the head the configuration remains asymmetric and the coordinate system does not change. Therefore, after a finite number of moves by the head $C_1$ becomes true. And, throughout the movements of the head, $C_3\land C_4\land C_5\land C_6$ remains true.
\end{proof}

\section{Correctness of Phase~VII}\label{ph7proof}

\begin{theorem}\label{ph7}
     If at some time $t$ we have a configuration $\mathcal{C}$ within phase~VII, then after a finite number of movements of the tail towards $t_{target}$, $C_0$ becomes true. 
\end{theorem}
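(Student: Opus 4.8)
The plan is to show that in Phase~VII the tail can walk straight down to $t_{target}$ without collision, without ever disturbing the agreed coordinate system, and that it gets there after boundedly many moves, at which instant $\mathcal{C}=\mathcal{C}_{target}$ and $C_0$ holds. First I would unpack the hypotheses. Being in Phase~VII means $\neg C_0\land C_1\land C_3$, and by Theorem~\ref{ph6} also $C_4\land C_5\land C_6$. From $C_1$ every robot other than the tail sits on an inner target position or on $h_{target}$; from $C_3$ the tail shares the $x$-coordinate of $t_{target}$; and $C_4$ together with $\neg C_0$ (so the tail is not already on $t_{target}$, using $C_1$) forces the tail to lie strictly above $t_{target}$ on their common vertical line $V_t$. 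Hence the prescribed move is simply ``the tail descends one step along $V_t$''.

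Next I would establish that the descent path is robot-free. By the target embedding, $t_{target}$ is the last $1$ of the snake string $\lambda_{A'B'}$, so it lies in the topmost occupied row of $\mathcal{C}_{target}$; consequently no target position lies in any row strictly above that of $t_{target}$, and in particular none on $V_t$ above $t_{target}$. By $C_1$ this means no robot other than the tail occupies any cell of $V_t$ strictly between the tail and $t_{target}$, so every descent step is collision-free. Moreover, until the final step the tail remains the unique robot in the topmost occupied row of the current configuration, hence it stays the last $1$ of whatever corner string is elected and keeps being recognized as the tail while $\mathcal C'=\mathcal C'_{target}$ (i.e. $C_1$) and $C_3$ persist.

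The core of the argument is invariance of the global coordinate system under each descent step: the leading corner must not change, so that the destination $t_{target}$ stays fixed and $C_1\land C_3$ are maintained. Since we entered Phase~VII with the standard conventions and $C_1$ true, the leading corner is one of the two bottom corners of the SER. I would show this is preserved by mirroring the argument of Theorem~\ref{ph1}: the one-step descent, together with the accompanying one-row shrink of the SER (handled as in the $m>n+1$, $m=n+1$, $m=n$ boundary cases of Phases~I and~III), alters only the top one or two rows, which the bottom-corner strings read last, while the top-corner strings read that almost-empty top row first and are therefore dominated by $\lambda_{AB}$, which reads an occupied bottom row first; hence the lexicographic order among the candidate strings is unchanged. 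The single exceptional situation is when $\mathcal{C}_{target}$ has a vertical reflection symmetry whose axis is $V_t$ (equivalently $t_{target}$ lies on it): then $\mathcal C'_{target}$ and every intermediate configuration are symmetric about $V_t$, so no unique leading corner exists, but the tail is still pinned down as the unique robot lying on the axis of symmetry strictly above all others, the descent preserves the symmetry, and the reasoning goes through verbatim; if instead $t_{target}$ is off-axis or $\mathcal{C}_{target}$ has no vertical symmetry, the strictly-topmost-robot property rules out every nontrivial symmetry of the intermediate configurations, so they stay asymmetric --- exactly the dichotomy anticipated by the closing sentences of Theorems~\ref{ph5} and~\ref{ph6}.

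Finally, for termination: the tail must descend exactly $y(\text{tail})-y(t_{target})$ steps, a fixed finite number, and under the fair scheduler it is activated infinitely often, so after finitely many of its LCM cycles it occupies $t_{target}$; since only the tail ever moved and $\mathcal C'=\mathcal C'_{target}$ throughout, the resulting configuration is $\mathcal C'_{target}\cup\{t_{target}\}=\mathcal C_{target}$, i.e. $C_0$ becomes true. I expect the main obstacle to be the invariance step of the previous paragraph --- carefully checking that neither the one-step descent nor the attendant shrinking of the SER can promote a different corner to leading-corner status, and cleanly quarantining the vertically symmetric target into the single harmless case described above.
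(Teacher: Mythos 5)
Your proposal covers only part of the theorem, and the part it omits is precisely where the paper has to work hardest. You derive ``the tail lies strictly above $t_{target}$'' from $C_4$, which you import via Theorem~\ref{ph6}. But Phase~VII is defined by the predicate $\neg C_0\land C_1\land C_3$ alone; by Lemma~\ref{phase0} an \emph{initial} configuration can already satisfy this predicate without ever passing through Phase~VI, and then $C_4$ need not hold. Concretely, all robots but one may start on the inner target positions and $h_{target}$, with the remaining (tail) robot on the correct vertical line but strictly \emph{below} $t_{target}$. In that case the tail must ascend, your ``descend along $V_t$'' analysis does not apply, and the coordinate-system invariance is genuinely delicate: the paper splits on whether $C_{10}$ holds and, when it does, invokes Theorem~\ref{ph1} to argue that the upward move may flip the coordinate system while the tail nevertheless remains the tail and $C_1$ survives because $\mathcal{C}'$ is itself vertically symmetric. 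None of this appears in your argument, so as written the proof does not establish the theorem for all configurations in Phase~VII.

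For the case you do treat (tail above $t_{target}$, descending), your argument matches the paper's in substance: $t_{target}$ is the last $1$ of the target's snake string, hence topmost among target positions, so the descent path is free of robots, and the leading corner is preserved because the competing corner strings begin with the sparse top rows; the vertically symmetric target is quarantined exactly as in the closing remarks of Theorems~\ref{ph5} and~\ref{ph6}. To repair the proposal you need to add the ascent case, either by reproving invariance for upward moves (distinguishing $C_{10}$ true/false as the paper does) or by citing Theorem~\ref{ph1} directly.
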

\begin{proof}
    Suppose $ABCD$ is the SER of the current configuration $\mathcal{C}$. Let $\mathcal{C}$ be asymmetric. For such a case, let $\lambda_{AB}$ be the lexicographically largest string. Suppose the tail is at $T$ and $t_{target}$ is at $T'$. If $T'$ is above $T$, then there can be two cases: $C_{10}$ is true or not. If $C_{10}$ is true, then in Theorem~\ref{ph1} it is shown that when the tail moves upward the coordinate flips, but the tail remains the tail. Even if the coordinate system flips, due to the symmetry of $\mathcal{C}'$, $C_1$ remains true. So after a finite number of movements of the tail upwards, the tail reaches $T'$ resulting in $C_0=$ true. Next suppose $C_{10}$ not true. Then from Theorem~\ref{ph1}, it is shown that when the tail moves upward, the configuration remains asymmetric and the coordinate system remains unchanged. Therefore after a finite number of movements of the tail upwards, the tail reaches $T'$ resulting in $C_0=$ true. Next suppose, $T'$ is below $T$. Let $ABC'D'$ be the SER of the $\mathcal{C}_{target}$, then $T'$ is on the $C'D'$ and $\lambda_{AB}$ is the largest string in $\mathcal{C}_{target}$. Since $C_1$ is true so, all the target positions are occupied but $t_{target}$. Thus, it is easy to see that if $t_{target}$ moved upwards then $\lambda_{AB}$ becomes the strictly largest string in new $\mathcal{C}_{target}$. Thus, when the tail moves from $T$ to $T'$, the $\lambda_{AB}$ remains the strictly largest string.

    Next, let $\mathcal{C}$ be symmetric. Since the initial configuration is asymmetric, so the symmetric configuration can be formed via Phase~V and Phase~VI. Thus the symmetry is a vertical symmetry, resulting in $\lambda_{AB}=\lambda_{BA}$ as the strictly largest string. For both the Phase~V and Phase~VI, it terminates with $C_4=$ true. Therefore, the $T'$ is at the downward of the tail. Using the similar argument as above, $\lambda_{AB}=\lambda_{BA}$ remains the strictly largest string after each movement of the tail downwards until it reaches $T'$.   
\end{proof}


\begin{lemma}\label{phase0}
     Any asymmetric initial configuration, satisfying $C_0=$ false, falls under one and exactly one of the seven phases of the proposed algorithm. 
\end{lemma}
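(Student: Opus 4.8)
The plan is to argue by exhaustive case analysis on which of the ten boolean conditions $C_1,\dots,C_{10}$ hold for the given asymmetric configuration $\mathcal{C}$ (with $C_0$ false, so the target is not yet formed). I would first record the phase-membership predicates from the phase descriptions:
\begin{align*}
\text{Phase I}:\ &\neg(C_4\land C_5\land C_6)\land\neg(C_1\land C_3),\\
\text{Phase II}:\ &(C_4\land C_5\land C_6\land\neg C_8)\land\big((C_2\land\neg C_3)\lor\neg C_2\big),\\
\text{Phase III}:\ &C_4\land C_5\land C_6\land C_8\land\neg C_2\land\neg C_7,\\
\text{Phase IV}:\ &C_4\land C_5\land C_6\land C_7\land C_8\land\neg C_2,\\
\text{Phase V}:\ &C_2\land C_4\land C_5\land C_6\land C_8\land\neg C_3,\\
\text{Phase VI}:\ &\neg C_1\land C_2\land C_3\land C_4\land C_5\land C_6,\\
\text{Phase VII}:\ &\neg C_0\land C_1\land C_3.
\end{align*}
The first step is \emph{mutual exclusivity}: I would check pairwise that no two predicates can hold simultaneously. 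Most pairs are separated by an explicit literal — e.g. Phases III and IV are separated by $C_7$ vs.\ $\neg C_7$; II is separated from III, IV, V by $\neg C_8$ vs.\ $C_8$; I is separated from II–VI by $\neg(C_4\land C_5\land C_6)$ vs.\ $C_4\land C_5\land C_6$. The only genuinely delicate pairs are those involving Phase~VII, since its predicate $\neg C_0\land C_1\land C_3$ does not mention $C_4,C_5,C_6,C_8$. Here I would use the fact that $C_1$ (i.e.\ $\mathcal{C}'=\mathcal{C}'_{target}$) forces $C_2$ and, more importantly, that $C_1$ together with $C_3$ makes the Phase~IV, V predicates fail ($\neg C_2$ is false) and the Phase~VI predicate fail ($\neg C_1$ is false); against Phase~I, note that if $C_1\land C_3$ holds then $\neg(C_1\land C_3)$ is false, so Phase~I is excluded; against Phase~II/III, one observes $C_1\land C_3$ contradicts their required $\neg C_3$ or $\neg C_2$ clauses. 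This disposes of exclusivity.

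The second and main step is \emph{exhaustiveness}: every asymmetric $\mathcal{C}$ with $C_0$ false satisfies at least one predicate. I would split on whether $C_4\land C_5\land C_6$ holds. If it fails, then since $C_0$ is false we are done provided $\neg(C_1\land C_3)$ holds; so the crux is to show $C_1\land C_3\land\neg(C_4\land C_5\land C_6)$ still lands somewhere — and indeed $C_1\land C_3$ implies the Phase~VII predicate $\neg C_0\land C_1\land C_3$ directly. If instead $C_4\land C_5\land C_6$ holds, I would branch on $C_8$: if $\neg C_8$, then I must show the disjunction $(C_2\land\neg C_3)\lor\neg C_2$ always holds — but this disjunction is a tautology in $C_2,C_3$ \emph{except} when $C_2\land C_3$ holds; in that residual case $C_2\land C_3\land C_4\land C_5\land C_6\land\neg C_8$ must be shown impossible or placed, and here I would invoke that $C_2\land C_3$ with the head not yet at the origin is exactly handled by pushing into Phase~VII via the observation that once the inner robots and the tail's $x$-coordinate are correct one still needs $C_8$; more carefully, I expect the argument to show $C_2\land C_3$ forces $C_1$ is either true (then Phase~VII) or the head still needs to move and $\neg C_8$ keeps us in Phase~II once we note Phase~II's disjunct $C_2\land\neg C_3$ — so the genuinely uncovered sliver $C_2\land C_3\land\neg C_8$ needs the separate observation that $C_3$ plus $C_2$ plus head-not-at-origin cannot occur in a valid run, or is subsumed. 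If $C_8$ holds, I branch on $C_2$: if $\neg C_2$, then branch on $C_7$ (Phase~IV if $C_7$, Phase~III if $\neg C_7$), covering everything; if $C_2$, branch on $C_3$: if $\neg C_3$ we are in Phase~V; if $C_3$, then $C_2\land C_3\land C_4\land C_5\land C_6\land C_8$, and here branch on $C_1$ — if $\neg C_1$ this is exactly Phase~VI, and if $C_1$ then $C_1\land C_3$ gives Phase~VII (consistent with $C_0$ false).

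I expect the main obstacle to be the handful of ``residual slivers'' in the case tree — combinations like $C_2\land C_3\land\neg C_8$ while $C_4\land C_5\land C_6$ holds — which are not literally matched by any single phase predicate. For these I would need to argue, using the target-embedding conventions and the definitions of $C_1,C_2,C_3,C_8$ (in particular that $C_8$ asserts the head sits at the leading corner/origin, which is where the head-target $h_{target}$ is embedded), either that the sliver is logically impossible for an asymmetric $\mathcal{C}$ with $C_0$ false, or that it already satisfies one of the predicates after unfolding a definition (most plausibly Phase~VII via $C_1$, or Phase~II via its $\neg C_2$-free disjunct read correctly). Verifying that these unfoldings are airtight — i.e.\ that no asymmetric configuration with $C_0$ false slips through — is the delicate part; the rest is a mechanical truth-table check over the ten conditions.
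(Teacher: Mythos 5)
Your overall strategy --- writing out the seven phase predicates and doing an exhaustive truth-table/case analysis over $C_1,\dots,C_{10}$ --- is exactly what the paper does; its ``proof'' is literally a decision-tree figure (Fig.~\ref{fig:main_flow}) encoding the same branching you describe. Your exclusivity argument and most of your exhaustiveness tree are fine (granting, as the paper implicitly does, that $C_1$ entails $C_2$, i.e.\ that $\mathcal{C}'=\mathcal{C}'_{target}$ forces the head to sit on $h_{target}$).

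However, you leave one case genuinely open and your proposed escapes for it do not work. The residual sliver is $C_4\land C_5\land C_6\land\neg C_8\land C_2\land C_3$. Your first suggestion --- that such a configuration ``cannot occur in a valid run'' --- is not available here: the lemma quantifies over \emph{all} asymmetric initial configurations with $C_0$ false, not merely configurations reachable during execution, so you cannot discharge a case by appealing to the dynamics. Your second suggestion, reading Phase~II's disjunct more carefully, also fails: $(C_2\land\neg C_3)\lor\neg C_2$ is exactly $\neg(C_2\land C_3)$, which is false on this sliver no matter how you read it. The actual resolution is purely syntactic and you already have the tool in hand from your $C_8$-branch: Phase~VI's predicate $\neg C_1\land C_2\land C_3\land C_4\land C_5\land C_6$ does \emph{not} mention $C_8$, so it absorbs the $\neg C_1$ half of the sliver, while Phase~VII's predicate $\neg C_0\land C_1\land C_3$ (which mentions none of $C_4,\dots,C_8$) absorbs the $C_1$ half. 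With that one observation your case tree closes and the proof is complete; without it the exhaustiveness claim is not established.
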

\begin{proof}
From Fig~\ref{fig:main_flow} the proof follows.
    \begin{figure}[h!]
        \centering
        \includegraphics[width=1\linewidth]{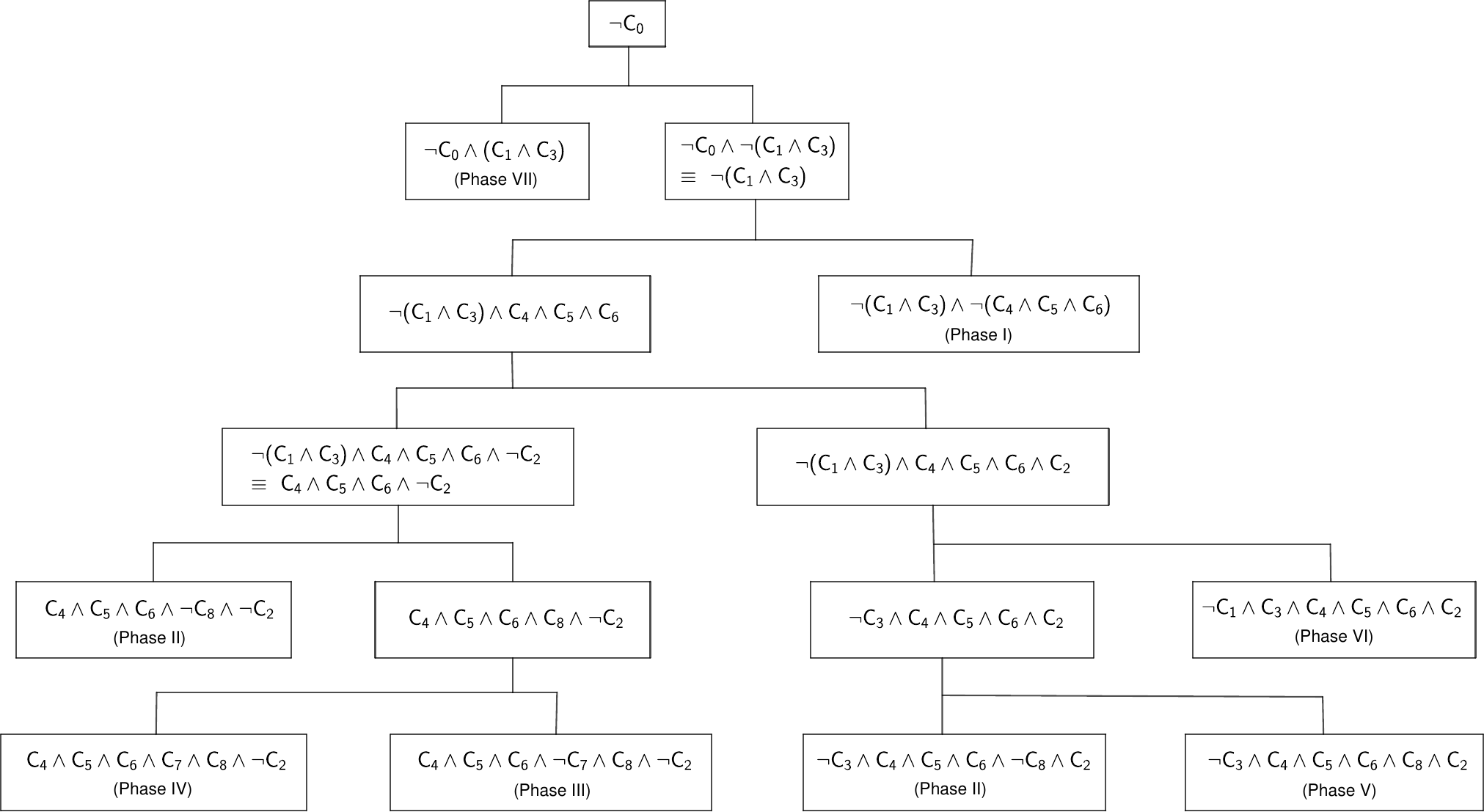}
        \caption{For any configuration with $C_0=$ false belongs to one of the seven phases}
        \label{fig:main_flow}
    \end{figure}
\end{proof}

\subsection{Proof of Theorem~\ref{main}}\label{mainproof}

\begin{restatable}
[Statement of Theorem~\ref{main}]{thm}{goldbach}
     The proposed algorithm can form any pattern consisting of $k$ points by a set of $k$ oblivious asynchronous robots if the initial configuration formed by the robots is an asymmetric configuration and has no multiplicity point. 
\end{restatable}
\begin{proof}
    Let $\mathcal{C}_i$ be an initial asymmetric configuration formed by $k$ robots with no multiplicity points. Let $\mathcal{C}_{target}$ be any target configuration consisting of $k$ target positions. According to the Lemma~\ref{phase0}, if $\mathcal{C}_i\ne \mathcal{C}_{target}$ then the algorithm starts from any of the seven phases. Suppose after a finite time the algorithm is in one of the seven phases, then we show that after finite time $C_0$ becomes true. If at some time the algorithm is in some specific phase, then next which phase the algorithm can enter. In Fig.~\ref{fig:flow}, a digraph is given that shows the phase transitions. This digraph can be created from the support of Theorem~\ref{ph1}, \ref{ph2}, \ref{ph3}, \ref{ph30}, \ref{ph4}, \ref{ph5}, \ref{ph6}, \ref{ph7}. The only cycle in the digraph is the cycle induced by phase~II and phase~III. From Theorem~\ref{ph3} and Theorem~\ref{ph30}, we can conclude it does not create any live-lock there. From the diagram, we can conclude that any path starting from any of the phases leads to phase~VII after finite time. From Theorem~\ref{ph7}, the phase~VII results $C_0=$ true within finite time.  
    \begin{figure}[h!]
        \centering
        \includegraphics[width=.6\linewidth]{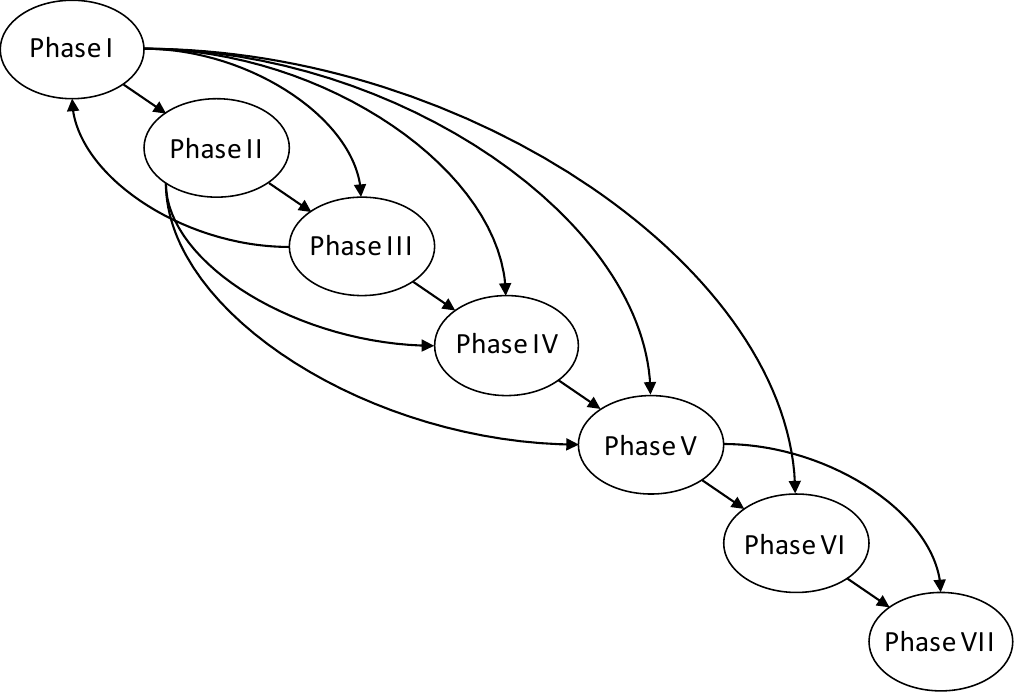}
        \caption{Phase transition digraph}
        \label{fig:flow}
    \end{figure}
\end{proof}

\subsection{Proof of Theorem~\ref{space}}\label{spaceproof}

\begin{restatable}
[Statement of Theorem~\ref{space}]{thm}{goldbach}
     Let $\mathcal{D}=\max\{m,n,m',n'\}$ where $m\times n$ $(m\ge n)$ is the dimension of the SER of the initial configuration and $m'\times n'$ $(m'\ge n')$ is the dimension of the SER of the target configuration. Then space complexity of the proposed algorithm is at most $\mathcal{D}+4$. More precisely, if $M=\max\{m,m'\}$ and $N=\max\{n,n'\}$, then the proposed algorithm takes the space enclosed by a rectangle of dimension $(M+4)\times (N+1)$. 
\end{restatable}

    \begin{proof}
    The tail robot is the only robot that moves out of the current SER. The head robot only moves in Phase~II and Phase~VI and the head robot either moves to the corner or the current SER moves towards $h_{target}$. In Phase~IV the inner robots move but they do not move outside the current SER throughout the function~\texttt{Rearrange}. In the rest of the Phases~I, III, V, VII, the tail robot moves only. The tail robot is responsible for expanding the size of the configurations. Consider the rectangle $\mathcal{R}$ of dimension $\mathcal{D}\times\mathcal{D}$ that contains the initial configuration and the embedded target pattern. Next, let us define the size of a still configuration $\mathcal{C}$ is the dimension of the smallest enclosing square that contains all the robots in $\mathcal{C}$.
    
    In Phase~I, the tail moves upward. If the initial configuration already satisfies $C_4\land C_6\land \neg C_5=$ true then in order to make $C_5$ true, the tail has to move one hop upwards, outside the $\mathcal{R}$. Thus, the size of the current SER becomes $\mathcal{D}$+1. If for the initial configuration, $C_4$ is true but $C_6$ is not true. Then by one movement upwards $C_6$ becomes true. If after this $C_5$ is not true, then one more movement upwards by tail makes $C_5$ true. So, finally, the size of the SER is $\mathcal{D}$+2. If $C_4$ is not true for the initial configuration, then the tail moves upward until it reaches the horizontal line $t_{target}$. After this, the tail moves one step upwards to make $C_4$ true, and $C_6$ becomes true with this move. If at this point $C_5$ is not true then it again moves upwards one hop. Thus, the size of the SER becomes $\mathcal{D}$+2. 

    Next, suppose the algorithm enters Phase~III from Phase~I. Suppose $C_{10}$ is false. If $C_9$ is not true then the tail moves upwards and $C_{9}$ becomes true. After that, only $C_5$ will become false and the algorithm enters into Phase~I again. After one move of the tail upwards in Phase~I, $C_5$ becomes true. The algorithm again enters in Phase~III with $C_{9}=$ true. At this point $m>n+1$, because when the first time comes from Phase~I, $C_6=$ true assures $m\ge n$ and while the second time entering Phase~I, the upward movement of tail assures $m>n+1$. Thus, more two upward movements by the tail take place making the size of the SER $\mathcal{D}+4$. Next if $C_{10}$ is true, then on being $C_9=$ false the tail moves upward. Again for the same reason as the last case finally the size of the SER becomes $\mathcal{D}+4$ when all the upward movements of the tail are done. So by upward movement the larger dimension of the SER containing the current configuration becomes $M+4$ at most. Now in order to make $C_7$ true the tail robot at most needs to step away one hop rightwards or leftwards form $\mathcal{R}$. So after making $C_7$ true the size of the SER remains at most $\mathcal{D}+4$ and the smaller dimension of the SER becomes $N+1$ at most. Now, one can easily verify from the above argument that if the initial configuration is such that the algorithm first enters Phase~III, then it terminates with making the size of the SER at most $\mathcal{D}+2$ and the SER remains inside a rectangle of dimension $(M+2)\times (N+1)$.

    In Phase~V the tail moves only on the horizontal line $H_t$ that it contains and moves towards the $T'$ on the $H_t$ such that $T'$ and $t_{target}$ are on the same vertical line. If at the beginning of this phase, the tail is inside the $\mathcal{R}$ then it does not step out of it. If the tail is outside the $\mathcal{R}$ then the horizontal movements of the tail do not increase the size of the current SER. Thus movements of the tail in this phase do not consume extra space.

    In Phase~VII, the tail moves towards the $t_{target}$ vertically. It is easy to see that the movements of this phase also do not consume any extra space.

    Thus, all the robots move inside a $(M+4)\times(N+1)$ dimensional rectangle. Therefore total space consumed by the proposed algorithm is at most $\mathcal{D}+4$. 
    
    \end{proof}

    \subsection{Proof of Theorem~\ref{th:move}}\label{moveproof}

\begin{restatable}
[Statement of Theorem~\ref{th:move}]{thm}{goldbach}
     The proposed algorithm requires each robot to make $O(\mathcal{D})$ movements, hence the move-complexity of the proposed algorithm is $O(k\mathcal{D})$.
\end{restatable}
    \begin{proof}

    First, consider the movement of the head robot. When the head robot moves towards the origin in Phase~II, from Theorem~\ref{ph2} it remains the head throughout the movement. It is easy to see that, for this case total number of moves is almost $\mathcal{D}/2$. In Phase~VI, when the head moves towards $h_{target}$, the head remains the head. In this phase at most $\mathcal{D}/2$ moves by the head robot is required. Thus total number of moves by the head robot is at most $\mathcal{D}$. 
    
    Next, consider the tail robot movements. In Phase~I, the tail robot needs to move upward at most $\mathcal{D}$ steps to make $C_4$ true. Further, to make $C_5$ and $C_6$ true, the tail needs to move at most two steps upward. In Phase~III, the tail needs to move horizontally at most $\mathcal{D}+1$ steps. In Phase~V, again the tail robot moves horizontally at most $\mathcal{D}+1$ steps. In Phase~VII, the tail moves vertically at most $\mathcal{D}+k_0$ steps, where $k_0$ is a constant less than $\mathcal{D}$. Thus, the tail needs to move $O(\mathcal{D})$ steps in total. 
    
    Next, consider the movements of the inner robots. Suppose $r_i$ is an inner robot located at point $R_i$ in the initial configuration. Let $t_i$ be the corresponding target position of $r_i$. If $t_i$ and $R_i$ are on the same horizontal line, then after that $r_i$ moves horizontally towards $t_i$ and after reaching $t_i$ never moves afterward. Otherwise, suppose $t_i$ is at the \texttt{left} of the $R_i$. Then $t_i$ is on a horizontal line downwards to that of $R_i$. Until the downwards node of $r_i$ is on the horizontal line that contains $t_i$, the downward node of the $r_i$ will always be at the \texttt{right} of $t_i$. So, $r_i$ moves downwards through the vertical line containing $R_i$. Suppose $r_i$ and $t_i$ are on a neighboring horizontal line. There are two cases: the downward node of $r_i$ at the \texttt{left} of the $t_i$ or at the \texttt{right} of the $t_i$. If the downward node of $r_i$ at the \texttt{left} of the $t_i$, then the $r_i$ moves downwards, and $r_i$ and $t_i$ are on the same horizontal line. After that $r_i$ moves towards $t_i$ through the horizontal line that contains both. Thus, it is easy to see that the path traveled by the $r_i$ to reach $t_i$ has a minimum length which is at most $2\mathcal{D}$. Next, suppose the downward node of $r_i$ at the \texttt{right} of the $t_i$. Then the $r_i$ moves to its \texttt{left}. It keeps moving towards its \texttt{left} until $r_i$ and $t_i$ are on the same vertical line. Then after one movement downwards $r_i$ reaches $t_i$. In this case, the path traveled by the $r_i$ to reach $t_i$ has a minimum length which is at most $2\mathcal{D}$. Thus, in either case, the $r_i$ moves at most $2\mathcal{D}$ steps.

    If $t_i$ is at the \texttt{right} of the $R_i$, then one can similarly show that the path traversed by $r_i$ from $R_i$ to $t_i$ has length at most $2\mathcal{D}$. Thus each robot makes $O(\mathcal{D})$ moves throughout the execution of the algorithm. Hence the result follows.
    \end{proof}

\end{document}